\def\maketitle{
\@author@finish
\title@column\titleblock@produce
\suppressfloats[t]}
\newcommand{\Rop}{\hat{R}}
\begin{document}

\title{Above 99.9\,\% Fidelity Single-Qubit Gates, Two-Qubit Gates, and Readout in a Single Superconducting Quantum Device}
\author{Fabian Marxer}
\altaffiliation{These two authors contributed equally. fabian@meetiqm.com; jakub@meetiqm.com}

\author{Jakub Mro\.{z}ek}
\altaffiliation{These two authors contributed equally. fabian@meetiqm.com; jakub@meetiqm.com}

\author{Joona Andersson}

\author{Leonid Abdurakhimov}
\author{Janos Adam}
\author{Ville Bergholm}
\author{Rohit Beriwal}
\author{Chun Fai Chan}
\author{Saga Dahl}
\affiliation{IQM Quantum Computers, Espoo 02150, Finland}
\author{Soumya Ranjan Das}
\author{Frank Deppe}
\affiliation{IQM Quantum Computers, Munich 80992, Germany}
\author{Olexiy Fedorets}
\author{Zheming Gao}
\author{Alejandro Gomez Frieiro}
\affiliation{IQM Quantum Computers, Espoo 02150, Finland}
\author{Daria Gusenkova}
\affiliation{IQM Quantum Computers, Munich 80992, Germany}
\author{Andrew Guthrie}
\author{Tuukka Hiltunen}
\affiliation{IQM Quantum Computers, Espoo 02150, Finland}
\author{Hao Hsu}
\affiliation{IQM Quantum Computers, Munich 80992, Germany}
\author{Eric Hyypp\"a}
\author{Joni Ikonen}
\author{Sinan Inel}
\author{Shan~W.~Jolin}
\author{Azad Karis}
\author{Seung-Goo Kim}
\affiliation{IQM Quantum Computers, Espoo 02150, Finland}
\author{William Kindel}
\affiliation{IQM Quantum Computers, Munich 80992, Germany}
\author{Anton Komlev}
\author{Miikka Koistinen}
\author{Roope Kokkoniemi}
\author{Snigdha Kumar}
\affiliation{IQM Quantum Computers, Espoo 02150, Finland}
\author{Hsiang-Sheng Ku}
\author{Julia Lamprich}
\affiliation{IQM Quantum Computers, Munich 80992, Germany}
\author{Sami Laine}
\author{Alessandro Landra}
\author{Lan-Hsuan Lee}
\affiliation{IQM Quantum Computers, Espoo 02150, Finland}
\author{Nizar Lethif}
\affiliation{IQM Quantum Computers, Munich 80992, Germany}
\author{Per Liebermann}
\author{Wei Liu}
\author{Kunal Mitra}
\author{Tuomas Myll\"ari}
\author{Caspar Ockeloen-Korppi}
\author{Tuure Orell}
\author{Alexander Plyshch}
\author{Jukka R\"abin\"a}
\author{Arthur Rebello}
\affiliation{IQM Quantum Computers, Espoo 02150, Finland}
\author{Michael Renger}
\affiliation{IQM Quantum Computers, Munich 80992, Germany}
\author{Outi Reentil\"a}
\author{Jussi Ritvas}
\author{Sampo Saarinen}
\author{Otto Salmenkivi}
\author{Matthew Sarsby}
\author{Mykhailo Savytskyi}
\author{Ville Selinmaa}
\author{Matthew Steggles}
\author{Eelis Takala}
\author{Ivan Takmakov}
\author{Brian Tarasinski}
\author{Jani Tuorila}
\author{Alpo V\"alimaa}
\affiliation{IQM Quantum Computers, Espoo 02150, Finland}
\author{Jeroen Verjauw}
\affiliation{IQM Quantum Computers, Munich 80992, Germany}
\author{Jaap Wesdorp}
\affiliation{IQM Quantum Computers, Espoo 02150, Finland}
\author{Nicola Wurz}
\affiliation{IQM Quantum Computers, Munich 80992, Germany}
\author{Wei Qiu}
\author{Lihuang Zhu}
\affiliation{IQM Quantum Computers, Espoo 02150, Finland}

\author{Juha Hassel}
\affiliation{IQM Quantum Computers, Espoo 02150, Finland}
\author{Johannes Heinsoo}
\affiliation{IQM Quantum Computers, Espoo 02150, Finland}

\author{Attila Geresdi}
\affiliation{IQM Quantum Computers, Munich 80992, Germany}

\author{Antti Veps\"al\"ainen}
\email{avepsalainen@meetiqm.com}
\affiliation{IQM Quantum Computers, Espoo 02150, Finland}

\begin{abstract}
    Achieving high-fidelity single-qubit gates, two-qubit gates, and qubit readout is critical for building scalable, error-corrected quantum computers. However, device parameters that enhance one operation often degrade the others, making simultaneous optimization challenging. Here, we demonstrate that careful tuning of qubit–coupler coupling strengths in a superconducting circuit with two transmon qubits coupled via a tunable coupler enables high-fidelity single- and two-qubit gates, without compromising readout performance. As a result, we achieve a \SI{40}{h}-averaged CZ gate fidelity of \SI{99.93}{\percent}, simultaneous single-qubit gate fidelities of \SI{99.98}{\percent}, and readout fidelities over \SI{99.94}{\percent} in a single device. These results are enabled by optimized coupling parameters, an efficient CZ gate calibration experiment based on our new Phased-Averaged Leakage Error Amplification (PALEA) protocol, and a readout configuration compatible with high coherence qubits. Our results demonstrate a viable path toward scaling up superconducting quantum processors while maintaining consistently high fidelities across all core operations.
\end{abstract}
\date{\today}
\maketitle

\section{Introduction}

Scalable, fault-tolerant quantum computing requires low error rates across all core operations: single-qubit gates, two-qubit gates, and qubit readout. To enable resource-efficient implementation of logical operations, these error rates must fall well below the thresholds set by quantum error correction protocols~\cite{fowler2012surface, barends2014superconducting, googlequantumai2025quantum}. Among various quantum hardware platforms, superconducting qubits -- particularly transmons -- have emerged as a scalable and practical platform for quantum computing, owing to their favorable coherence properties, flexible control, and mature fabrication techniques~\cite{koch2007charge, arute2019quantum, kjaergaard2020superconducting}.

In such systems, single-qubit gate fidelities now routinely exceed \SI{99.9}{\percent}~\cite{googlequantumai2025quantum, kim2023evidence, li2023error, hyyppa2024reducing, rower2024suppressing}, and comparable fidelities have been achieved for two-qubit gates~\cite{ding2023high, li2024realization, lin2025days, zhang2024tunable}. This progress has been driven by recent advances in tunable coupler designs, such as grounded transmon couplers \cite{li2020tunable, xu2020high, collodo2020implementation, sung2021realization, ye2021realization, ding2023high}, floating transmon couplers \cite{sete2021floating, stehlik2021tunable, marxer2023long}, inductively mediated coupling schemes \cite{zhang2024tunable, lin2025days}, and double-transmon couplers \cite{li2024realization}, all of which enable dynamically controlled qubit–qubit interactions while suppressing residual coupling~\cite{yan2018tunable}.

At the same time, high-fidelity readout \cite{jeffrey2014fast, Walter2017rapid, chen2023high, swiadek2024enhancing, spring2025fast, kurilovich2025high} has become increasingly critical, not only for final measurements but also for mid-circuit measurements required in quantum error correction~\cite{googlequantumai2023suppressing, googlequantumai2025quantum, zhao2022realization} and feedback-based protocols~\cite{riste2012initialization, riste2012feedback, campagne2013persistent}. Advances such as shelving techniques~\cite{elder2020high, chen2023high, mallet2009single, wang2025longitudinal} and optimized readout chains incorporating traveling wave parametric amplifiers (TWPAs~\cite{yurke1989observation, yurke1996lownoise} have played a key role in pushing performance forward. Leveraging these developments, recent experiments have reported readout fidelities exceeding \SI{99.9}{\percent}~\cite{spring2025fast, kurilovich2025high}.

Despite these advances, simultaneously achieving high-fidelity single-qubit gates, two-qubit gates, and readout in a single device remains a major challenge. Device parameters optimized for one operation often degrade another. A key trade-off lies in the qubit–coupler coupling strength: strong coupling facilitates fast entangling gates, but also increases qubit-qubit hybridisation, leading to simultaneous single-qubit gate errors. Similarly, incorporating high-fidelity readout without degrading qubit coherence or gate performance requires careful architectural integration.

In this work, we demonstrate a holistic optimization strategy that addresses the trade-offs between single-qubit gates and two-qubit gates, while maintaining high-fidelity readout. Our device consists of two transmon qubits coupled via a tunable coupler. To guide the design, we simulate both single-qubit errors -- dominated by hybridisation-induced crosstalk and incoherent relaxation -- and two-qubit errors arising from leakage, incoherent processes, and residual swap interactions. By sweeping the qubit–coupler coupling strength, we identify a parameter regime that minimizes the combined gate error, which we then target in the device fabrication.
Furthermore, to enable precise calibration of the conditional-$Z$ (CZ) gate and to suppress coherent over- and under-rotation in the $\mathrm{span}( \{\ket{11}, \ket{02} \})$ subspace, we introduce a novel gate error characterization method, which we call the Phase-Averaged Leakage Error Amplification (PALEA) protocol. This method is specifically designed to coherently amplify CZ gate errors, especially their associated population leakage to the second excited state of the higher frequency qubit, while remaining robust against other leakage processes. Using PALEA to improve leakage calibration accuracy, we demonstrate that, using the same number of repetitions in the experiment, the leakage can be systematically reduced by at least a factor of two compared to standard leakage amplification methods~\cite{sung2021realization}, significantly enhancing compatibility with quantum error correction codes. Together with the qubit-coupler coupling optimization, we achieve a CZ gate fidelity of \SI{99.93}{\percent} and single-qubit gate fidelities above \SI{99.98}{\percent}.

Simultaneously, our readout architecture is designed to enable high-fidelity readout within a few hundred nanoseconds. Each qubit is coupled to its own readout resonator, which in turn is connected to an individual Purcell filter, not only to protect the qubit from Purcell decay, but also to suppress the off-resonant driving of untargeted readout resonators \cite{heinsoo2018rapid, ronkko2024onpremise}. The design parameters are chosen to achieve an optimal ratio of dispersive shift $\chi$ to resonator linewidth $\kappa$ (targeting $\chi/\kappa \approx 0.5$), enabling fast readout while maintaining high signal contrast \cite{Walter2017rapid}. Using a \SI{240}{\nano\second} readout pulse combined with shelving to the second excited state and a TWPA, we achieve simultaneous readout fidelities above \SI{99.94}{\percent} for both qubits in our setup~\cite{elder2020high, chen2023high, mallet2009single, wang2025longitudinal}. Using a pulse of the same duration but without the shelving gate, we additionally demonstrate a measurement with QNDness of \SI{99.3}{\percent} for both qubits, where QNDness quantifies how well the measurement preserves the qubit state across repeated measurements~\cite{lupascu2007quantum, hazra2025benchmarking, kurilovich2025high}.

Together, these results establish that high-fidelities across all core quantum operations are achievable. Notably, our approach is readily scalable to larger quantum processing units (QPUs), including the square-grid architecture necessary for surface code implementations \cite{arute2019quantum, krinner2022realizing, abdurakhimov2024technology}. Specifically, the qubit–coupler architecture demonstrated here -- where the qubits are coupled to a single tunable coupler -- can be naturally extended to a 2D grid by coupling each qubit to four independent couplers without significantly altering the coupling strengths or the anharmonicities of the qubits and couplers. This ensures that gate fidelity can in principle be preserved even as the system scales. Moreover, the chip layout offers sufficient space to integrate a dedicated Purcell filter for each readout resonator on the same layer as the qubits~\cite{ronkko2024onpremise, marxer2023long, abdurakhimov2024technology}, enabling scalable, high-fidelity readout without requiring multilayer fabrication or complex routing.

\section{Optimal Qubit-coupler coupling strength}

\begin{figure}[tb]
    \centering
    \includegraphics[width=1\columnwidth]{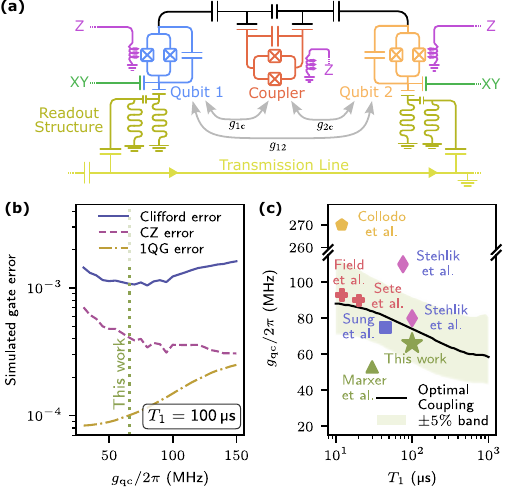}
    \caption{
        (a) Quasi-lumped-element circuit diagram of the device: two transmon qubits (blue and orange) are coupled via a central floating coupler qubit (red). Each qubit is equipped with its own $XY$ drive line (green), $Z$ flux line (violet), and a readout structure consisting of a readout resonator coupled to a Purcell filter (lime green). The readout structure is capacitively coupled to a common transmission line (yellow). The coupler qubit is flux-controlled but not driven with microwave pulses. Gray arrows indicate the effective coupling strengths between $\mathrm{Q}_1$ and the coupler ($g_\mathrm{1c}$), $\mathrm{Q}_2$ and the coupler ($g_\mathrm{2c}$), and the direct coupling between the two qubits ($g_\mathrm{12}$).
        (b) Simulated single-qubit gate, CZ gate, and two-qubit Clifford gate errors as a function of the qubit-coupler coupling strength $g_\mathrm{qc}/2\pi$, assuming an identical qubit-coupler coupling for both qubits, $g_\mathrm{qc} \coloneqq g_{1c} = g_{2c}$, and a fixed qubit relaxation time of $T_1 = \SI{100}{\micro\second}$. The vertical dotted line marks the coupling strength used in this work ($g_\mathrm{qc}/2\pi = \SI{66}{\mega\hertz}$).
        (c) Qubit-coupler coupling strengths, $g_\mathrm{qc}$, versus qubit relaxation times reported in selected previous works \cite{collodo2020implementation, sung2021realization, field2024modular, sete2021floating, stehlik2021tunable, marxer2023long}. The black line represents the simulated optimal coupling that minimizes the two-qubit Clifford error, and the shaded green region denotes the $\pm \SI{5}{\percent}$ band around this optimum.
    }
    \label{fig:fig_1}
\end{figure}

Our device consists of two flux-tunable floating transmon qubits coupled via a flux-tunable floating transmon coupler [Fig.~\ref{fig:fig_1}(a)], see Appendix~\ref{sec:appendix:experimental_setup} for the full experimental setup. The coupling structure is similar to the architecture introduced in Ref.~\cite{marxer2023long}. In this qubit--coupler--qubit topology, the couplings between Qubit~1 ($\mathrm{Q_1}$) and the coupler, $g_\mathrm{1c}$, Qubit~2 ($\mathrm{Q_2}$) and the coupler, $g_\mathrm{2c}$, as well as the direct qubit-qubit interaction $g_\mathrm{12}$ between $\mathrm{Q_1}$ and $\mathrm{Q_2}$ play a crucial role in the performance of both single- and two-qubit gates. For simplicity, we define a symmetric coupling \( g_\mathrm{qc} \coloneqq g_\mathrm{1c} = g_\mathrm{2c} \) for the numerical analysis.

While tunable couplers are usually employed to cancel the static $ZZ$ interaction between the computational states, the bare qubit states typically remain hybridised for most qubit–qubit detunings in this type of circuit~\cite{yan2018tunable, sung2021realization, valles2025optimizing}, see Appendix~\ref{subsec:appendix:zz_and_g}. This residual hybridisation gives rise to hybridisation crosstalk -- a coherent crosstalk mechanism in which driving one qubit unintentionally excites or induces phase shifts in other unwanted eigenstates, even in the absence of direct coupling between the driveline and those qubits. Consequently, minimizing the transverse coupling strength -- governed by $g_\mathrm{qc}$ and $g_{12}$ -- helps reduce such unwanted crosstalk, improving simultaneous single-qubit gate fidelities.

However, two-qubit gates benefit from a stronger $g_\mathrm{qc}$. First, higher couplings allow faster CZ gate operation, reducing incoherent error from relaxation and dephasing. Second, with stronger coupling, the same effective qubit-qubit interaction strength can be achieved with a smaller frequency excursion of the coupler. This arrangement reduces the hybridisation of the computational states with the coupler mode during the gate, thereby mitigating the impact of the coupler’s typically lower coherence on the gate fidelity~\cite{collodo2020implementation, sung2021realization, field2024modular}.

To explore this trade-off quantitatively, we simulate gate errors as a function of qubit-coupler coupling strength $g_\mathrm{qc}$, see Fig.~\ref{fig:fig_1}(b). The simulations assume qubit relaxation times of $T_1^\mathrm{Q1} = T_1^\mathrm{Q2} = \SI{100}{\micro\second}$, together with qubit and coupler frequencies and anharmonicities being similar to those of our fabricated device (see Appendix~\ref{sec:appendix:experimental_setup}). In the model, both qubits are assumed to have identical $T_1$ and pure dephasing times $T_\phi$. The coupler is assigned a relaxation time that is four times shorter than that of the qubits, reflecting its higher energy participation ratio (EPR) in lossy interfaces~\cite{minev2021energy}, which -- based on our present design -- is estimated to be four times larger. All components are modeled as coupled three-level Duffing oscillators \cite{koch2007charge}. Additionally, to isolate the impact of $g_\mathrm{qc}$, the direct qubit–qubit coupling $g_{12}$ is adjusted for each value of $g_\mathrm{qc}$ such that the coupler idling frequency (where $\zeta = 0$) remains fixed throughout the sweep.

For single-qubit gates, we use master equation simulations to evaluate simultaneous $\sqrt{X}$ operations on both qubits to capture both incoherent and coherent (hybridisation-induced) errors. We also simulate the CZ gate to extract leakage and incoherence contributions, see Appendix~\ref{sec:appendix:qubit-coupler_coupling} for further details. The resulting trends show that single-qubit errors increase with $g_\mathrm{qc}$, while CZ gate errors decrease. To evaluate overall gate performance, we calculate the estimated two-qubit Clifford gate error by weighting the single-qubit gate and two-qubit gate errors by their average prevalence in a random Clifford gate: 4.65 $\sqrt{X}$ gates and 1.5 CZ gates per Clifford~\cite{barends2014superconducting, mckay2017efficient}. These weighted error contributions yield the Clifford error curve shown in Fig.~\ref{fig:fig_1}(b).

From this analysis, we identify an optimal coupling strength near $g_\mathrm{qc}/(2\pi) = \SI{75}{\mega\hertz}$ that minimizes the total Clifford error. The fabricated device yields $g_\mathrm{1c}/(2\pi) = \SI{69}{\mega\hertz}$ and $g_\mathrm{2c}/(2\pi) = \SI{63}{\mega\hertz}$, corresponding to a geometric average $g_\mathrm{qc}/(2\pi) = \SI{66}{\mega\hertz}$, in close agreement with the design target.

To investigate the role of coherence, we repeat the simulation with $T_1$ values ranging from \SI{10}{\micro\second} to \SI{1}{\milli\second}. As shown in Fig.~\ref{fig:fig_1}(c), we observe that the optimal $g_\mathrm{qc}$ decreases with increasing $T_1$ due to the hybridisation errors of the single-qubit gates becoming the dominating error mechanism, see Appendix~\ref{sec:appendix:qubit-coupler_coupling} for further details.

We also compare our results to selected prior works that employ similar transmon-based qubit--coupler--qubit topology and report both $T_1$ and $g_\mathrm{qc}$~\cite{field2024modular, sete2021floating, sung2021realization, stehlik2021tunable, marxer2023long, collodo2020implementation}. Most devices operate within a $\pm \SI{5}{\percent}$ range of the simulated optimal coupling. Notably, the device in Ref.~\cite{collodo2020implementation} stands out as an exception, employing a significantly higher $g_\mathrm{qc}$ (by a factor of 3 to 4), possibly due to focusing solely on the CZ gate fidelity. We emphasize, however, that our simulation assumptions do not exactly match those of the compared works, as the qubit and coupler frequencies, anharmonicities, and coherence times differ from our model. Hence, this comparison should be regarded as qualitative.

Our findings provide guidance for future designs: as qubit coherence times increase beyond \SI{100}{\micro\second} \cite{tuokkola2025methods, bland20252d}, further reducing $g_\mathrm{qc}$ becomes beneficial. This not only lowers single-qubit and residual errors but also helps mitigate long-range hybridisation effects such as next-nearest-neighbor crosstalk -- an increasingly important factor in larger quantum processing units (QPUs) \cite{marxer2023long}.

\section{CZ gate calibration}
\label{sec:error_amplification_experiments}
\begin{figure}[tb]
    \centering
    \includegraphics[width=1\columnwidth]{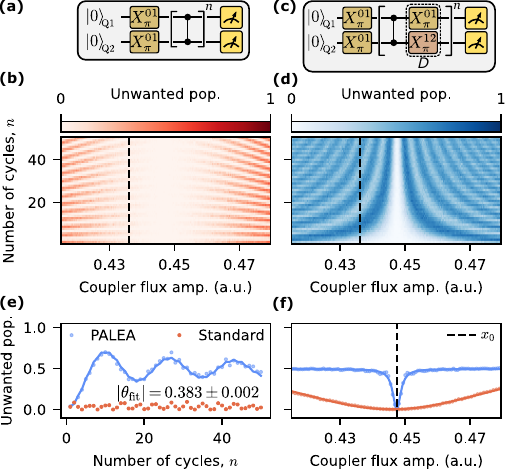}
    \caption{
        (a) Circuit diagram of the standard error amplification experiment. An $X_\pi^{01}$ pulse on each qubit excites the system to the $\ket{11}$ state, followed by repeated applications of the CZ gate.
        (b) Measured unwanted population in the $\ket{02}$ state as a function of the number of cycles and the coupler flux pulse amplitude.
        (c) Circuit diagram of the Phase-Averaged Leakage Error Amplification (PALEA) experiment. In each cycle, a CZ gate is followed by a dynamical decoupling layer $D$, consisting of an $X_\pi^{01}$ gate on the lower frequency qubit ($\mathrm{Q}_1$) and an $X_\pi^{12}$ gate on the higher frequency qubit ($\mathrm{Q}_2$).
        (d) Measured unwanted population as a function of the number of cycles and the coupler flux pulse amplitude. Here, population in the $\ket{02}$ ($\ket{11}$) state is considered unwanted (wanted) for even $n$, and population in $\ket{11}$ ($\ket{02}$) state is unwanted (wanted) for odd $n$.
        (e) Vertical slices from panels (b) and (d) at the flux amplitudes indicated by dashed lines. The solid curve shows a fit to the PALEA model [Eq.~\eqref{eq:I_n_npopex_main}], yielding an over-rotation angle $\left|\theta_\mathrm{fit}\right| = 0.383 \pm 0.002$.
        In contrast, the standard error amplification experiment signal shows low-contrast coherent oscillations at a frequency $2\mu$ such that $\theta < 2 \mu < \pi$.
        (f) Unwanted population, averaged over the number of cycles, as a function of the amplitude of the flux pulse. Solid lines represent Lorentzian fits used to extract the center offset $x_0$, indicated by the vertical dashed line. 
    }
    \label{fig:fig_2}
\end{figure}

In this work, the CZ gate is implemented using simultaneous flux pulses on both the coupler and $\mathrm{Q_1}$, with virtual $Z$ rotations applied to both qubits at the end of the gate to correct for dynamical phases. The flux pulse applied to the qubit consists of a constant plateau between cosine-shaped rising and falling edges to bring the $\ket{11}$ and $\ket{02}$ states rapidly close to resonance. The coupler flux pulse follows a Slepian-shaped envelope to minimize population leakage from $\mathrm{Q_1}$ to the coupler during the gate~\cite{martinis2014fast}. To account for distortions in the flux lines, we pre-compensate the pulses using exponential filters~\cite{rol2020time}.

Assuming a fixed gate duration, the calibration of this CZ gate implementation involves the minimization of three main imperfections: single-qubit $Z$ rotations, which we offset using virtual $Z$ gates~\cite{mckay2017efficient}; the $ZZ$ over-rotation angle, which is most sensitive to the frequency of $\mathrm{Q_1}$ during the gate \cite{marxer2023long}; and the population leakage to $\ket{02}$, sensitive to the frequency of the coupler during the gate. As design and coherence time improvements enable higher fidelity gates, the calibration techniques for all the necessary gate parameters need to improve in both precision and speed to keep the gate fidelity limited by incoherent errors. This can be achieved through methods based on phase estimation, employing error amplification with $n$ gates to approach the Heisenberg limit of error $\mathcal{O}(1/n)$, such as robust phase estimation~\cite{KimmelRPE2015, RudingerRPE2017}, Floquet calibration~\cite{arute2020observationseparateddynamicscharge}, or Matrix-Element Amplification using Dynamical Decoupling (MEADD)~\cite{Gross2024}.

Here, we use the MEADD protocol to minimize the error in two of the three CZ gate parameters: the $ZZ$ rotation angle, and the single-qubit virtual $Z$ rotations. Precise optimization of the third imperfection -- leakage to the $\ket{02}$, caused by the required diabatic transition between the $\ket{11}$ and $\ket{02}$ states -- is more challenging. Most existing error amplification protocols which could be used to minimize the $\ket{02}$ state leakage through extending them to the two-excitation subspace are either prone to interference by other error sources, slow, or incompatible with our hardware, see discussion in Appendix~\ref{sec:appendix:popex_dynamics}. To address this, we introduce a modified version of the MEADD protocol, called Phase-Averaged Leakage Error Amplification (PALEA), that employs higher excited state dynamics and phase averaging to isolate the leakage parameter, eliminating the need to control or calibrate the qubit drive phase at each repetition and thereby simplifying and accelerating the implementation in our architecture.

\subsection{Leakage Error Amplification}
The diabatic CZ gate between transmon qubits with tunable couplers is frequently described in terms of tuning the states $\ket{11}$ and $\ket{02}$ to resonance, and then activating their coupling for a fixed time~\cite{abad2025impactofdecoherence}. In practice, however, these two levels are not exactly on resonance, and thus this model is often insufficient. Here, we model an imperfect CZ gate as an arbitrary two-level unitary $\hat{U}_g \in \mathrm{SU}(2)$ acting within the $\mathrm{span}( \{\ket{11}, \ket{02} \})$ subspace in the laboratory reference frame (see discussion in Appendix~\ref{subsec:appendix:lab_frame}). Ignoring the common phase acquired by both the $\ket{11}$ and $\ket{02}$ states, this unitary can be decomposed as $\Rop_z(\alpha)\Rop_x(\theta)\Rop_z(\beta)$, where $\Rop_j (\phi) = \exp(-i \frac{\phi}2 \hat{\sigma}_j)$ denotes a rotation by angle $\phi$ around axis $j$, and ${\sigma}_j$ is the corresponding Pauli operator in the $\{\ket{11}, \ket{02}\}$ basis~\cite{Nielsen_Chuang_2010}. The resulting action on the $\ket{11}$ and $\ket{02}$ states is thus given by:

\begin{equation}\label{eq:U_g_definition}
\begin{aligned}
\hat{U}_g\ket{11} = e^{i \frac{\alpha + \beta}2} \cos (\theta /2) \ket{11} + ie^{i \frac{\alpha - \beta}2} \sin (\theta /2) \ket{02} & \\
\hat{U}_g\ket{02} = ie^{-i \frac{\alpha - \beta}2} \sin (\theta /2) \ket{11} - e^{-i \frac{\alpha + \beta}2} \cos (\theta /2) \ket{02}
\end{aligned}
\end{equation}

The angle of interest, $\theta$, governs the population transfer between the $\ket{11}$ and $\ket{02}$ states and is primarily controlled by the coupler frequency -- and thus by the coupler flux pulse amplitude. Any deviation from $\theta=0$ indicates an error. Hence, the aim of a leakage amplification experiment is to precisely measure the over-rotation angle $\theta$. In contrast, the angles $\alpha$ and $\beta$, describing the difference of phases acquired by states $\ket{11}$ and $\ket{02}$,  are irrelevant, as the $\ket{02}$ state lies outside the computational basis. 

The basic leakage amplification experiment~\cite{sung2021realization} repeatedly applies CZ gate candidates to an initial $\ket{11}$ state, see Fig.~\ref{fig:fig_2}(a). As shown in Fig.~\ref{fig:fig_2}(b), sweeping both the number of CZ gates and the coupler flux pulse amplitude while measuring the population of the $\ket{02}$ state (noted as unwanted population in Fig.~\ref{fig:fig_2}(b)) reveals low contrast oscillations when the coupler flux pulse amplitude significantly deviates from the optimal value of \SI{0.447}{}. These oscillation frequencies cannot easily be used to characterize the leakage, as they depend on the phases of the eigenvalues of $\hat{U}_g$, not just $\theta$, see discussion in Appendix~\ref{sec:appendix:popex_dynamics}. In addition, this experiment is susceptible to an accidental resonance with a leakage mode, for example to the coupler, appearing as vertical lines of elevated unwanted population~\cite{sung2021realization}. Using Slepian-shaped coupler flux pulses reduces the leakage to the coupler and thus the presence of the associated spurious lines~\cite{martinis2014fast, sung2021realization}, though residual effects may still interfere with the calibration of the leakage to $\ket{02}$, see Appendix \ref{sec:appendix:out_of_subspace_dynamics}.
\subsection{PALEA protocol}
To avoid the aforementioned issues in amplifying the over-rotation angle $\theta$, we introduce the PALEA protocol. A dynamical decoupling (DD) layer $D$ is added after each CZ gate, consisting of an $X_\pi^{12}(\phi_1)$ gate with phase $\phi_1$, acting on the $\mathrm{span}( \{\ket{1}, \ket{2} \})$ subspace of the higher frequency transmon and a phased-$X_\pi^{01}(\phi_0)$ gate with phase $\phi_0$ on the lower frequency transmon, see Fig.~\ref{fig:fig_2}(c). Hence, $D$ acts on the $\mathrm{span}( \{\ket{11}, \ket{02} \})$ subspace as a $\pi$ rotation: 
\begin{equation}\label{eq:palea_dd_definition}
    D = \Rop_z(\phi_0 - \phi_1) \Rop_x(\pi)\Rop_z(\phi_1 - \phi_0).
\end{equation}
This is essentially an adaptation of the SWAP error protocol from Ref.~\cite{Gross2024} (Eqs. 55-60), adjusting the DD layer to act on the $\mathrm{span}( \{\ket{11}, \ket{02} \})$ subspace instead of the $\mathrm{span}( \{\ket{10}, \ket{01} \})$ subspace. Importantly, to correctly implement the $D$ operation in the laboratory frame, special care needs to be taken to ensure that the phases of the subsequent $D$ layers within the same gate sequence are properly accounted for, see Appendix~\ref{subsec:appendix:palea_phase_tracking}.

After $n$ cycles of CZ gates followed by the DD layers the probability of measuring $\ket{11}$ is:
\begin{equation}\label{eq:npopex_expectation}
    V_n(\theta, \Delta \phi) = \left|\braket{11 | \left[\Rop_z(\Delta \phi) \Rop_x(\pi - \theta )\right]^n | 11}  \right|^2,
\end{equation}
where $\Delta \phi = 2\phi_0 - 2\phi_1  - \alpha + \beta$ depends both on the difference of the phases of the applied $X_\pi^{01}$ and $X_\pi^{12}$ gates and the intrinsic $Z$ rotations induced by each CZ gate between the $\ket{11}$ and $\ket{02}$ states, see discussion in Appendix~\ref{sec:appendix:npopex_model}. Due to the $D$ layer acting as a $\pi$ rotation between the states $\ket{11}$ and $\ket{02}$, the cycle unitary $\Rop_z(\Delta \phi) \Rop_x(\pi - \theta )$ is now dominated by the $X$ rotation rather than the $Z$ rotation for small $\theta$ values, greatly increasing visibility in the final $Z$ measurement. In contrast to the MEADD protocol, which evaluates the dynamics for only two specific values of $\Delta \phi$, PALEA averages over a wide range of random angles. This averaging is introduced either passively -- through the inability to reset the phase difference between the $X_\pi^{01}$ and $X_\pi^{12}$ gates between repetitions, as is common for systems using CZ-based gate sets -- or it can be implemented actively by manually varying the phase difference in each repetition. When sampling a broad range of $\Delta \phi$ values the expected value of the measured $\ket{11}$ state population is given by
\begin{equation}\label{eq:I_n_def_V_n}
I_n(\pi - \theta) = \frac{1}{2\pi} \int_{0}^{2\pi} \mathrm{d} \phi \ V_n(\theta, \phi).
\end{equation}

The purpose of the averaging is to decouple the unimportant angles $\alpha$ and $\beta$ from the target over-rotation angle $\theta$ and perform only one averaged measurement.

The integral in Eq.~\eqref{eq:I_n_def_V_n} can be solved analytically (see Appendix~\ref{subsec:appendix:npopex_echo_solving}), yielding a closed-form expression for the signal in the PALEA experiment:
{\small
\begin{equation}\label{eq:I_n_npopex_main}
\begin{split}
I_n(\pi -\theta) = 1 - \cos^2(\theta/2) \sum_{m = 0}^{n - 1} (-1)^m P_m(\cos \theta),
\end{split}
\end{equation}
}
where $P_m$ is the $m$-th Legendre polynomial. The model predicts a high-contrast signal without requiring additional calibration -- adaptive or otherwise -- of any other parameter, allowing for an accurate estimation of $\theta$ with fewer measurement repetitions. Eq.~\eqref{eq:I_n_npopex_main} can be quickly evaluated using the Clenshaw recursion algorithm for Legendre series, see Appendix~\ref{subsec:appendix:npopex_echo_understanding}.

\subsection{Calibration using PALEA}
By fitting the model to experimental data, we can accurately fit the over-rotation angle $\theta$, and hence, the leakage to the $\ket{02}$ state. For example, the signal $I_n$ at a coupler flux pulse amplitude of \SI{0.436}{} [Fig.~\ref{fig:fig_2}(d)] can be fitted with the model in Eq.~\eqref{eq:I_n_npopex_main} to extract $\left|\theta_\mathrm{fit}\right| = 0.383 \pm 0.002$, corresponding to a population transfer to the $\ket{02}$ state of $\sin^2 (\theta_\mathrm{fit}/2) \approx \SI{3.6}{\percent}$ per CZ gate,  see Fig.~\ref{fig:fig_2}(e). Conversely, the standard error amplification experiment yields a low-contrast signal for the coupler flux pulse amplitude of \SI{0.436}{} [Fig.~\ref{fig:fig_2}(b)], oscillating at a faster frequency $2\mu = 2\cos^{-1}\left(\cos (\theta/2) \cos [(\alpha + \beta)/2] \right)$ (Appendix~\ref{sec:appendix:popex_dynamics}).

As shown in Fig.~\ref{fig:fig_2}(f), averaging the PALEA signal over the number of cycles yields a sharp dip which can be accurately fitted with a simple Lorentzian model to determine the optimal coupler flux pulse amplitude $x_0$. This enables fast and efficient calibration of a high fidelity CZ gate. In contrast, averaging the signal of the standard error amplification experiment produces a broader dip, reducing the precision of identifying the optimal coupler flux pulse amplitude and potentially increasing the leakage to the $\ket{02}$ state -- unless a significantly higher number of repetitions is used, see Appendix~\ref{subsec:appendix:npopex_echo_understanding}.

The decoupling of the over-rotation angle $\theta$ from the $Z$ rotation angles allows using PALEA as part of an optimal quantum control~\cite{Koch2022} optimization protocol varying a large number of parameters of exotic pulse shapes, from Nelder-Mead to Bayesian optimization with Gaussian processes~\cite{jones1998efficient, wang2018batched}, without the need to find additional $Z$ rotation angles, greatly increasing its utility. Additionally, including the dynamical decoupling $D$ to target a specific pair of states and averaging over the phases of the involved $X$ rotations allows one to isolate the amplified exchange process between these states -- here $\ket{11}$ and $\ket{02}$. The other processes which may involve the state $\ket{11}$, but are not explicitly targeted by $D$ -- such as leakage to the coupler -- are effectively converted to incoherent processes and thus not amplified, see discussion in Appendix~\ref{subsec:appendix:measuring_coupler_leakage}. This removes a possible systematic bias introduced to the calibration by coupler leakage processes being randomly amplified at certain coupler pulse amplitudes, see details in Appendix~\ref{subsec:appendix:npopex_echo_out_of_subspace}.

\section{Optimizing and Benchmarking the CZ Gate}
\label{sec:cal_opt_and_benchmark_cz_gates}

\begin{figure}[tb]
    \centering
    \includegraphics[width=1\columnwidth]{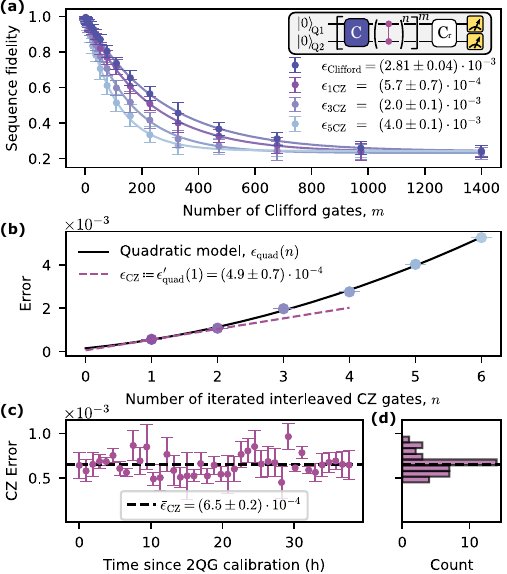}
    \caption{
        (a) Iterative interleaved randomized benchmarking results for one, three, and five iteratively interleaved CZ gates, where the error bars indicate the standard deviation. The extracted errors per $n$-CZ gates, $\epsilon_{n\mathrm{CZ}}$, are shown with fit uncertainties. Inset: Iterative IRB circuit, consisting of a Clifford gate $\mathrm{C}$ followed by $n$ CZ gates, repeated $m$ times, and finalized with an inverting Clifford gate $\mathrm{C_r}$.
        (b) Extracted error versus number of iteratively interleaved CZ gates. The black line shows a quadratic fit $\epsilon_\mathrm{quad}(n)$; the slope at $n=1$ gives the CZ gate error $\epsilon_\mathrm{CZ} = \SI{4.9(0.7)e-4}{}$ (dashed line).
        (c) Time series of $\epsilon_\mathrm{CZ}$ over \SI{40}{\hour}. Error bars indicate the standard error from the quadratic fit. The average CZ gate error, $\bar{\epsilon}_\mathrm{CZ} = \SI{6.5(0.2)e-4}{}$, is shown as a dashed line.
        (d) Histogram of $\epsilon\mathrm{CZ}$ values from (c), with the same average indicated.
    }
    \label{fig:fig_3}
\end{figure}

To accurately characterize gate fidelity, we use iterative interleaved randomized benchmarking (IRB)~\cite{sheldon2016characterizing, marxer2023long, lin2025days}. In this protocol, $n$ CZ gates are interleaved between Clifford gates instead of just one. Fig.~\ref{fig:fig_3}(a) illustrates the protocol for $n = 1, 3, 5$. The extracted total error after applying $n$ interleaved CZ gates, $\epsilon(n)$, is shown in Fig.~\ref{fig:fig_3}(b). We fit this dependence to a quadratic model,
\[
\epsilon_\mathrm{quad}(n) = a n^2 + b n + c
\]
to disentangle coherent and incoherent error sources: incoherent errors increase linearly with $n$, while coherent errors and errors with long temporal correlations accumulate quadratically~\cite{sheldon2016characterizing}. A further advantage of iterative IRB is the explicit extraction of the constant offset term $c$, which may arise from residual non-Markovian effects -- such as flux or drive pulse distortions from the surrounding Clifford gates~\cite{hyyppa2024reducing} or coherent flux noise~\cite{brillant2025randomized} -- unrelated to the interleaved CZ gates. In standard IRB, this offset is indistinguishable from the CZ gate error, leading to systematic over- or underestimation, whereas iterative IRB separates it from the intrinsic gate error.

We define the CZ gate error as the slope of the curve at $n=1$, i.e., $\epsilon_\mathrm{CZ} \coloneqq \epsilon'_\mathrm{quad}(1) = 2a + b$~\cite{xiong2025scalable, renger2025superconducting}. Unlike $\epsilon_\mathrm{quad}(1)$, which resembles the gate error definition of standard IRB, $\epsilon'_\mathrm{quad}(1)$ is independent of $c$ while still capturing the coherent error of a single CZ gate. Although coherent error contribution grows with $n$, in many applications such residual errors can be converted into incoherent errors using protocols such as randomized compiling~\cite{hashim2021randomizedcompiling}, ensuring that the coherent errors do not add up constructively.

To minimize coherent CZ gate errors, we calibrate both the coupler and qubit flux pulse amplitudes using PALEA, discussed in Section~\ref{sec:error_amplification_experiments}. In addition, we optimize the gate durations by sweeping the coupler and qubit flux pulse widths separately. Full re-calibration and iterative IRB with three-state discrimination is performed for each duration setting to benchmark both the CZ gate error and the leakage to the $\ket{2}$ states (see Appendix~\ref{sec:appendix:cz_gate_optimization}). We find the optimal CZ gate performance at a total gate duration of \SI{33}{\nano\second}, consisting of a \SI{22}{\nano\second} coupler pulse overlapping with a \SI{27}{\nano\second} qubit pulse and \SI{3}{\nano\second} buffers on each side.

After calibration and optimization of the CZ gate, we benchmark it using iterative IRB over a continuous period of \SI{40}{\hour}, re-calibrating only the single-qubit gates between measurements [Fig.~\ref{fig:fig_3}(c)]. For each run of the iterative IRB experiment, we use a fixed seed to generate the set of Clifford sequences, and find an average CZ error of
\[
\bar{\epsilon}_\mathrm{CZ} = \SI{6.5(0.2)e-4}{},
\]
where the uncertainty is the standard error of the mean. The lowest observed CZ gate error during the measurement series was $\epsilon_\mathrm{CZ}^\mathrm{min} = \SI{4.9(0.7)e-4}{}$. The histogram of data points in Fig.~\ref{fig:fig_3}(d) confirms a nearly Gaussian distribution with standard deviation $\sigma(\bar{\epsilon}_\mathrm{CZ}) = \SI{1.1e-4}{}$.

The corresponding average Clifford gate error is measured to be
\[
\bar{\epsilon}_\mathrm{Clifford} = \SI{2.79(0.05)e-3}{},
\]
in good agreement with the estimate based on the Clifford gate decomposition
\[
\epsilon^\mathrm{estimate}_\mathrm{Clifford} = 1 - \left( (1-\epsilon_\mathrm{CZ})^{1.5} \cdot (1-\epsilon_{\sqrt{X}})^{4.65} \right)  = \SI{2.4e-3}{},
\]
using $\epsilon_{\sqrt{X}} \approx \SI{3e-4}{}$. Note that in these experiments, the single-qubit gate duration was set to \SI{40}{\nano\second}, which had not yet been fully optimized. The improved fidelities presented later in the paper use shorter, optimized gate durations.

We also compare the iterative IRB result to the CZ error estimated using standard IRB analysis, which yields
\[
\bar{\epsilon}^\mathrm{IRB}_\mathrm{CZ} = \SI{7.3(0.2)e-4}{}.
\]
This value is significantly higher than the gate error extracted from iterative IRB. In our data, the quadratic fit reveals an average offset of $c = 1.1 \times 10^{-4}$, likely from residual non-Markovian effects unrelated to the CZ gate. Subtracting this offset from the standard IRB result yields $\bar{\epsilon}^\mathrm{IRB}_\mathrm{CZ} - c = \SI{6.2e-4}{}$, in close agreement with the iterative IRB value, indicating that standard IRB mistakenly attributes the offset $c$ to the CZ gate error.

\section{Error budget of the full system}

\begin{figure*}[tb]
    \centering
    \includegraphics[width=1\textwidth]{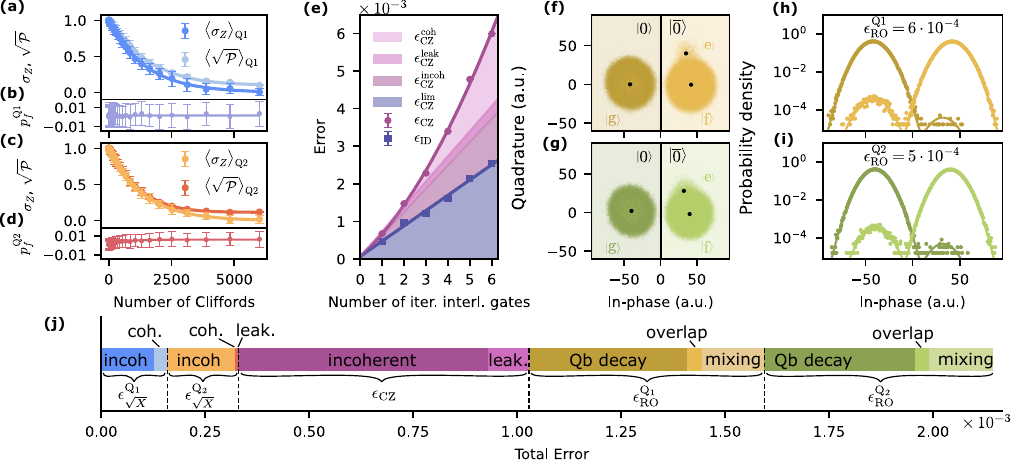}
    \caption{
        (a,b) Purity randomized benchmarking (RB) results for $\mathrm{Q}_1$. Shown are the average $\langle \sigma_Z \rangle$, the sequence purity $\langle \sqrt{\mathcal{P}} \rangle$, and the leakage population $p_f$ as a function of the number of Clifford gates. The extracted $\sqrt{X}$ gate error is $\epsilon_{\sqrt{X}}^{\mathrm{Q}1} = \SI{1.59(0.02)e-4}{}$.
        (c,d) Same as (a,b), but for $\mathrm{Q}_2$ yielding $\epsilon_{\sqrt{X}}^{\mathrm{Q}2} = \SI{1.68(0.02)e-4}{}$.
        (e) Error budget for the CZ gate extracted from leakage interleaved RB. Errors for CZ (purple) and identity gates (blue) are shown as a function of the number of gate repetitions. CZ data is fitted with a quadratic model, identity with a linear model. The blue shaded region indicates the CZ error limit $\epsilon_\mathrm{CZ}^\mathrm{lim}$ in the absence of coupler-induced effects. Contributions from incoherent errors ($\epsilon_\mathrm{CZ}^\mathrm{incoh}$), leakage to the second excited states ($\epsilon_\mathrm{CZ}^\mathrm{leak}$), and coherent errors ($\epsilon_\mathrm{CZ}^\mathrm{coh}$) are shown in shades of purple.
        (f,g) Single-shot readout clouds for $\mathrm{Q}_1$ and $\mathrm{Q}_2$, based on two million repetitions. Using a shelving technique via an $X_\pi^{12}$ pulse, $\ket{e}$ state populations are moved mostly to $\ket{f}$ state prior to readout. Black lines indicate thresholds between $\ket{0}$ and $\ket{\bar{0}}$.
        (h,i) Probability density versus in-phase readout signals used to extract readout fidelities: $\epsilon_\mathrm{RO}^{\mathrm{Q}1} = \SI{5.7(0.2)e-4}{}$ and $\epsilon_\mathrm{RO}^{\mathrm{Q}2} = \SI{5.4(0.2)e-4}{}$.
        (j) Summary of combined error contributions for each qubit, separated into single-qubit gate errors (incoherent, coherent, leakage), two-qubit gate errors (incoherent, leakage), and readout errors (decay, overlap, mixing). The total average error across the system is $\SI{2.1e-3}{}$.
    } 
    \label{fig:fig_4}
\end{figure*}

To evaluate the overall performance of our quantum device, we construct a comprehensive error budget encompassing all core operations: single-qubit gates, two-qubit (CZ) gates, and readout. While the CZ gate fidelity is a key benchmark for entangling operations, achieving consistently low errors across all layers of the control stack is critical for scalability and fault-tolerant performance. This section details how each component is individually optimized, calibrated, and characterized.

\subsection{Single-qubit gates}
\label{subsec:single_qubit_gates}

Single-qubit gates are implemented using cosine-shaped envelopes combined with the derivative removal by adiabatic gate (DRAG) technique~\cite{motzoi2009simple, chen2016measuring} to suppress leakage to the second excited state. The quadrature envelope is optimized for leakage minimization, and the resulting phase errors are corrected using virtual $Z$ gates~\cite{mckay2017efficient}. Our native gate set consists of $\sqrt{X}$ rotations.

To quantify the single-qubit gate fidelity, we use purity randomized benchmarking (RB) with three-state discrimination to capture incoherent, coherent, and leakage errors~\cite{wallman2015estimating, feng2016estimating, wood2018quantification, hyyppa2024reducing}. We sweep the gate duration to identify the optimal operating point, finding a minimum error at \SI{26}{\nano\second}. This relatively long duration, compared to recent literature~\cite{hyyppa2024reducing, werninghaus2021leakage}, is primarily constrained by compression effects in the up-conversion mixer, which limit the usable drive power.

Figures~\ref{fig:fig_4}(a)--(b) show the purity RB results for $\mathrm{Q_1}$, categorizing single-qubit gate errors into incoherent, coherent, and leakage errors. We extract a single-qubit gate error of $\epsilon_{\sqrt{X}}^{\mathrm{Q1}} = \SI{1.59(0.02)e-4}{}$, with the dominant contribution from incoherent error, $\epsilon_{\sqrt{X}}^{\mathrm{Q1,\, incoh}} = \SI{1.26(0.01)e-4}{}$. The residual error consists of coherent calibration errors such as control pulse distortion, and a negligible amount of leakage errors to the $\ket{2}$, remaining below the measurable threshold of $\SI{1e-6}{}$. Figures~\ref{fig:fig_4}(c)--(d) show analogous results for $\mathrm{Q_2}$, where we find $\epsilon_{\sqrt{X}}^{\mathrm{Q2}} = \SI{1.68(0.02)e-4}{}$, composed of $\epsilon_{\sqrt{X}}^{\mathrm{Q2,\, incoh}} = \SI{1.62(0.01)e-4}{}$, $\epsilon_{\sqrt{X}}^{\mathrm{Q2,\, coh}} = \SI{4(2)e-6}{}$, and leakage $\epsilon_{\sqrt{X}}^{\mathrm{Q2,\, leak}} = \SI{3.7(0.5)e-6}{}$.

The low leakage errors are expected given the relatively long gate durations used and the additional suppression from the DRAG pulse. Simulations for a $\sqrt{X}$ gate with a duration of \SI{26}{\nano\second}, qubit-qubit detuning of \SI{110}{\mega\hertz}, and qubit relaxation times of $T_1 \approx \SI{100}{\micro\second}$ predict a gate error of $\epsilon_{\sqrt{X}}^{\mathrm{sim}} = \SI{1.3e-4}{}$ for both qubits, fully limited by incoherent error, agreeing well with the measured values. At these gate durations, detuning, and optimized qubit–coupler couplings, the expected hybridisation error is below \SI{1e-5}{}, and therefore does not limit the performance. This is experimentally confirmed by comparing gate fidelities of the two qubits individually, where we observe nearly identical performance indicating negligible hybridisation error. These results also imply that the microwave crosstalk, measured to be approximately \SI{-20}{\decibel}, is not a limiting factor in our setup at the given qubit-qubit detuning.

We note that the simulated $\sqrt{X}$ gate errors in Fig.~\ref{fig:fig_1}(b) are based on optimally short gate durations (on the order of \SI{15}{\nano\second}), while our experiments use longer pulses due to limitations mentioned above. Hence, the simulated gate error reaches $\SI{1e-4}{}$, even when averaging over non-ideal qubit-qubit detunings.

\subsection{CZ gate error decomposition}
\label{subsec:cz_error_decomposition}

To assess the two-qubit gate performance, we use the iterative IRB method described in Section~\ref{sec:cal_opt_and_benchmark_cz_gates}. In addition to measuring the CZ gate error directly, we seek to disentangle its error contributions.

We first determine the coherence-limited error floor of the qubits by applying iterative IRB with an interleaved identity (ID) gate of the same duration as the CZ gate (\SI{33}{\nano\second}). The extracted error per ID gate is $\epsilon_\mathrm{ID} = \SI{4.1(0.2)e-4}{}$ [Fig.~\ref{fig:fig_4}(e)], which serves as an estimate for the lower bound of CZ gate error in the absence of coupler-induced decoherence, $\epsilon_\mathrm{CZ}^\mathrm{lim}$.

Next, we estimate the incoherent component of the CZ gate error, $\epsilon_\mathrm{CZ}^\mathrm{incoh}$, by measuring $T_1$ and $T_2$ of the hybridised qubit modes as a function of the coupler flux bias, and integrating over the flux pulse waveform to compute effective coherence times (see Appendix~\ref{sec:appendix:coherence_limit_cz_gate}). This yields $\epsilon_\mathrm{CZ}^\mathrm{incoh} = \SI{6.2e-4}{}$. The increased value compared to $\epsilon_\mathrm{ID}$ confirms the impact of the coupler’s shorter relaxation time ($T_1^\mathrm{TC} = \SI{33}{\micro\second}$) versus the qubit lifetimes ($T_1^\mathrm{Q} \approx \SI{100}{\micro\second}$).

To characterize leakage errors to higher qubit excited states, we implement leakage randomized benchmarking (LRB)~\cite{wood2018quantification, Rol2019Leakage} using three-state discrimination of the iterative IRB data. We extract a total leakage error to the second excited states of both qubits, $L_\mathrm{CZ}^\mathrm{tot} = \SI{1e-4}{}$. Most leakage originates from $\mathrm{Q_2}$ ($L_\mathrm{CZ}^\mathrm{Q2} = \SI{9e-5}{}$), which transiently populates the $\ket{2}$ state during the CZ gate.

Summing the incoherent and leakage contributions gives $\epsilon_\mathrm{CZ}^\mathrm{incoh} + L_\mathrm{CZ}^\mathrm{tot} = \SI{7.2e-4}{}$, which closely matches the measured average CZ error of $\bar{\epsilon}_\mathrm{CZ} = \SI{6.5e-4}{}$, suggesting that most error contributions are captured by this decomposition. The residual error is attributed to coherent error $\epsilon_\mathrm{CZ}^\mathrm{coh}$, which manifests as a quadratic contribution in iterative IRB and is likely caused by flux pulse distortions that induce inter-gate memory effects.

We also characterize the SWAP and coupler leakage errors and demonstrate that neither limits the performance of our CZ gate. The SWAP error arises from coherent population exchange between the $\ket{10}$ and $\ket{01}$ states and is described by a two-level unitary operator $\hat{U}_s$ acting in the $\mathrm{span}( \{\ket{10}, \ket{01} \})$ subspace. Analogous to the leakage analysis in Section~\ref{sec:error_amplification_experiments}, $\hat{U}_s$ is decomposed as $\Rop_z(\alpha_s)\Rop_x(\theta_s)\Rop_z(\beta_s)$. Here, $\theta_s$ quantifies the SWAP error, while $\alpha_s$ and $\beta_s$ correspond to conditional and single-qubit phases that can be calibrated separately. Using PALEA, we measure $\theta_s = \SI{0.0198(0.0006)}{\radian}$, which corresponds to a SWAP error of $\epsilon_\mathrm{CZ}^\mathrm{SWAP} = \SI{3.9(0.2)e-5}{}$, confirming that SWAP errors are not a limiting factor for the CZ gate fidelity (see Appendix~\ref{subsec:appendix:swap_errors} for further discussion).

Leakage to the coupler is more challenging to quantify, as the coupler lacks direct $XY$ control and readout. The absence of a dedicated drive and measurement line precludes the use of PALEA or randomized benchmarking techniques to assess population transfer to the coupler excited state. Instead, we adopt a modified Floquet calibration method~\cite{arute2020observationseparateddynamicscharge}, in which a variable delay $\tau$ is inserted between repeated CZ gates. During this delay, the state $\ket{101}$ and the leaked state $\ket{011}$ acquire a relative phase under the free Hamiltonian evolution of the system. For specific $\tau$ values, this phase evolution constructively amplifies any residual coupler population, enabling its detection. We measure a leakage-induced rotation angle of $\theta_\mathrm{TC} = \SI{0.041(0.001)}{\radian} $ per CZ gate, corresponding to a coupler leakage error of $\epsilon_\mathrm{CZ}^\mathrm{TC} =\frac45 - \frac45 \cos^4 \left(\theta_{\mathrm{TC}}/4 \right) = \SI{1.7(0.1)e-4}{}$, where we have modeled the coupler leakage as an incoherent decay of $\mathrm{Q}_1$. This decay is already partially accounted for by the difference between $\epsilon_\mathrm{ID}$ and $\epsilon_\mathrm{CZ}$ (see Appendix~\ref{subsec:appendix:measuring_coupler_leakage} and~\ref{subsec:appendix:coupler_fidelity_contribution}). We attribute the small coupler leakage primarily to the use of a Slepian-shaped envelope for the coupler flux pulse, which effectively minimizes leakage during the CZ gate.

\subsection{Shelved readout and fidelity analysis}

\begin{table*}[tb]
    \centering
    \caption{Readout performance metrics for shelved and QND-optimized readout.}
    \begin{tabular}{|l|c|c|c|c|}
        \hline
        \multirow{2}{*}{} & \multicolumn{2}{c|}{\(\mathrm{Q}_1\)} & \multicolumn{2}{c|}{\(\mathrm{Q}_2\)} \\ \cline{2-5}
         & Shelved & QND-opt. & Shelved & QND-opt. \\ \hline
        Readout error, \(\epsilon_\mathrm{RO}\) 
        & \SI{5.7(0.2)e-4}{} & \SI{1.92(0.03)e-3}{} 
        & \SI{5.4(0.2)e-4}{} & \SI{1.59(0.03)e-3}{} \\
        QND error, \( 1 - \mathcal{Q}\) 
        & \SI{1.06(0.02)e-2}{} & \SI{6.7(0.2)e-3}{} 
        & \SI{1.11(0.02)e-2}{} & \SI{5.9(0.2)e-3}{} \\
        Readout-induced leakage, L 
        & \SI{1.005(0.006)e-2}{} & \SI{2.1(0.2)e-5}{} 
        & \SI{1.05(0.005)e-2}{} & \SI{6.6(0.3)e-4}{} \\ \hline
    \end{tabular}
    \label{tab:readout}
\end{table*}

For high-fidelity measurement, we use a shelving-based readout technique: an $X_\pi^{12}$ gate is applied just prior to a rectangular readout pulse to transfer the $\ket{e}$ state to $\ket{f}$, thereby slightly increasing its effective readout lifetime~\cite{elder2020high, chen2023high, mallet2009single, wang2025longitudinal}. We begin by optimizing the readout frequency, followed by a joint optimization of the readout amplitude and duration for maximum readout fidelity. Finally, after optimizing the integration weights, the configuration consists of a \SI{40}{\nano\second} shelving gate followed by a \SI{240}{\nano\second} readout pulse, resulting in a total duration of \SI{280}{\nano\second}.

Figures~\ref{fig:fig_4}(f)--(g) show single-shot in-phase and quadrature distributions for $\mathrm{Q_1}$ and $\mathrm{Q_2}$ based on two million samples. $\ket{g}$ state preparations are mostly classified as $\ket{0}$, with rare $\ket{\bar{0}}$ misclassifications mostly due to measurement-induced excitations. $\ket{e}$ state preparations -- shelved to $\ket{f}$ with occasional $f \!\to\! e$ relaxation -- are classified as $\ket{\bar{0}}$; rare $e \!\to\! g$ relaxation yields $\ket{0}$ misclassifications. Counting the assignment errors, we extract readout errors of $\epsilon_\mathrm{shelved}^{\mathrm{Q1}} = \SI{5.7(0.2)e-4}{}$ and $\epsilon_\mathrm{shelved}^{\mathrm{Q2}} = \SI{5.4(0.2)e-4}{}$, with uncertainties obtained via bootstrapping \cite{efron1979bootstrap}.

From the histogram fits shown in Figs.~\ref{fig:fig_4}(h)--(i), we identify the main readout error sources for $\mathrm{Q}_1$ as decay error $\epsilon_\mathrm{shelved}^{\mathrm{Q1,decay}} = \SI{3.8(0.2)e-4}{}$, mixing error $\epsilon_\mathrm{shelved}^{\mathrm{Q1,mix}} = \SI{1.48(0.01)e-4}{}$, and negligible overlap error $\epsilon_\mathrm{shelved}^{\mathrm{Q1,overlap}} = \SI{3.6(0.1)e-5}{}$. For $\mathrm{Q}_2$, the error profile follows the same pattern, with decay error $\epsilon_\mathrm{shelved}^{\mathrm{Q2,decay}} = \SI{3.6(0.2)e-4}{}$ as the leading contribution, while mixing and overlap errors remain at levels comparable to those of $\mathrm{Q}_1$.

Although shelving increases the effective relaxation time during measurement, qubit decay remains the dominant limitation to the readout fidelity. Potential improvements include shelving to the third excited state, which further extends the effective lifetime of the readout manifold.

While shelving provides high-fidelity readout suitable for final measurements, it is less practical for mid-circuit measurements required in quantum error correction~\cite{googlequantumai2025quantum} due to the potentially increased risk of leakage to the $\ket{f}$ state. Compatibility with such measurements can be restored if high-fidelity reset from the $\ket{f}$ state is available~\cite{magnard2018fast, sunada2022fastreadout, zhou2021rapid, ding2025multipurpose, kim2025fastunconditional}, enabling the qubit to be returned to its pre-measurement state when needed. 

Ideally, mid-circuit measurements are implemented as quantum non-demolition (QND) readouts, which preserve the qubit state regardless of the measurement outcome~\cite{lupascu2007quantum}. In practice, no readout is perfectly QND, and the goal is to maximize its QNDness $\mathcal{Q}$, defined as~\cite{hazra2025benchmarking}:
\begin{equation}
    \mathcal{Q} = \left( \mathcal{Q}_\mathrm{g} + \mathcal{Q}_\mathrm{e} \right) /2,
\end{equation}
with
\begin{equation}
    \begin{split}
        \mathcal{Q}_\mathrm{g} &= P(\mathrm{g}, 0|\mathrm{g}) + P(\mathrm{g}, 1|\mathrm{g}) \\
        \mathcal{Q}_\mathrm{e} &= P(\mathrm{e}, 0|\mathrm{e}) + P(\mathrm{e}, 1|\mathrm{e}),
    \end{split}
\end{equation}
where $P(a, b|c)$ denotes the probability that the qubit ends up in state $a$ and is assigned outcome $b$, given the qubit is initially in state $c$. We adapt the readout-induced leakage benchmarking (RILB) protocol~\cite{hazra2025benchmarking} to extract $\mathcal{Q}$; see Appendix~\ref{sec:appendix:qndness} for details on the experiment and models.

Using the fidelity-optimized shelved readout parameters, and appending an additional $X_\pi^{12}$ gate after the readout pulse to return population from $\ket{f}$ to $\ket{e}$, we observe that the QNDness is limited by elevated leakage rates (Table~\ref{tab:readout}). This leakage is attributed to the decay from $\ket{f}$ to $\ket{e}$ during the readout pulse, followed by re-excitation to $\ket{f}$ via the second $X_\pi^{12}$ gate.

As an alternative, we optimize a standard (unshelved) readout pulse for QNDness, which typically involves lowering the readout amplitude to reduce the number of photons in the resonator and minimize excitation to higher states~\cite{sank2016measurementinduced, khezri2023measurementinduced, hazra2025benchmarking}. The QND-optimized readout reduces leakage by up to two orders of magnitude relative to the shelved readout and improves the QNDness up to $\mathcal{Q} =\SI{99.41}{\percent}$, albeit with reduced fidelity, see Table~\ref{tab:readout}.

These results place our QND performance alongside other high-fidelity QND measurements reported for QPU-compatible architectures~\cite{pereira2023parallel, gard2024fasthighfidelity, sunada2022fastreadout, dassonneville2020fasthighfidelity, mori2025highpowerreadout, kurilovich2025high}.

\subsection{Total system-level error budget}

Combining all contributions, we define the total system-level error as
\[
\epsilon_\mathrm{tot} = \sum_{i=1,2} \left( \epsilon_{\sqrt{X}}^{\mathrm{Q}i} + \epsilon_\mathrm{RO}^{\mathrm{Q}i} \right) + \epsilon_\mathrm{CZ}  \approx \SI{2.1e-3}{}.
\]
Depending on the application, different weights may be assigned to each component. For instance, quantum error correction demands low leakage and coherent errors to ensure decoder compatibility~\cite{googlequantumai2025quantum}, while dynamical decoupling schemes would utilize more single-qubit gates, hence increasing their weight. Note that our simple definition of the system error, $\epsilon_\mathrm{tot}$, does not account for all error mechanisms -- such as idling errors -- which are relevant for quantum error correction.

Our results show that by carefully optimizing each subsystem -- through pulse shaping, error-amplification, and shelving-based readout -- it is possible to approach the fidelity regime needed for scalable fault-tolerant quantum computing.

\section{Conclusions}

We have demonstrated simultaneous high-fidelity single-qubit gates, CZ gates, and readout in a superconducting quantum device composed of two transmon qubits coupled via a tunable floating coupler. Through targeted device-level simulations, we identified the optimal qubit–coupler coupling strength that minimizes the total Clifford error by balancing hybridisation-induced crosstalk and incoherent gate errors. This design strategy was validated experimentally, yielding single-qubit gate fidelities above \SI{99.98}{\percent}, a 40-hour-averaged CZ gate fidelity of \SI{99.93}{\percent}, and readout fidelities exceeding \SI{99.94}{\percent}.

A key enabler of this performance is our newly developed PALEA (Phase-Averaged Leakage Error Amplification) protocol, which allows precise calibration of coherent over- and under-rotation errors in the $\mathrm{span}( \{\ket{11}, \ket{02} \})$ subspace. Combined with iterative interleaved randomized benchmarking, PALEA enables both the suppression and quantification of coherent errors, leading to a factor of two reduction in leakage and an accurate estimation of the residual CZ gate error.

We also presented a detailed error budget that quantifies the incoherent, coherent, and leakage components of both single- and two-qubit gates, as well as readout. The readout fidelity was achieved via shelving to the second excited state and was broken down into decay, mixing, and overlap contributions. In addition, we optimized non-shelved readout for QNDness, obtaining values exceeding \SI{99.3}{\percent} for both qubits. The total system-level error is estimated to be $\SI{2.1e-3}{}$, approaching operation errors required to efficiently scale up quantum error correction codes and thus representing a key advance towards scalable, fault-tolerant quantum computing.

Importantly, we demonstrated a device architecture and control methodology compatible with scalable layouts. Our design allows each qubit to be coupled to multiple tunable couplers while preserving gate parameters, and the chip layout supports dedicated Purcell filters for each qubit on a single fabrication layer. These features make our approach well-suited for extension to 2D square-lattice architectures required, for example, for surface code implementations.

Together, our results provide a blueprint for achieving uniformly high fidelities across all core quantum operations in a scalable superconducting platform, bringing fault-tolerant quantum computing closer to reality.

\section{Acknowledgements}
We acknowledge Ali Yurtalan, Lucas Ortega, Pavel Titov for supporting the construction and maintenance of the experimental setup, Balint Csatari, Rakhim Davletkaliyev, Guillermo Alonso, Jyrgen Luus, and Umut Deniz \"Ozugurel for additional software support. We would additionally like to thank the rest of the IQM team for creating the entire infrastructure, laying the foundation of this work. We would like to thank Sumeru Hazra and Wei Dai for discussions on the RILB protocol.

A.VE. conceptualized the project, F.M. and J.M. planned and executed the experiments and analyzed the experimental data, A.G., A.L., E.T., T.M., J.R., conducted and analyzed the microwave simulations, J.AN., S.L., T.O., M.ST., H.H., O.S., and J.T. provided theoretical modeling and simulations, A.G., C.O.-K., A.L., and S.R.D. designed and simulated the sample, F.M., J.M., J.AD., M.ST., M.SAV., R.B., C.F.C., O.F., Z.G., A.G.F., D.G., T.H., E.H., J.I., S.I., S.J., M.K., J.L., N.L., P.L., J.R., B.T., J.W., A.VA., V.B., and A.VE. developed the experiment and analysis software, 
A.P., S.-G.K, W.L., W.Q., A.KO., L.Z., H.-H.L. and A.KA. fabricated, coordinated, and selected the device, K.M., V.S., S.D. bonded and packaged the device, R.K., M.SAR., and M.SAV. maintained and improved the experimental setup, L.A., S.K., A.R., S.S., and M.ST. worked on coherence improvements for this sample, W.K., H.-S.K., F.D., J.HA., M.R., and J.V. provided background support ,F.M., J.M., J.AN., J.AD., J.HE., and A.VE. drafted or revised the manuscript, and J.HE., A.G., and A.VE. supervised the work.

This work was supported by Business Finland through projects QuTi (40755/31/2020) and CfoQ (787/31/2025). Parts of this work are included in patent applications filed by IQM Finland Oy. 

\bibliography{ref}
\pagebreak
\clearpage

\onecolumngrid
\title{Supplementary Material: Above 99.9\,\% Fidelity Single-Qubit Gates, Two-Qubit Gates, and Readout in a Single Superconducting Quantum Device}

\maketitle

\clearpage

\setcounter{figure}{0}
\setcounter{table}{0}
\setcounter{page}{1}
\setcounter{section}{0}

\newcolumntype{C}{>{\centering\arraybackslash} m{1.4cm} }

\renewcommand{\thefigure}{S\arabic{figure}}
\renewcommand{\theHfigure}{S\arabic{figure}}
\renewcommand{\thetable}{S\arabic{table}}
\renewcommand{\theHtable}{S\arabic{table}}
\renewcommand{\bibnumfmt}[1]{[S#1]}

\newcommand{\note}[1]
  {\begingroup{\color{blue}[NOTE: \textit{#1}]}\endgroup}
\newtheorem{lemma}{Lemma}[section]

\appendix
\setcounter{equation}{0}
\renewcommand{\theequation}{S\arabic{equation}}
\makeatletter
\@removefromreset{equation}{section}
\makeatother

\section{EXPERIMENTAL SETUP}
\label{sec:appendix:experimental_setup}

\begin{figure}[tb]
    \centering
    \includegraphics[width=1\columnwidth]{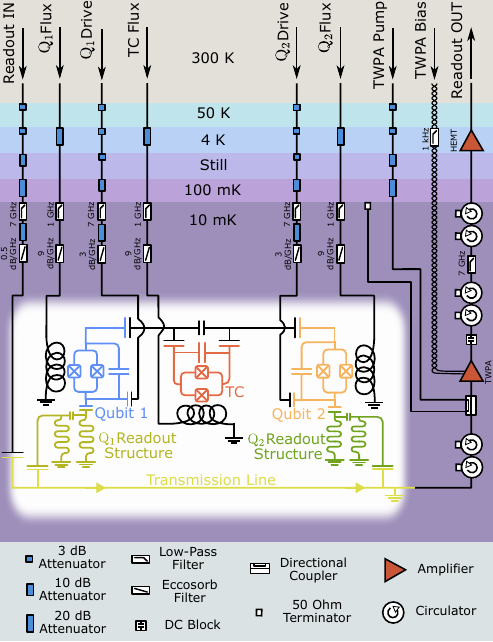}
    \caption{
        Schematic of the experimental setup. The background colors depict the different temperature stages of the dilution refrigerator. The coaxial microwave cables (straight black lines) are used to control and read out the qubit and coupler states. The three-wave–mixing traveling-wave parametric amplifier (TWPA) is biased via a twisted-pair cable and pumped through a coaxial microwave line. The schematic of the sample in the white box consists of two flux-tunable qubits (blue and orange), a readout structure for each qubit (lime green), and a flux-tunable coupler (red).
    }
    \label{fig:fig_sm_exp_setup}
\end{figure}

The experiments are performed in a cryogenic dilution refrigerator with a base temperature of approximately \SI{10}{\milli\kelvin}. A schematic of the full wiring configuration is shown in Fig.~\ref{fig:fig_sm_exp_setup}. The sample consists of two frequency-tunable floating transmon qubits coupled via a frequency-tunable floating transmon coupler. A summary of the key device parameters is provided in Table~\ref{tab:parameters}.

Each qubit has an individual microwave drive line, equipped with \SI{39}{\decibel} of attenuation distributed across the various temperature stages to minimize thermal noise. Additional low-pass and Eccosorb filters are installed at the mixing chamber stage to suppress spurious high-frequency signals. The Eccosorb filters add approximately another \SI{15}{\decibel} attenuation at the qubit frequencies.

The qubits are driven using a Zurich Instruments HDAWG arbitrary waveform generator (AWG), which produces intermediate-frequency pulses that are upconverted to the qubit frequencies using IQ mixers. Dedicated HDAWG channels are also used to control the voltage on the flux line of $\mathrm{Q}_1$ and the coupler. For $\mathrm{Q}_2$, which requires only static flux control, we use a Qblox SPI D5a voltage source module to deliver DC biases.

The readout bus consists of a transmission line capacitively coupled to the readout structure of each qubit. Each readout structure comprises a dedicated resonator–Purcell filter pair, designed to suppress spontaneous emission into the environment while maintaining the coupling strength required for fast, high-fidelity measurement. Readout pulses are generated using a Zurich Instruments UHFQA, with frequency multiplexing applied to address both qubits simultaneously. The UHFQA output is upconverted and sent through the dilution refrigerator input chain.

The readout signal is first amplified by a three-wave–mixing traveling-wave parametric amplifier (TWPA) at the \SI{10}{\milli\kelvin} stage and subsequently by a \SI{4}{\kelvin} high-electron-mobility transistor (HEMT) amplifier. The amplified signal is then downconverted at room temperature and digitized by the UHFQA. The TWPA is pumped by a Rhode \& Schwarz microwave generator at \SI{11}{\giga\hertz} and biased using two DC output channels of a Qblox SPI D5a voltage source.

For reference, Fig.~\ref{fig:fig_sm_exp_setup} includes the location and labeling of all attenuators, filters, amplifiers, and circulators, as well as the thermal stage at which each component is mounted. All flux and drive lines are thermally anchored at each temperature stage to minimize heat load and thermal noise coupling to the qubits.

\begin{table*}[tb]
    \centering
    \renewcommand{\arraystretch}{1.4}
    \begin{tabular}{|l| C C C|}
        \hline
         Quantity, symbol (unit) & Qubit 1 & Qubit 2 & Coupler  \\ \hline
        Readout resonator frequency, $\omega_\mathrm{R}/2\pi$ (GHz)& 6.190 & 6.350 & -\\
        Readout Purcell filter frequency, $\omega_\mathrm{Rpf}/2\pi$ (GHz) & 6.187 & 6.329 & -\\
        Effective readout resonator bandwidth, $\kappa_\mathrm{eff} / 2\pi$ (MHz) & 6.1 & 3.4 & -\\
        Readout circuit dispersive shift, $\chi_\mathrm{R} / 2 \pi$ (MHz) & 2.5 & 2.6 & -\\
        Qubit frequency at sweet spot, $\omega_\mathrm{ge}/2\pi$ (GHz) & 4.799 & 4.910 & 4.850\\
        Qubit frequency at idling, $\omega_\mathrm{ge}/2\pi$ (GHz) & 4.799 & 4.910 & 3.672\\
        Transmon anharmonicity, $\alpha$ (MHz) & $-191$ & $-191$ & -216\\
        Energy relaxation time at idling, $T_1$ (µs) & 86 & 102 & 33\\
        Transverse relaxation time at idling, $T_2$ (µs) & 74 & 63 & 1.4\\
        Echoed transverse relaxation time at idling, $T_2^\mathrm{echo}$ (µs) & 140 & 104 & 3.1\\
        Pure dephasing time at idling, $T_\phi$ (µs) & 130 & 91 & 1.4\\
        Qubit-coupler coupling, $g_\mathrm{qc} / 2\pi$ (MHz) & 69.6 & 62.2 & -\\
    \hline
    \end{tabular}
    \caption{Summary of device parameters.}
    \label{tab:parameters}
\end{table*}

\section{QUBIT-COUPLER COUPLING SIMULATION}
\label{sec:appendix:qubit-coupler_coupling}

\begin{figure*}[tb]
    \centering
    \includegraphics[width=1\textwidth]{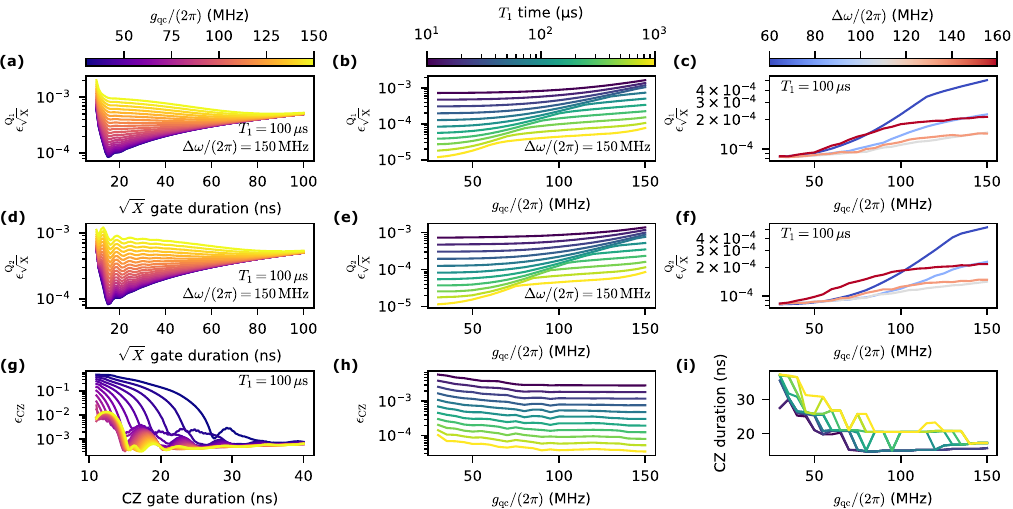}
    \caption{
        (a), (d) Simulated $\sqrt{X}$ gate error vs gate duration for $\mathrm{Q}_1$ and $\mathrm{Q}_2$, respectively, at $T_1 = \SI{100}{\micro\second}$, $\Delta\omega = \SI{150}{\mega\hertz}$, and qubit–coupler couplings from \SI{30}{\mega\hertz} to \SI{150}{\mega\hertz}.
        (b), (e) Simulated $\sqrt{X}$ gate error vs qubit–coupler coupling for a range of $T_1$ times (\SI{10}{\micro\second}–\SI{1}{\milli\second}) at fixed $\Delta\omega / (2\pi) = \SI{150}{\mega\hertz}$, and an optimized gate duration.
        (c), (f) Simulated $\sqrt{X}$ gate error vs qubit–coupler coupling at $T_1 = \SI{100}{\micro\second}$ for five different qubit–qubit detunings $\Delta\omega / (2\pi) \in \{\SI{60}, \SI{85}, \SI{110}, \SI{135}, \SI{160}\}~\si{\mega\hertz}$.
        (g) Simulated CZ gate error vs gate duration for different qubit–coupler couplings at $T_1 = \SI{100}{\micro\second}$.
        (h) Simulated CZ gate error vs qubit–coupler coupling for various $T_1$ values.
        (i) Simulated CZ gate duration vs qubit–coupler coupling for the same range of $T_1$ times.
    }
    \label{fig:fig_sm_simulations}
\end{figure*}

\subsection{$ZZ$ interaction and transverse coupling}
\label{subsec:appendix:zz_and_g}

For a system comprising two transmon qubits coupled via a coupler, the effective two-qubit Hamiltonian can be written as:
\begin{equation}
    \hat{H}/\hbar = \sum_{i=1,2} \frac{1}{2} \Tilde{\omega}_i \hat{\sigma}_j^z + g_{XY} \left( \hat{\sigma}_1^+ \hat{\sigma}_2^- + \hat{\sigma}_2^-\hat{\sigma}_1^+ \right),
\end{equation}
where $\Tilde{\omega}_i$ is the angular frequency of the eigenmode corresponding to $\mathrm{Q}_i$, $\hat{\sigma}_j^z$ is the Pauli $Z$ operator, $\hat{\sigma}_1^+$ and $\hat{\sigma}_2^- $ are the raising and lowering operators, and $g_{XY}$ is the effective transverse coupling strength between the qubits. In our device, the floating coupler frequency $\omega_\mathrm{c}$ is below the bare qubit frequencies $\omega_i$ for $i \in \{ 1, 2\}$, resulting in a transverse coupling strength that is approximately given by~\cite{marxer2023long, yan2018tunable}:
\begin{equation}
    g_{XY} \approx g_{12} - \frac{g_\mathrm{1c} g_\mathrm{2c}}{2} \left( \frac{1}{\Delta_{1c}} + \frac{1}{\Delta_{2c}} - \frac{1}{\Sigma_{1c}} - \frac{1}{\Sigma_{2c}} \right),
\end{equation}
where $\Delta_{ic} = \omega_i - \omega_\mathrm{c}$ and $\Sigma_{ic} = \omega_i + \omega_\mathrm{c}$.

The $ZZ$ interaction is defined as:
\begin{equation}
    \zeta = \left( \omega_{\ket{11}} - \omega_{\ket{01}} \right) - \left( \omega_{\ket{10}} - \omega_{\ket{00}} \right), 
\end{equation}
where $\omega_{\ket{ij}}$ represents the angular frequency of the corresponding eigenstate. An analytical expression for $\zeta$ can be obtained using a Schrieffer–Wolff transformation that retains terms up to at least fourth order in the coupling strength~\cite{sung2021realization, chu2021}. This analysis reveals that, for most qubit–qubit detunings, the points where $\zeta$ is canceled and where the transverse coupling $g_{XY}$ vanishes occur at different coupler frequencies.

\subsection{Assumptions in the Simulation}

To simulate single- and two-qubit gate errors, we use a custom simulation framework built on the open-source Python library QuTiP~\cite{lambert2024qutip}. As described in the main text, our simulations rely on several simplifying assumptions, which we elaborate on here.

First, both qubits are assumed to have identical relaxation and dephasing times: $T_1^{\mathrm{Q1}} = T_1^{\mathrm{Q2}} = T_\phi^{\mathrm{Q1}} = T_\phi^{\mathrm{Q2}}$. While this is not exact, it closely reflects the behaviour observed in our device (see Fig.~\ref{fig:fig_sm_9}). Second, the coupler is assigned a relaxation and dephasing time that is four times shorter than that of the qubits. The reduced relaxation time of the coupler is motivated by its larger energy participation ratio (EPR) in lossy interfaces -- which is approximately four times higher compared to the EPR of the qubits (\SI{0.8e-3}{} vs. \SI{0.2e-3}{}) -- and by coherence measurements of our fabricated device (see Table~\ref{tab:parameters}). Third, all elements -- the two qubits and the coupler -- are modeled as three-level Duffing oscillators. Adding a fourth level did not significantly alter the simulation results. The anharmonicities of the qubits and coupler are set to values matching those of the fabricated device (see Table~\ref{tab:parameters}). Finally, to isolate the impact of the qubit–coupler coupling strength $g_\mathrm{qc}$, we vary the direct qubit–qubit coupling $g_{12}$ in such a way that the coupler idling frequency remains constant across the sweep. This leads to $g_{12}$ values in the range $[0.7, 17.1]~\si{\mega\hertz}$.

\subsection{Single-Qubit Gate Error Mechanisms}

Single-qubit gate errors are primarily influenced by incoherent processes, hybridisation-induced crosstalk, and, for short pulses, leakage to higher energy levels. Hybridisation crosstalk arises from the fact that the qubits are slightly hybridised with each other due to the choice of the idling configuration. Although the coupler is biased to cancel the $ZZ$ interaction ($\zeta = 0$), the eigenstates of the coupled system -- $\ket{00}$, $\ket{01}$, $\ket{10}$, $\ket{11}$ -- are not perfectly aligned with the bare states $\ket{\mathrm{gg}}, \ket{\mathrm{ge}}, \ket{\mathrm{eg}}, \ket{\mathrm{ee}}$~\cite{sung2021realization}. As a result, driving one qubit can unintentionally excite the hybridised mode of the other, even in the absence of classical crosstalk. 

This hybridisation crosstalk depends on (i) the qubit–coupler coupling strength $g_\mathrm{qc}$ (larger $g_\mathrm{qc}$ leads to stronger hybridisation), (ii) the qubit–qubit detuning $\Delta\omega$, and (iii) the gate duration (shorter pulses have broader frequency content and are more likely to excite neighboring eigenstates). To capture these effects, we perform a four-dimensional sweep over $\sqrt{X}$ gate duration, $g_\mathrm{qc}$, relaxation time $T_1$ (which sets both $T_1$ and $T_\phi$), and qubit–qubit detuning $\Delta\omega$, simulating both qubits simultaneously. The drive waveform is a cosine-shaped envelope, matching our experimental implementation.

Figures~\ref{fig:fig_sm_simulations}(a) and (d) show the simulated $\sqrt{X}$ gate error versus gate duration for various $g_\mathrm{qc}$ values, for $\mathrm{Q}_1$ and $\mathrm{Q}_2$, respectively. For small $g_\mathrm{qc}$, the gate error is coherence-limited for durations longer than approximately \SI{16}{\nano\second}. For shorter gates, leakage to the $\ket{2}$ state becomes the dominant source of error. While such leakage could in principle be mitigated using advanced pulse shaping techniques such as FAST DRAG or higher-order DRAG~\cite{hyyppa2024reducing}, in practice we observe no benefit due to mixer compression at high drive powers.

For larger $g_\mathrm{qc}$, hybridisation crosstalk becomes significant. In this regime, the gate error decreases with increasing duration, as longer gates have narrower frequency spectra and interact more adiabatically with unwanted transitions. Oscillations observed in the $\mathrm{Q}_2$ traces originate from transient population of the $\ket{2}$ state due to the proximity of the $\ket{11}$ and $\ket{02}$ levels. These oscillations are absent in $\mathrm{Q}_1$, which is the lower-frequency qubit and does not couple strongly to $\ket{20}$.

From each duration sweep, we extract the gate duration that minimizes the error and plot the corresponding minimum $\sqrt{X}$ gate error as a function of $g_\mathrm{qc}$ for several $T_1$ values [Figs.~\ref{fig:fig_sm_simulations}(b) and (e)]. The overall trend shows increasing gate error with increasing $g_\mathrm{qc}$. At high $g_\mathrm{qc}$, the system is hybridisation-limited, whereas at low $g_\mathrm{qc}$, it becomes coherence-limited -- yielding two distinct slopes.

To evaluate how the $\sqrt{X}$ gate error depends on qubit–qubit detuning, we simulate five values of $\Delta\omega / (2\pi) \in \{\SI{60}, \SI{85}, \SI{110}, \SI{135}, \SI{160}\}~\si{\mega\hertz}$, shown in Figs.~\ref{fig:fig_sm_simulations}(c) and (f). These detunings are symmetrically spaced around $\Delta\omega / (2\pi) = \SI{110}{\mega\hertz}$, where the $\sqrt{X}$ gate error reaches its minimum. This configuration corresponds to both the $\ket{01}$ and $\ket{10}$, as well as the $\ket{11}$ and $\ket{02}$ states being far detuned from each other.

The simulated detuning range of \SI{100}{\mega\hertz} is chosen to reflect realistic constraints in large-scale processors, where nearest-neighbor (NN) and next-nearest-neighbor (NNN) detunings must be engineered to avoid spectral crowding.

The trend of increasing gate error with $g_\mathrm{qc}$ is consistent across all $\Delta\omega$ values. As expected, hybridisation error is minimized near $\Delta\omega / (2\pi) = \SI{110}{\mega\hertz}$ and increases for both lower and higher detunings. Interestingly, we observe that the gate errors at smaller detunings (\SI{60}{\mega\hertz} and \SI{85}{\mega\hertz}) are higher than their larger-detuning counterparts (\SI{135}{\mega\hertz} and \SI{160}{\mega\hertz}), despite their symmetric placement around the optimal point. This asymmetry arises because the effective transverse coupling $g_{XY}$ -- which governs the strength of hybridisation between the qubits -- is stronger at smaller detunings and weaker at larger ones, leading to enhanced crosstalk for low $\Delta\omega$ in our design.

\subsection{Two-Qubit Gate Error Mechanisms}

To estimate the CZ gate errors, $\epsilon_\mathrm{CZ}$, we perform a three-dimensional sweep over gate duration (i.e., both qubit and coupler flux pulse widths), qubit–coupler coupling $g_\mathrm{qc}$, and $T_1$. The qubit flux pulse includes \SI{1}{\nano\second} cosine-shaped rising and falling edges. Slices of the results are shown in Figs.~\ref{fig:fig_sm_simulations}(g–i).

Fig.~\ref{fig:fig_sm_simulations}(g) shows $\epsilon_\mathrm{CZ}$ versus gate duration for various $g_\mathrm{qc}$ at fixed $T_1 = \SI{100}{\micro\second}$. At short durations and low $g_\mathrm{qc}$, coherent under-rotation in the $\mathrm{span}( \{\ket{11}, \ket{02} \})$ subspace dominates. At longer durations or higher couplings, incoherent error becomes the limiting factor. For $g_\mathrm{qc} \gtrsim \SI{60}{\mega\hertz}$, oscillatory behaviour appears due to SWAP-type coherent errors, see definition in Appendix~\ref{subsec:appendix:swap_errors}. These can be minimized through careful gate timing, although leakage to $\ket{20}$ persists and ultimately limits performance.

From each duration sweep, we extract the minimum $\epsilon_\mathrm{CZ}$ per $g_\mathrm{qc}$ and plot it versus coupling strength for several $T_1$ values in Fig.~\ref{fig:fig_sm_simulations}(h). For low $g_\mathrm{qc}$, gate errors drop rapidly due to decreasing gate duration and reduced incoherent error. At higher couplings, the trend saturates as leakage to $\ket{20}$ becomes the dominant contribution. The apparent noise in Fig.~\ref{fig:fig_sm_simulations}(h) is not numerical; rather, it reflects the sensitivity of leakage to subtle changes in the CZ gate duration. This is also evident in Fig.~\ref{fig:fig_sm_simulations}(i), where optimal CZ durations exhibit irregular switching between local minima.

\subsection{Connection to Main Text}

The single-qubit error data shown in Fig.~\ref{fig:fig_1}(b) are generated by fixing $T_1 = \SI{100}{\micro\second}$, determining the optimal $\sqrt{X}$ gate duration per $g_\mathrm{qc}$, averaging over the five $\Delta\omega$ values, and summing errors from both qubits. The CZ gate error data corresponds directly to Fig.~\ref{fig:fig_sm_simulations}(h) at $T_1 = \SI{100}{\micro\second}$.

To compute the Clifford error $\epsilon_\mathrm{Clifford}$, we combine the two contributions using a weighted sum based on Clifford decomposition: 4.65 $\sqrt{X}$ gates and 1.5 CZ gates per Clifford, following the prescription in Ref.~\cite{barends2014superconducting}, with additionally replacing $X$ and $Y$ rotations by $\pi/2$ pulses and virtual-$Z$ gates~\cite{mckay2017efficient}.

For Fig.~\ref{fig:fig_1}(c), we apply a bivariate spline interpolation to the $\epsilon_\mathrm{Clifford}(g_\mathrm{qc}, T_1)$ data, extract the minimum along the $g_\mathrm{qc}$ axis for each $T_1$, and plot the result as the solid black line indicating the optimal coupling strength.

\section{DYNAMICS OF THE REPEATED DIABATIC CZ GATE}
\label{sec:appendix:popex_dynamics}
In this section, we elaborate on the theory behind the standard error amplification experiment [Fig.~\ref{fig:fig_2}(a) and (b)], as well as its extensions for characterizing leakage to the $\ket{02}$ state and the coupler. Firstly, the appropriate frame of reference should be chosen.

\subsection{Choice of the frame of reference}
\label{subsec:appendix:lab_frame}
In this and Sections~\ref{sec:appendix:npopex_model} and~\ref{sec:appendix:out_of_subspace_dynamics}, we primarily work in the laboratory reference frame, unless specified otherwise. This choice is motivated by the nature of the diabatic CZ gate, which is implemented by dynamically tuning the frequencies of the coupler and qubit $\mathrm{Q}_1$~\cite{marxer2023long}. As a result, in the laboratory frame, each of the repeated CZ gates in the same gate sequence can be described by the same unitary $\hat{U}_g$ as in Eq.~\eqref{eq:U_g_definition} of the main text. Moving to a standard computational frame, defined by the frequencies of the two qubits, would cause the underlying Hamiltonian $\hat{H}$ to acquire additional time dependence as a function of the difference of the qubit frequencies. Consequently, each consecutive CZ in the same sequence would be described by a different unitary in the $\mathrm{span}( \{\ket{11}, \ket{02} \})$ subspace. Thus, choosing to instead work in the laboratory frame (or any interaction frame associated with the same frequency $\omega$ for both qubits) greatly simplifies the derivations.

This choice has minor implementation consequences which will be described later in Appendix~\ref{subsec:appendix:palea_phase_tracking}.

\subsection{Standard error amplification}
The goal of this section is to analyze the dynamics of the standard error amplification experiment shown in Fig.~\ref{fig:fig_2}(a) and (b) of the main text. In particular, we seek to understand why the signal-to-noise ratio (SNR) of the population oscillations becomes small near the optimal coupler flux pulse amplitude -- rendering this method inefficient for minimizing leakage.

To this end, we calculate the expectation value $V_n$ after applying $n$ repeated CZ gates to the initial state $\ket{11}$. We start from the unitary $\hat{U}_g$ defined in Eq.~\eqref{eq:U_g_definition} in the main text, acting on the $\mathrm{span}( \{\ket{11}, \ket{02} \})$ subspace and decomposed as $\hat{U}_g = \Rop_z(\alpha) \Rop_x(\theta) \Rop_z(\beta)$, where $\theta$ governs the population transfer between $\ket{11}$ and $\ket{02}$. The expectation value for the population remaining in $\ket{11}$ after $n$ applications of $\hat{U}_g$ is then:
\begin{equation}\label{eq.npopex_reg_expectation}
\begin{aligned}
    V_n(\theta, \alpha, \beta) = \left|\braket{11 | \left[\hat{U}_g(\theta, \alpha, \beta) \right]^n | 11}  \right|^2 \\
    = \left|\braket{11 | \left[\Rop_z(\alpha + \beta) \Rop_y(\theta )\right]^n | 11}  \right|^2.
\end{aligned}
\end{equation}
In the second line, we use the identity
\begin{equation}\label{eq:rz_absorption}
\begin{aligned}
    &\left( \Rop_z(\alpha) \Rop_x(\theta ) \Rop_z(\beta)\right)^n = \\ 
    &\Rop_z(\alpha - \frac{\pi}2) \left( \Rop_z(\alpha + \beta)\Rop_y(\theta )\right)^n \Rop_z(\beta + \frac{\pi}2),
\end{aligned}
\end{equation}
and note that the leading and trailing $Z$ rotations do not affect the expectation value in a $Z$ basis measurement. We define $\phi = \alpha + \beta$ as the effective $Z$ rotation applied in each cycle. Since $\alpha$ and $\beta$ are not relevant for the operation of a CZ gate, $\phi$ remains uncalibrated and effectively random, but constant throughout the experiment.

The single cycle operator $\Rop(\phi, \theta) = \Rop_z(\phi) \Rop_y(\theta)$ can be diagonalized as
\begin{equation}
\Rop(\phi, \theta) = S(\phi, \theta) D(\phi, \theta) S^{-1}(\phi, \theta),
\end{equation}
where
\begin{equation}
D(\phi, \theta) =
\begin{pmatrix}
e^{i \mu} & 0 \\
0 & e^{-i \mu}
\end{pmatrix}
\end{equation}
\begin{equation}
S(\phi, \theta) =
\begin{pmatrix}
\sin(\theta/2) & \sin(\theta/2) \\
\cos(\theta/2) - e^{i(\frac{\phi}2 + \mu)} & \cos(\theta/2) - e^{i(\frac{\phi}2 - \mu)}
\end{pmatrix}.
\end{equation}
The matrices $D(\phi, \theta)$ and $S(\phi, \theta)$ are defined in the \{$\ket{11}, \ket{02}\}$ basis, and the angle $\mu$ satisfies
\begin{equation}\label{eq:mu_definition}
\cos \mu = \cos(\phi/2) \cos(\theta/2).
\end{equation}
Substituting into Eq.~\eqref{eq.npopex_reg_expectation}, the expectation value becomes:
{\small
\begin{equation}\label{eq:expansion_ncycles}
\begin{split}
\left|\braket{11 | \Rop^n(\phi, \theta) | 11} \right|^2 &= \left|
\begin{pmatrix}
1 & 0
\end{pmatrix}
\cdot S(\phi, \theta) D^n(\phi, \theta) S^{-1}(\phi, \theta) \cdot
\begin{pmatrix}
1 \\
0
\end{pmatrix}
 \right|^2 \\
&= \frac{\cos^2(\theta/2) - \cos(2 \mu) + \sin^2(\theta/2) \cos(2 n \mu)}{2 \sin^2 \mu},
\end{split}
\end{equation}
}
This expression can be written as $A + B\cos(2n\mu)$, where $A$ and $B$ are independent of $n$. The signal therefore consists of an oscillation with frequency $2\mu$ and amplitude $B$, which determines the SNR of the coherent error amplification experiment.

We can express the amplitude $B$ as:
 \begin{equation}\label{eq:sine_lorentzian}
     2B =  \frac{1}{1 + \left(\frac{\sin{\frac{\phi}2}}{\tan{\frac{\theta}2}} \right)^2}.
 \end{equation}
This is a Lorentzian function centered around $\phi = 0$, with a full width at half maximum (FWHM) of $2 \tan(\theta/2)$. For arbitrary $\phi$ and small $\theta$, the amplitude $B$ becomes small, leading to poor SNR and ineffective error amplification. This explains the low contrast observed near the optimal coupler flux pulse amplitude (where $\theta$ is small) in Fig.~\ref{fig:fig_2}(b) of the main text.

Finally, we note that the signal oscillates at the frequency $2\mu$ per gate, which satisfies $\theta \leq 2\mu \leq \pi$. Eq.~\eqref{eq:mu_definition} shows that $2\mu = \theta$ is satisfied for $\phi = \alpha + \beta = 0$, but in general $\mu$ depends on the uncontrolled angle $\phi$. This means that the frequency of the measured signal is systematically larger than $\theta$, making it difficult to accurately extract the leakage angle based on the oscillation frequency alone, even if the lowered SNR is overcome by increasing the number of repetitions in the experiment.

\subsection{Extensions to the standard error amplification}\label{subsec:appendix:floquet}
\begin{figure}[tb]
    \centering
    \includegraphics[width=1\columnwidth]{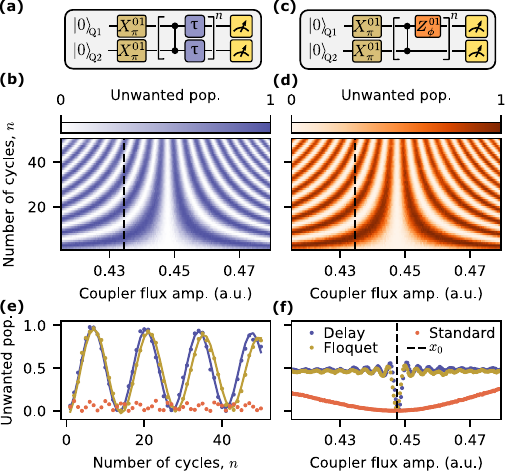}
    \caption{
        (a) Circuit diagram for error amplification with an additional delay $\tau$ inserted after each CZ gate. The system is initialized to $\ket{11}$ using an $X_\pi^{01}$ gate on each qubit. During the delay, the system evolves under the free Hamiltonian.
        (b) Measured unwanted population in the $\ket{02}$ state as a function of the number of cycles $n$, and the coupler flux pulse amplitude.
        (c) Circuit diagram of the Floquet calibration method. A $Z_\phi^{01}$ rotation is interleaved between the CZ gates using two $X_\pi^{01}(\gamma)$ pulses with a relative phase difference $\gamma_0 - \gamma_1 = -\phi/2$ on $\mathrm{Q}_1$.
        (d) Measured unwanted population in the $\ket{02}$ state as a function of CZ gate number and coupler flux amplitude using Floquet calibration. Both (b) and (d) use three-state single-shot readout corrected for assignment error.         
        (e) Vertical slices from (b), (d), and the standard error amplification experiment in Fig.~\ref{fig:fig_2}(b) (main text) at selected flux pulse amplitudes (dashed lines). 
        (f) Unwanted population averaged over the number of cycles, as a function of the coupler flux pulse amplitude. The vertical dashed line indicates the center offset $x_0$ of Lorentzian fits (not shown).
    }
    \label{fig:fig_sm_6}
\end{figure}
For completeness, in this section we present two known extensions to the standard error amplification experiment  that enable accurate measurement of leakage to the $\ket{02}$ state. A key requirement for these methods is that the CZ gate itself remains unmodified during error amplification. Both extensions aim to compensate for a nonzero angle $\phi = \alpha + \beta$, which arises due to the levels $\ket{11}$ and $\ket{02}$ being non-resonant during the final calibrated gate and the additional free evolution during the added buffer time. If $\phi$ is effectively canceled -- such that $\phi \ll \theta$ -- then the signal amplitude in Eq.~\eqref{eq:sine_lorentzian} becomes large, allowing reliable amplification of leakage error.

The first approach, shown in Fig.~\ref{fig:fig_sm_6}(a), introduces a fixed delay $\tau$ after each CZ gate. The delay is chosen to satisfy $\tau ( \omega_{\ket{11}} - \omega_{\ket{02}}) \approx \phi$, where $\omega_{\ket{11}}$ and $\omega_{\ket{02}}$ are the angular frequencies of the corresponding energy levels. This compensates for the accumulated $Z$ rotation in the laboratory frame. Fig.~\ref{fig:fig_sm_6}(b) shows experimental data for a fixed delay of \SI{15.667}{\nano\second}. Near the optimal coupler flux amplitude, the signal amplitude is significantly larger than in the standard error amplification experiment (Fig.~\ref{fig:fig_2}(b) of the main text).

A more robust alternative is Floquet calibration~\cite{arute2020observationseparateddynamicscharge}, where a $Z$ rotation operation with angle $-\phi$ is interleaved between CZ gates to counteract $\phi$, see Fig.~\ref{fig:fig_sm_6}(c). The $Z$ rotation is implemented on $\mathrm{Q}_1$ via two $R_{\gamma}^{01} (\pi) = \Rop_z(\gamma) \Rop_x(\pi) \Rop_z(-\gamma) $ gates with a relative phase difference $\gamma_0 - \gamma_1 = -\phi/2$, effectively realizing an $\Rop_z(-\phi)$ gate. 

We note that in practice the microwave pulses implementing the gates are not zero duration. As a result, there is an additional constant component to the implemented $Z$ rotation angles originating from the free Hamiltonian evolution and possibly other effects, for example a microwave-induced AC Stark shift~\cite{Carroll2022ACStark}. Since the true laboratory frame $\phi$ is unknown, and these additional components only affect the reference, it is always possible to experimentally determine the optimal nominal $Z$ rotation angle. Fig.~\ref{fig:fig_sm_6}(d) shows data using Floquet calibration with an interleaved nominal $Z$ rotation of \SI{-1.122}{\radian}.

Both methods perform significantly better than the standard experiment. As illustrated in Fig.~\ref{fig:fig_sm_6}(e), the oscillation amplitudes at near-optimal coupler flux amplitudes are markedly increased. This results in sharper signals, enabling precise determination of the optimal coupler flux amplitude, as shown in Fig.~\ref{fig:fig_sm_6}(f), where a Lorentzian fit is used to extract the optimal amplitude $x_0$.

However, these extensions are not practical for fine calibration tasks. The protocols require precise tuning of the delay $\tau$ or $Z$ rotation angle $\phi$, because the range over which the SNR is enhanced -- determined by Eq.~\eqref{eq:sine_lorentzian} -- is narrow. Such tuning is time-consuming and sensitive to fluctuations, even if performed adaptively. Moreover, varying gate parameters in an optimization routine would necessitate adaptation of these compensating operations at each point, limiting their usefulness.

\section{PALEA MODEL}
\label{sec:appendix:npopex_model}

In this section, we present the theoretical framework underlying PALEA and demonstrate its specific application to measuring the leakage from $\ket{11}$ to $\ket{02}$. Finally, we compare the accuracy and precision of PALEA to the other error amplification protocols discussed in the previous sections. In Appendix~\ref{subsec:appendix:npopex_echo_out_of_subspace} we show that PALEA is insensitive to other leakage processes, enabling accurate calibration of a single exchange process at a time.

Similarly to the MEADD protocol~\cite{Gross2024}, PALEA can be generally applied to any excitation-preserving entangling gate as long as small off-resonant exchange rate between the two levels can be assumed. The only two required components are the ability to implement the dynamical decoupling step and to average the relevant phase. The protocol is also robust to miscalibration in single qubit rotations used.

\subsection{Defining the components of the PALEA model}
\begin{figure}[tb]
    \centering
    \includegraphics[width=1\columnwidth]{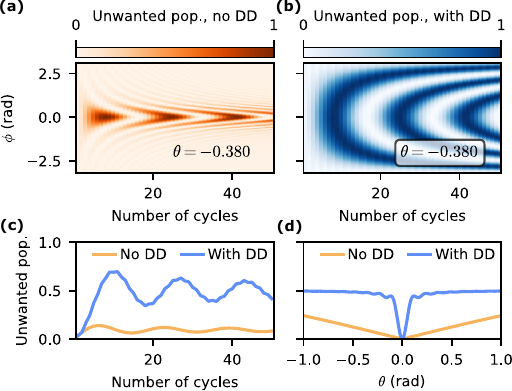}
    \caption{
    (a), (b) Unwanted population as a function of the number of cycles and the $Z$ rotation angle $\phi$, for an over-rotation angle $\theta = -0.380$. Panel (a) shows the case without dynamical decoupling (DD), corresponding to a Floquet-type calibration, while panel (b) shows the case with a DD layer, corresponding to the MEADD-type calibration.
    (c) Unwanted population averaged over $\phi$, plotted as a function of the number of cycles.
    (d) Unwanted population averaged over both $\phi$ and the number of cycles, plotted as a function of the overrotation angle $\theta$.
    }
    \label{fig:npopex_model_comparison}
\end{figure}
Following the introduction of the PALEA protocol in Fig.~\ref{fig:fig_2}(c), we aim to justify the \cref{eq:palea_dd_definition,eq:npopex_expectation,eq:I_n_def_V_n}. First, we define the dynamical decoupling (DD) step $D$ using the single qubit operators describing phased $\pi$ rotations in appropriate subspaces:

\begin{equation}
    D = \Rop_z^{12}(\phi_1)\Rop_x^{12}(\pi)\Rop_z^{12}(-\phi_1) \otimes \Rop_z^{01}(\phi_0)\Rop_x^{01}(\pi)\Rop_z^{01}(-\phi_0),
\end{equation}
where $\Rop_z^{12}$ and $\Rop_x^{12}$ denote rotations around the $Z$ and $X$ axis in the $\mathrm{span}( \{\ket{1}, \ket{2} \})$ subspace, and $\Rop_z^{01}$ and $\Rop_x^{01}$ are the corresponding rotations in the $\mathrm{span}( \{\ket{0}, \ket{1} \})$ subspace.
The dynamical decoupling operator $D$ is designed to keep the state in the subspace $\mathrm{span}( \{\ket{11}, \ket{02} \})$:
\begin{align}
\begin{split}
D\ket{11} =  - e^{i(\phi_1 - \phi_0)}\ket{02}  & \\
D\ket{02}  =   - e^{i(\phi_0 - \phi_1)}\ket{11},
\end{split}
\end{align}
and its action can be written as: 

\begin{equation*}
    D = \Rop_z(\phi_0 - \phi_1) \Rop_x(\pi)  \Rop_z(\phi_1 - \phi_0),
\end{equation*}
recovering Eq.~\eqref{eq:palea_dd_definition}.

Next, for rotations of angle $\pi$ we have $\Rop_z(\lambda) \Rop_x(\pi) =\Rop_x(\pi) \Rop_z(-\lambda)$ for any angle $\lambda$, which allows us to combine the action of $D$ and $\hat{U}_g$ -- defined in Eq.~\eqref{eq:U_g_definition} -- into a single cycle unitary:
\begin{equation}
    \begin{aligned}
    D\hat{U}_g &= \Rop_z(\phi_0 - \phi_1) \Rop_x(\pi)  \Rop_z(\phi_1 - \phi_0) \Rop_z(\alpha)\Rop_x(\theta)\Rop_z(\beta) \\
               &= \Rop_z(2\phi_0 - 2\phi_1 - \alpha - \pi) \Rop_x(\pi - \theta)\Rop_z(\beta + \pi).
    \end{aligned}
\end{equation}
Finally, using Eq.~\eqref{eq:rz_absorption} we recover Eq.~\eqref{eq:npopex_expectation}:

\begin{equation*}
    V_n(\theta, \Delta \phi) = \left|\braket{11 | \left[\Rop_z(\Delta \phi) \Rop_x(\pi - \theta )\right]^n | 11}  \right|^2,
\end{equation*}
where $\Delta\phi = 2 \phi_0 - 2 \phi_1 - \alpha + \beta$.

The importance of the added $\pi$ rotation may be intuitively understood by careful inspection. Eq.~\eqref{eq:npopex_expectation} is a special case of Eq.~\eqref{eq.npopex_reg_expectation}, and the solution proceeds identically. According to Eq.~\eqref{eq:mu_definition}, the cycle eigenvalues satisfy:
\begin{equation}\label{eq:palea_eigenvalue}
    \cos \mu = \cos (\Delta \phi /2) \cos(\pi/2 - \theta/2) = \cos (\Delta \phi /2) \sin(\theta/2).
\end{equation}
Since the measured population depends only on $|2\mu|$, we may, without loss of generality, choose the positive branch $0 \le \mu \le \pi/2$. Taking into account the possible values of $\cos(\Delta\phi/2)$, this leads to the tighter bound $\pi - \theta \le 2\mu \le \pi$, which spans a narrow range if $\theta \ll \pi$.

Thus, Eq.~\eqref{eq:npopex_expectation} describes an iterated two-level rotation in the $\mathrm{span}( \{\ket{11}, \ket{02} \})$ subspace, where the rotation axis is nearly orthogonal to the $Z$ axis when the over-rotation angle $\theta$ is small. 

Following a derivation analogous to Eq.~\eqref{eq:sine_lorentzian}, we obtain the signal amplitude:
 \begin{equation}\label{eq:sine_lorentzian_palea}
     2B = \frac{1}{1 + \left(\frac{\sin{(\Delta \phi/2)}}{\tan (\pi/2 - \theta /2)} \right)^2} \approx 1.
 \end{equation}
This indicates that the iterated rotation yields near-maximal contrast between the populations of $\ket{11}$ and $\ket{02}$ as a function of the number of cycles $n$ for most values of $\Delta\phi$.

Comparing the PALEA protocol to the Floquet calibration discussed in Appendix~\ref{subsec:appendix:floquet}, we find that the cycle unitary in the absence of a dynamical decoupling (DD) layer is dominated by the $Z$ component -- unless the combined phase $\phi = \alpha + \beta \ll \pi$. Consequently, the signal amplitude in such protocols is large only near $\phi \approx 0$ for most values of $\theta$, as illustrated in the simulation in Fig.\ref{fig:npopex_model_comparison}(a) for a fixed over-rotation angle $\theta = -0.380$. In contrast, when a DD layer is applied -- as in the MEADD protocol -- the signal amplitude remains large over a wide range of $\phi$ values and over-rotation angles, as shown in Fig.\ref{fig:npopex_model_comparison}(b).

Averaging over the angle $\phi$ corresponds to the "phase-averaged" aspect of PALEA. Without a DD layer, this averaging suppresses the signal significantly, resulting in a low-contrast response [Fig.~\ref{fig:npopex_model_comparison}(c)]. With a DD layer, however, the same averaging preserves high signal amplitude -- this is the core advantage of the PALEA protocol.

Finally, Fig.~\ref{fig:npopex_model_comparison}(d) shows the signal averaged over both $\phi$ and the number of cycles while sweeping the over-rotation angle $\theta$. In the presence of a DD layer, this produces a sharp and well-defined dip, enabling precise calibration. Without a DD layer, the dip becomes shallow and less informative.

\subsection{Phase tracking in the PALEA protocol}
\label{subsec:appendix:palea_phase_tracking}
The PALEA model is defined in the laboratory frame, for reasons outlined earlier in Appendix~\ref{subsec:appendix:lab_frame}. While the phase difference $\Delta \phi$ may vary between repetitions of the experiment, it must remain constant across the repeated cycles within a single gate sequence. Specifically, if the computational frame of each qubit is defined by the frequency of the microwave drive used to implement the required rotation, then in the laboratory frame the phase difference $\Delta \phi$ evolves at a frequency given by $\frac{1}{\hbar}(E_{\ket{11}} - E_{\ket{02}})$, where $E_{\ket{11}}$ and $E_{\ket{02}}$ are the energies of the respective states. In most experimental setups, the phase of the microwave reference used to implement the single qubit gates is allowed to progress freely at its frequency. Without correction, this causes $\Delta \phi$ to drift between cycles. To prevent this undesired evolution, one must track the accumulated phase and actively compensate for it. In practice, this can be achieved by regressing the microwave phase by $\phi_i$ after each cycle, with $\phi_1 = -\tau \Tilde{\omega}_1$ for qubit $\mathrm{Q}_1$ and $\phi_2 = -\tau (\Tilde{\omega}_2 + \alpha_2)$ for qubit $\mathrm{Q}_2$, where $\tau$ is the duration of one cycle. In most experimental frameworks it only requires setting the virtual Z rotations attached to the CZ gate to the above values.


\subsection{Solving for the expectation value in the PALEA model}\label{subsec:appendix:npopex_echo_solving}
Since the eigenvalues and eigenvectors of the cycle with the added $\pi$ rotation is less sensitive to $\Delta \phi$, averaging over $\Delta \phi$ results in a high contrast signal from which $\theta$ can be extracted. To justify this intuition and solve the integral in Eq.~\eqref{eq:I_n_def_V_n}, we prove the following lemma:

\begin{lemma}\label{lm:integral_proof}
For $\hat{U} \in \mathrm{SU}(2)$, $n \in \mathbb{N}$, the average of the matrix element
\begin{equation}\label{eq:av_definition}
\left|\braket{0 | \left[\Rop_z(\phi) \hat{U} \right]^n | 0}  \right|^2
\end{equation}
over $\phi$ is given by
\begin{equation}
I_n(\theta) = 1 - \sin^2(\theta/2) \sum_{m = 0}^{n - 1} P_m(\cos \theta)
\end{equation}
where $\theta$ is such that $\hat{U} = \Rop_z(\alpha) \Rop_y(\theta)\Rop_z(\beta)$ and $P_m$ is the $m$-th Legendre polynomial.
\end{lemma}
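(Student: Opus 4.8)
The plan is to reduce the $\phi$-average to an integral of the closed form already derived in Appendix~\ref{sec:appendix:popex_dynamics}, and then evaluate that integral through Chebyshev and Legendre generating functions.

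First I would strip the flanking $Z$-rotations. Since $\ket{0}$ is an eigenvector of every $\Rop_z$, one has, exactly as in the derivation of \eqref{eq:rz_absorption}, $[\Rop_z(a)W\Rop_z(b)]^n=\Rop_z(a)[W\Rop_z(a+b)]^n\Rop_z(-a)$ for any $W\in\mathrm{SU}(2)$, hence $\bigl|\braket{0|[\Rop_z(a)W\Rop_z(b)]^n|0}\bigr|^2=\bigl|\braket{0|[\Rop_z(a+b)\,W]^n|0}\bigr|^2$. Writing $\hat U=\Rop_z(\alpha)\Rop_y(\theta)\Rop_z(\beta)$, so that $\Rop_z(\phi)\hat U=\Rop_z(\phi+\alpha)\Rop_y(\theta)\Rop_z(\beta)$, and then using that the integrand is $2\pi$-periodic in $\phi$ to shift $\phi\mapsto\phi-\alpha-\beta$, this gives
\[
I_n(\theta)=\frac{1}{2\pi}\int_0^{2\pi}\bigl|\braket{0|[\Rop_z(\phi)\Rop_y(\theta)]^n|0}\bigr|^2\,\mathrm{d}\phi ,
\]
which is exactly the matrix element evaluated in \eqref{eq:expansion_ncycles}, with the eigenphase $\mu$ of \eqref{eq:mu_definition} satisfying $\cos\mu=\cos(\phi/2)\cos(\theta/2)$.

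Next I would massage that closed form. Using $\cos 2\mu=1-2\sin^2\mu$, $\cos^2(\theta/2)=1-\sin^2(\theta/2)$, and $1-\cos 2n\mu=2\sin^2 n\mu$, the right-hand side of \eqref{eq:expansion_ncycles} collapses to $1-\sin^2(\theta/2)\,(\sin n\mu/\sin\mu)^2=1-\sin^2(\theta/2)\,U_{n-1}(\cos\mu)^2$, where $U_{n-1}$ is the second-kind Chebyshev polynomial. Substituting $\psi=\phi/2$ and writing $c\coloneqq\cos(\theta/2)$, and invoking the standard identity $U_{n-1}(y)^2=\sum_{k=0}^{n-1}U_{2k}(y)$ (which telescopes from $2\sin\mu\sin((2k+1)\mu)=\cos 2k\mu-\cos(2k+2)\mu$), the lemma reduces to the single assertion
\[
\frac{1}{\pi}\int_0^\pi U_{2k}(c\cos\psi)\,\mathrm{d}\psi=P_k(2c^2-1)=P_k(\cos\theta),\qquad k\ge 0 .
\]

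This last identity is the crux and the step I expect to require the most care. I would establish it by multiplying by $z^{2k}$ and summing over $k$: on the left, $\sum_{k\ge0}U_{2k}(y)z^{2k}=\tfrac12\bigl[(1-2yz+z^2)^{-1}+(1+2yz+z^2)^{-1}\bigr]=(1+z^2)\bigl[(1+z^2)^2-4y^2z^2\bigr]^{-1}$, so that setting $y=c\cos\psi$ and using the elementary integral $\int_0^\pi(A-B\cos^2\psi)^{-1}\,\mathrm{d}\psi=\pi/\sqrt{A(A-B)}$ with $A=(1+z^2)^2$, $B=4c^2z^2$ yields $\frac{1}{\pi}\int_0^\pi\sum_{k\ge0}U_{2k}(c\cos\psi)z^{2k}\,\mathrm{d}\psi=\bigl[(1+z^2)^2-4c^2z^2\bigr]^{-1/2}=\bigl[1-2(2c^2-1)z^2+z^4\bigr]^{-1/2}$, which is precisely the Legendre generating function $\sum_{k\ge0}P_k(2c^2-1)z^{2k}$ evaluated at $z^2$. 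Matching the coefficient of $z^{2k}$ (legitimate for small $|z|$, where the series converges absolutely and uniformly in $\psi$ so that sum and integral commute) proves the identity, and since $2c^2-1=\cos\theta$ the pieces assemble into $I_n(\theta)=1-\sin^2(\theta/2)\sum_{m=0}^{n-1}P_m(\cos\theta)$. The only genuine subtleties are the choice of branch for $\mu$ — harmless because $\sin^2 n\mu/\sin^2\mu$ is a single-valued polynomial in $\cos\mu$ — and justifying the summation–integration interchange; everything else is a chain of classical special-function identities.
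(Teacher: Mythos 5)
Your proposal is correct, and it reaches the lemma by a genuinely different route in the central calculation. Both arguments begin the same way: you absorb the flanking $Z$ rotations (as in Eq.~\eqref{eq:rz_absorption}) so that the average reduces to $\frac{1}{2\pi}\int_0^{2\pi}\bigl|\braket{0|[\Rop_z(\phi)\Rop_y(\theta)]^n|0}\bigr|^2\,\mathrm{d}\phi$ with the eigenphase of Eq.~\eqref{eq:mu_definition}, and both start from the closed form of Eq.~\eqref{eq:expansion_ncycles}. From there the paper sums over the cycle number, forming the generating function $I(z;\theta)=\sum_n I_n(\theta)z^n$, exchanges the sum with the $\phi$ integral, evaluates the resulting integral by mapping it to a contour integral on the unit circle and applying the residue theorem, and only at the end reads off the Legendre series from $1/\sqrt{1-2z\cos\theta+z^2}$. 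You instead stay at fixed $n$: the rewriting of the integrand as $1-\sin^2(\theta/2)\,U_{n-1}(\cos\mu)^2$ and the telescoping identity $U_{n-1}^2=\sum_{k=0}^{n-1}U_{2k}$ reduce the lemma to the single Chebyshev--Legendre relation $\frac{1}{\pi}\int_0^\pi U_{2k}(c\cos\psi)\,\mathrm{d}\psi=P_k(2c^2-1)$, which you then prove with the even part of the Chebyshev generating function together with the elementary integral $\int_0^\pi(A-B\cos^2\psi)^{-1}\mathrm{d}\psi=\pi/\sqrt{A(A-B)}$; your algebra (including $\sin n\mu/\sin\mu=U_{n-1}(\cos\mu)$, the branch remark, and the uniform-convergence justification for swapping sum and integral at $|z|<1$) checks out. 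What your route buys is elementarity and modularity: no complex analysis, no formal power series over $n$, and the truncation of the answer to a partial Legendre sum of order $n$ is transparent because the reduction is term-by-term. What the paper's route buys is the closed-form generating function $I(z;\theta)$ for the whole family $\{I_n\}$ in one stroke, with the Legendre structure emerging automatically rather than being targeted via a known special-function identity. Either proof is acceptable; yours could even be slightly shorter to write out in full.
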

\begin{proof}
The average can be expressed as an integral using the ZYZ Euler angle decomposition for $\hat{U}$:

\begin{equation}\label{eq:I_n_definition}
I_n(\theta) = \frac{1}{2\pi} \int_{0}^{2\pi} \mathrm{d} \phi \left|\braket{0 | \left[\Rop_z(\phi) \Rop_z(\alpha) \Rop_y(\theta)\Rop_z(\beta) \right]^n | 0} \right|^2
\end{equation}
Note 
\begin{equation}\label{eq:angle_composition}
\begin{split}
\left|\braket{0 | \left[\Rop_z(\phi) \Rop_z(\alpha) \Rop_y(\theta)\Rop_z(\beta) \right]^n | 0}  \right|^2 = & \\= \left|\braket{0 | \left[\Rop_z(\phi + \alpha + \beta) \Rop_y(\theta) \right]^n | 0}  \right|^2
\end{split}
\end{equation}
utilizing Eq.~\eqref{eq:rz_absorption}, and we can define:
\begin{equation}
\Rop(\phi, \theta) = \Rop_z(\phi) \Rop_y(\theta).
\end{equation}
\subsubsection{Expanding the integral}

Following the derivation in Eqs.~\eqref{eq.npopex_reg_expectation} to \eqref{eq:expansion_ncycles}, the integrand of Eq.\ \eqref{eq:I_n_definition} can be expanded as
{\small
\begin{equation}
\begin{split}
\left|\braket{0 | \Rop^n(\phi, \theta) | 0} \right|^2 &= \left|
\begin{pmatrix}
1 & 0
\end{pmatrix}
\cdot S(\phi, \theta) D^n(\phi, \theta) S^{-1}(\phi, \theta) \cdot
\begin{pmatrix}
1 \\
0
\end{pmatrix}
 \right|^2 \\
&= \frac{\cos^2(\theta/2) - \cos(2 \mu) + \sin^2(\theta/2) \cos(2 n \mu)}{2 \sin^2 \mu},
\end{split}
\end{equation}
}
leading to
{\small 
\begin{equation}\label{eq:I_n_expanded_integrand}
I_n(\theta) = \frac{1}{2\pi} \int_{0}^{2\pi} \mathrm{d} \phi \frac{\cos^2(\theta/2) - \cos[2 \mu(\phi)] + \sin^2(\theta/2) \cos[2 n \mu(\phi)]}{2 \sin^2[\mu(\phi)]}
\end{equation}
}

\subsubsection{Generating function}
We define the generating function $I(z; \theta)$ as a formal power series
\begin{equation}
I(z; \theta) = \sum_{n = 0}^{\infty} I_n(\theta) z^n.
\end{equation}
Substituting here $I_n(\theta)$ from Eq.\ \eqref{eq:I_n_expanded_integrand}, we obtain
{\small
\begin{equation}
\begin{split}
I(z; \theta) = \frac{1}{4 \pi} \int_{0}^{2\pi} \mathrm{d} \phi \frac{\cos^2(\theta/2) - \cos[2 \mu(\phi)] }{\sin^2[\mu(\phi)]} \sum_{n = 0}^{\infty} z^n +  \\ + \frac{\sin^2(\theta/2)}{4 \pi} \int_{0}^{2\pi} \mathrm{d} \phi \frac{\sum_{n = 0}^{\infty} \cos[2 n \mu(\phi)] z^n}{\sin^2[\mu(\phi)]}.
\end{split}
\end{equation}
}
The first term contains the geometric series
\begin{equation}\label{eq:geometric_series}
\sum_{n = 0}^{\infty} z^n = \frac{1}{1 - z}
\end{equation}
while the second contains a related series
{\small
\begin{equation}
\begin{split}
\sum_{n = 0}^{\infty} \cos(2 n \mu) z^n &= \frac{1}{2} \sum_{n = 0}^{\infty} \left[\left(e^{2 i \mu} z \right)^n + \left(e^{-2 i \mu} z \right)^n \right] \\ &= \frac{1}{2} \left(\frac{1}{1 - e^{2 i \mu} z} + \frac{1}{1 - e^{-2 i \mu} z} \right) \\
&= \frac{1 - \cos(2 \mu) z}{1 - 2 \cos(2 \mu) z + z^2} \\
&= \frac{1 + z - 2 z \cos^2 \mu}{(1 + z)^2 - 4 z \cos^2 \mu}.
\end{split}
\end{equation}
}
Using these, the generating function becomes
{\small
\begin{equation}
\begin{split}
I(z; \theta) &= \frac{1}{4 \pi} \int_{0}^{2\pi} \mathrm{d} \phi \frac{\cos^2(\theta/2) - \cos[2 \mu(\phi)] }{\sin^2[\mu(\phi)]} \frac{1}{1 - z} \\
&+ \frac{\sin^2(\theta/2)}{4 \pi} \int_{0}^{2\pi} \mathrm{d} \phi \frac{1}{1 - \cos^2[\mu(\phi)]} \frac{1 + z - 2 z \cos^2[\mu(\phi)]}{(1 + z)^2 - 4 z \cos^2[\mu(\phi)]}.
\end{split}
\end{equation}
}
Performing partial fraction decomposition to the integrand of the second term by treating $\cos^2[\mu(\phi)]$ as the independent variable, we can further simplify this as
{\small
\begin{equation}
\begin{split}
I(z; \theta) = \frac{1}{1 - z} - \frac{z (1 + z)}{1 - z}\frac{\sin^2(\theta/2)}{2\pi} \\
\times \int_{0}^{2\pi} \frac{\mathrm{d} \phi}{(1 + z)^2 - 4 z \cos^2[\mu(\phi)]}.
\end{split}
\end{equation}
}

Based on Eq.\ \eqref{eq:mu_definition}, we have
{\small
\begin{equation}
\cos^2[\mu(\phi)] = \cos^2(\phi/2) \cos^2(\theta/2) = \frac{\cos^2(\theta/2)}{2} \left[1 + \cos( \phi) \right].
\end{equation}
}

Substituting then leads to
{\small
\begin{equation}
\begin{split}
I(z; \theta) = \frac{1}{1 - z} - \frac{z (1 + z)}{1 - z}\frac{\sin^2(\theta/2)}{2 \pi} \\
\times \int_{0}^{2 \pi} \frac{\mathrm{d} \phi}{(1 + z)^2 - 2 z \cos^2(\theta/2) + 2 z \cos^2(\theta/2) \cos \phi}.
\end{split}
\end{equation}
}

The remaining integral can be converted to a contour integral along the unit circle in the complex plane by defining $w = e^{i \phi}$, giving us
{\small
\begin{equation}
\begin{split}
I(z; \theta) = \frac{1}{1 - z} - \frac{z (1 + z)}{1 - z}\frac{\sin^2(\theta/2)}{2 \pi i} \\
\times \oint_{|w| = 1} \frac{\mathrm{d} w}{-z \cos^2(\theta/2) (w^2 + 1) + [(1 + z)^2 - 2 z \cos^2(\theta/2)] w}.
\end{split}
\end{equation}
}

The integrand has two first-order poles at
\begin{equation}
w_\pm = -1 + \frac{1 + z}{2z} \times \frac{1 + z \pm \sqrt{1 - 2 z \cos \theta + z^2}}{\cos^2(\theta/2)},
\end{equation}
but only $w_-$ is inside the unit circle. Using the residue theorem, we finally obtain
\begin{equation}\label{eq:generating_function_simplified}
I(z; \theta) = \frac{1}{1 - z} - \frac{z}{1 - z}\frac{\sin^2(\theta/2)}{\sqrt{1 - 2 z \cos \theta + z^2}}.
\end{equation}

\subsubsection{Extracting the integrals}
Utilising again the geometric series \eqref{eq:geometric_series} as well as the generating function 
\begin{equation}
\frac{1}{\sqrt{1 - 2 x t + t^2}} = \sum_{n = 0}^{\infty} P_n(x) t^n
\end{equation}
of the Legendre polynomials, we can expand the generating function \eqref{eq:generating_function_simplified} as
\begin{equation}
I(z; \theta) = \sum_{n = 0}^{\infty} z^n - \sin^2(\theta/2) \sum_{n, m = 0}^{\infty} P_m(\cos \theta) z^{n + m + 1}.
\end{equation}
Changing the summation variables from $(n, m)$ to $(n + m, m)$, this can be further written as
\begin{equation}
I(z; \theta) = 1 + \sum_{n = 1}^{\infty} \left[1 - \sin^2(\theta/2) \sum_{m = 0}^{n - 1} P_m(\cos \theta) \right] z^n.
\end{equation}
From here we can immediately read that
\begin{equation}\label{eq:I_0_simplified}
I_0(\theta) = 1
\end{equation}
and
\begin{equation}\label{eq:I_n_simplified}
I_n(\theta) = 1 - \sin^2(\theta/2) \sum_{m = 0}^{n - 1} P_m(\cos \theta)
\end{equation}
\end{proof}
According to the above lemma, the signal from the phase-averaged experiment is given by:

\begin{equation*}
\begin{split}
I_n(\pi -\theta) = 1 - \cos^2(\theta/2) \sum_{m = 0}^{n - 1} P_m(-\cos \theta) \\
= 1 - \cos^2(\theta/2) \sum_{m = 0}^{n - 1} (-1)^m P_m(\cos \theta)
\end{split}
\end{equation*}

using the parity condition of Legendre polynomials, and we finally recover Eq.~\eqref{eq:I_n_npopex_main}.
\subsection{Analyzing the robustness of PALEA}\label{subsec:appendix:npopex_echo_understanding}
\begin{figure}[tb]
    \centering
    \includegraphics[width=1\columnwidth]{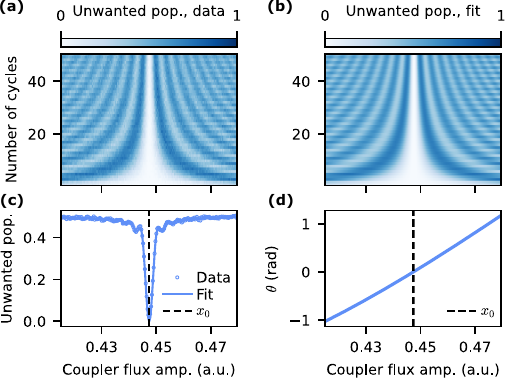}
    \caption{
    (a) Experimental data of the unwanted population as a function of the number of cycles and the coupler flux pulse amplitude. Here, population in the $\ket{02}$ ($\ket{11}$) state is considered unwanted (wanted) for even $n$, and population in $\ket{11}$ ($\ket{02}$) state is unwanted (wanted) for odd $n$.
    (b) Two-dimensional fit of the unwanted population as a function of the number of cycles and the coupler flux pulse amplitude to the model in Eq.~\eqref{eq:I_n_npopex_flipping} .  
    (c) Unwanted population averaged over the number of cycles as a function of the coupler flux pulse amplitude. The solid line indicates the fit to the PALEA model averaged over the number of cycles.
    (d) The over-rotation angles, $\theta$, as a function of the coupler flux pulse amplitude, extracted from experimental data as parameters of the fit shown in (b). The dashed line indicates the optimal coupler flux pulse amplitude, where $\theta=0$.
    }
    \label{fig:npopex_legendre_2d_fit}
\end{figure}
For small $\theta$, Eq.\ \eqref{eq:I_n_npopex_main} describes a signal flipping between 0 and 1. To regularize it, we can consider
\begin{equation}\label{eq:I_n_npopex_flipping}
\Tilde{I}_n(\theta) = (-1)^n\left[I_n(\pi -\theta) - \frac12\right] + \frac12
\end{equation}
instead. It can be shown to have a limiting form when $n \gg 1$ and $\theta \ll 1$:
\begin{equation}\label{eq:I_n_npopex_bessel}
\Tilde{I}_n(\theta) \approx \frac12 - \frac12 J_0(n\theta),
\end{equation}
where $J_0$ is a Bessel function of the first kind. This expression is easier to understand and analyze than the full model. Its maximum is approximately $0.7$, which is another way of confirming that higher SNR can be achieved for rotations which are nearly $\pi$, rather than nearly 0, seen in Fig. \ref{fig:npopex_model_comparison}.

If a measured quantity depends on the phase as a sine, $V \propto \sin(n \theta)$, the phase can only be estimated up to the Nyquist frequency $\frac{\pi}n$. However, the signal following Eq.~\eqref{eq:I_n_npopex_bessel} is not entirely periodic, and only assumes values between 0 and 0.35 once, for values of $n \theta < 1.8$. Thus if a high population of the wanted outcome state is measured after a large number of gates, one might be certain the angle $\theta$ must be small and its absolute value may be calculated from a single phase averaged measurement. This greatly simplifies calibration, as one might fix $n$ at a value guaranteeing required precision, and vary the gate parameters to unequivocally identify a region where $\left|\theta \right|$ is below a certain threshold. This is why a simple sweep of a coupler pulse amplitude, shown in Fig.~\ref{fig:fig_2}(d) works well as a quick calibration experiment.

Instead of fitting a Lorentzian, one might fit the full model [Eq.~\eqref{eq:I_n_npopex_main}] directly to the two-dimensional experimental data. The data seen in Fig. \ref{fig:fig_2} (d) and \ref{fig:npopex_legendre_2d_fit} (a) were collected using single-shot readout calibrated for three-state discrimination and corrected for assignment error. Since the ideal $\pi$ rotation would flip the state between $\ket{11}$ and $\ket{02}$, we define the correct outcome to be $\ket{11}$ for even $n$ and $\ket{02}$ for odd $n$. Then, the unwanted population is defined as:
\begin{equation}\label{eq:error_population}
    UP_n = 1 - \frac{\text{Number of correct outcomes}}{\text{Number of the outcomes 11 or 02}}
\end{equation}
It is an unbiased estimator in presence of incoherent processes causing the decay to the single- and zero-excitation subspaces~\cite{arute2020observationseparateddynamicscharge}. The variance increases as the total number of in-subspace repetitions decreases.  Fig.~\ref{fig:npopex_legendre_2d_fit}(b) shows the fit of the full model [Eq.~\eqref{eq:I_n_npopex_main}] to the data, assuming that $\theta$ depends on the coupler pulse amplitude polynomially. Fig.~\ref{fig:npopex_legendre_2d_fit}(c) shows the data and fit averaged over the number of cycles, and Fig.~\ref{fig:npopex_legendre_2d_fit}(d) the $\theta$ values extracted as fit parameters.

The Legendre series in Eq.~\eqref{eq:I_n_npopex_main} may make it seem this model would be slow to evaluate and therefore slow to fit to the data. However, the evaluation using an appropriate version of the Clenshaw algorithm using the Legendre polynomial recurrence relation~\cite{Clenshaw1955, Ng1968, Saffren1971} has time complexity $\mathcal{O}(n)$ for a single $n$ and thus $\mathcal{O}(n^2)$ for all numbers of cycles until $n$ and is implemented in the \texttt{numpy} Python library as \texttt{polynomial.legendre.legval}. One could also simply use the Bessel approximation from Eq.~\eqref{eq:I_n_npopex_bessel} for fitting.

The standard error in the estimate of $\theta$ is inversely proportional to the derivative:
\begin{equation}\label{eq:I_n_derivative}
    \frac{\partial \Tilde{I}_n}{\partial \theta} \approx -\frac{n}2 J_1(n \theta)
\end{equation}
The Bessel function $J_1(x)$ has an asymptotic form $J_1(x) \sim \left(\frac{x}{2}\right)$ for $x < 1$ \cite{Abramowitz1964Bessel}, so for $n \theta < 1$, the error in the estimate of $\theta$ follows $\mathcal{O}(1/n^2)$, which is the same scaling one could reach with a sine function.

For large $x$, $J_1(x)$ has an envelope given by the absolute value of the Hankel function of the first kind $H_1(x) = J_1(x) + iY_1(x)$, namely $J_1(x) < |H_1(x)|$, and $|H_1(x)|$ has an asymptotic form $|H_1(x)| \sim \sqrt{\frac{2}{\pi x}}$ \cite{Abramowitz1964Bessel}. 

In turn, for large $n$ we have $\frac{\partial \Tilde{I}_n}{\partial \theta} < \sqrt{\frac{n}{2\pi \theta}}$,  recovering the $\mathcal{O}(1/\sqrt{n})$ scaling. Thus, PALEA is at its most useful with regards to precision in characterization of $\theta$ when the product $n\theta$ is small and calibration of the set of pulse parameters for which $\theta = 0$. Fortunately, this is most of its use cases, because for large $\theta$s no error amplification is necessary.

To concretely demonstrate the sensitivity of PALEA, and verify these relationships for the full model  defined in Eq.~\eqref{eq:I_n_npopex_flipping}, further numerical simulations of a realistic calibration procedure may be performed, as demonstrated in the following section.

\subsection{Simulating the estimation power of PALEA}\label{subsec:appendix:npopex_echo_simulating}
\begin{figure}[tb]
    \centering
    \includegraphics[width=1\columnwidth]{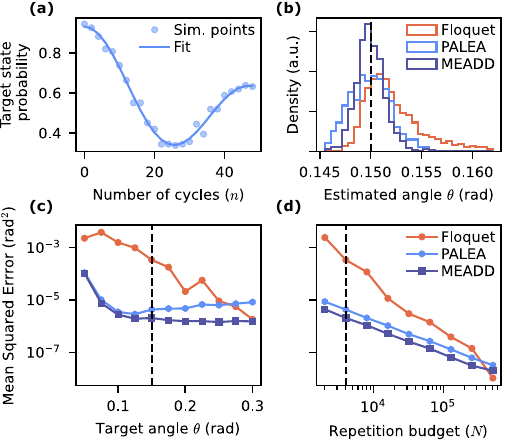}
    \caption{
    Simulations of PALEA, non-adaptive Floquet calibration, and hypothetical MEADD protocol assuming a finite number of repetitions and $0.05$ readout error.
    (a) Simulated frequencies of measuring the target state in PALEA, using total of 4000 repetitions with target angle $\theta=0.15$. The fit to the noisy simulated points is the source of the estimated angle.
    (b) Histograms of estimated $\theta$ values extracted from 1000 runs of Floquet calibration, MEADD and PALEA. MEADD and PALEA both estimate the angle correctly on average, Floquet calibration overestimates it, because the $Z$ angle rotation is not correct due to a finite number of repetitions, introducing a systematic error.
    (c) Mean Squared Error (MSE) produced by each estimator with 4000 total repetitions as a function of the target angle. For small angles, PALEA and MEADD are equally good, but then PALEA signal decreases because it is proportional to a Bessel function $J_0$ and not a cosine. Floquet calibration underperforms because of the systematic error.
    (d) MSE produced by each estimator with thet target angle $\theta=0.15$ as a function of the total repetition budget. MSE of PALEA and MEADD is proportional to $1/N$. The Floquet calibration struggles for small number of repetitions because of the compounding error from incorrect estimation of the $Z$ rotation angle, but it reaches the same sensitivity for large number of repetitions.
    }
    \label{fig:fig_sm_palea_vs_floquet}
\end{figure}
We aim to illustrate the ability of PALEA to measure the angle $\theta$ in the context of a realistic automatic calibration procedure. Such a procedure must be fast, that is, use a limited number of repetitions, each corresponding to one quantum measurement, and not require human intervention. We compare PALEA to the non-adaptive Floquet calibration using $Z$ rotations (Fig.~\ref{fig:fig_sm_6}(c)) and to a MEADD procedure we could perform if we were able to set the relative phases $\phi_0 - \phi_1$ of the dynamical decoupling layer $D$ [Eq.~\eqref{eq:palea_dd_definition}] in the PALEA circuit [Fig.~\ref{fig:fig_2}(c)]. 
We assume a constant repetition budget of $N$ total repetitions for each procedure. For each repetition, we calculate an ideal Born-rule probability $p$ of measuring the target state -- $\ket{11}$ for the procedures described in this work. We then generate a random bit from a Bernoulli distribution with probability $p$, and additionally, with probability $\epsilon$, corresponding to readout error, we flip this bit. The generated bits are then averaged to obtain a data point. If a procedure requires collecting $k$ data points to fit them with a model, the total amount of repetitions $N$ is split equally so that each point receives $N/k$ repetitions.

To perform the Floquet calibration, we need to first find the appropriate angle $\phi$ for the $Z$ rotation. We simulate an experiment using $10$ total cycles, sweeping the angle $\phi$ and fit the generated points with the model given in Eq.~\eqref{eq:expansion_ncycles}. Using the recovered estimate for $\phi$, we then simulate another experiment sweeping the number of cycles from 0 to 49, and fit the data with a sinusoidal model, extracting an estimate for $\theta$. We divide the repetition budget equally between the two experiments.

For MEADD, no additional calibrations are needed. We simulate two experiments using the PALEA circuit, fixing the angle $\phi_0 - \phi_1$ to $0$ for the first, $\pi/2$ for the second, following the iSWAP MEADD procedure from Ref.~\cite{Gross2024}. Because we are not interested in the sign of $\theta$, we skip the third circuit. The number of cycles is swept from $0$ to $48$ with step $2$. The repetition budget is divided equally between the two experiments and each realization of the procedure assumes a different, random value of $\alpha - \beta$ defined in Eq.~\eqref{eq:U_g_definition}. We fit the two data sets with sinusoidal models, extracting two angles $\mu_0$ and $\mu_1$, and calculate $\theta = \sqrt{\mu_0^2 + \mu_1^2}$ following a small angle approximation of Eq.~\eqref{eq:palea_eigenvalue}.

For PALEA, we simulate the signal averaged over $40$ values of $\phi_0 - \phi_1$ and fit the data with the model from Eq.~\eqref{eq:I_n_npopex_main}. The number of cycles is swept identically as in the MEADD simulation. An example of a generated signal and fit is shown in Fig.~\ref{fig:fig_sm_palea_vs_floquet}(a) using $N=4000$ and target $\theta=0.15$. 

We perform each procedure 1000 times, generating estimates of $\theta$ and present their histograms in Fig.~\ref{fig:fig_sm_palea_vs_floquet}(b). PALEA and MEADD both estimate $\theta$ correctly, but the non-adaptive Floquet calibration seems to systematically overestimate it. This is because the fitted angle $\mu$ follows Eq.~\eqref{eq:mu_definition}, which is larger than $\theta$ if $\phi \neq 0$ -- the estimation of $\phi$ is itself erroneous if not enough repetitions are used for it. This demonstrates the fundamental reason why this procedure is too resource-intensive to be used in automatic calibration -- the two estimation errors compound. 

The target angle $\theta$ is then varied, shown in Fig.~\ref{fig:fig_sm_palea_vs_floquet}(c), revealing that PALEA and MEADD have equal sensitivity for small angles, but later the lower signal in PALEA, following a Bessel function, causes lowered SNR compared to a pure sinusoidal oscillation, as discussed following Eq.~\eqref{eq:I_n_derivative}. The Floquet calibration yields an order of magnitude higher Mean Squared Error (MSE), because of the systematic error inherent to the procedure.

Finally, we investigate the MSE when the total repetition budget $N$ is varied. As shown in Fig.~\ref{fig:fig_sm_palea_vs_floquet}(d), the Floquet calibration eventually reaches the same MSE as the two other methods, but only after nearly a million total repetitions.

As discussed in Appendix~\ref{subsec:appendix:npopex_echo_understanding}, PALEA is at its best when estimating smaller angles $\theta$. This simulation does not include any decoherence errors, which would normally limit the total number of cycles performed, so the relevant quantity here is the product of the angle $\theta$ and the maximum number of cycles used. PALEA performs best when this product is approximately 5. In a realistic calibration procedure, a previous coarse experiment using just one CZ gate would have been performed, identifying a region where the angle $\theta$ is small. Thus the increased MSE for larger angles is not relevant for such a procedure. 

The PALEA procedure suffers from slightly lowered SNR in ideal conditions, but in realistic conditions the $\ket{11}$ - $\ket{02}$ dynamics are not the only dynamics present. The CZ gate suffers from out-of-subspace effects which may pollute the calibration, and averaging the phases might be helpful to reduce experimental sensitivity to this pollution.

\section{OUT-OF-SUBSPACE DYNAMICS}\label{sec:appendix:out_of_subspace_dynamics}
In addition to the main exchange dynamics between the $\ket{11}$ and $\ket{02}$ states, CZ gates might induce other unwanted transitions, such as leakage to the coupler or other neighbouring modes, if any are present. First, we will review the tools to measure the effect of those additional dynamics and then discuss how to prevent them from introducing systematic errors in the CZ gate calibration.
\subsection{Measuring coupler leakage}\label{subsec:appendix:measuring_coupler_leakage}
\begin{figure}[tb]
    \centering
    \includegraphics[width=1\columnwidth]{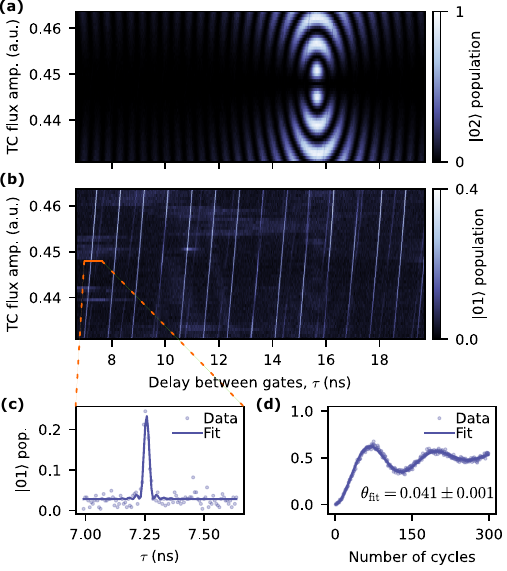}
    \caption{
        (a) Measured population of $\ket{02}$ and (b) $\ket{01}$ versus flux pulse amplitude of the tunable coupler (TC) and delay between CZ gates, $\tau$. The number of cycles is fixed to 30.
        (c) Zoom in on one of the peaks in population of $\ket{01}$ for the optimal CZ gate amplitude, indicated by the orange segment in (b). The solid line shows a fit to the model in Eq.~\eqref{eq:expansion_ncycles} with \SI{30} cycles.
        (d) Evolution of the population of $\ket{01}$ as a function of the number of cycles, using the delay $\tau =$ \SI{7.26}{\nano\second}. The solid line indicates a fit to a model including the coherent coupler leakage to $\ket{011}$ and the incoherent damping involving $\ket{001}$.
    }
    \label{fig:fig_sm_coupler}
\end{figure}
While the extensions of the standard error amplification experiments discussed in Appendix~\ref{sec:appendix:popex_dynamics} are less practical for leakage amplification calibration, they are effective tools for probing out-of-subspace dynamics, including coherent leakage to the coupler~\cite{sung2021realization}.

The goal is to observe coherent population exchange between the $\ket{101}$ and $\ket{011}$ states (with notation $\ket{\mathrm{Q}_1, \mathrm{TC}, \mathrm{Q}_2}$) -- i.e., excitation transfer from $\mathrm{Q}_1$ to the coupler. This two-level system, with interaction strength $g_\mathrm{1c}$, can be described with a single Hamiltonian:
\begin{equation}\label{eq:coupler_hamiltonian_def}
    \hat{H}/\hbar = \begin{pmatrix}
        0 & g_{1\mathrm{c}} \\
        g_{1\mathrm{c}} & \omega_\mathrm{c} (t) - \omega_1
    \end{pmatrix},
\end{equation}
where $\omega_\mathrm{c} (t)$ is the time dependent coupler frequency and $\omega_1$ is the static qubit frequency. Although the coupling $g_{1\mathrm{c}}$ depends on $\omega_\mathrm{c}(t)$ as $g_{1\mathrm{c}}(t) = \beta_{1\mathrm{c}} \sqrt{\omega_1 \omega_\mathrm{c}(t)}$~\cite{marxer2023long}, we approximate it as constant. Transforming to an interaction frame with $\hat{H}_0(t) = (\omega_\mathrm{c}(t) - \omega_1) \ket{011}\bra{011}$ and corresponding unitary
\begin{equation}
    \hat{U}_\mathrm{0} =  e^{-i\int^t \mathrm{d}t'\hat{H}_0/\hbar}  = \exp(- \frac{i}2 \hat{I} \left[ -\omega_1 t + \int^t \mathrm{d}t' \omega_\mathrm{c}(t') \right] ), 
\end{equation}
where $\hat{I}$ is the identity operator, yields a Landau-Zener Hamiltonian:
\begin{equation}
\label{eq:H_landau-zener}
    \hat{H}/\hbar  = \frac12 \left[ (\omega_\mathrm{c} (t) - \omega_1) \hat{\sigma}_z + 2g_{1c} \hat{\sigma}_x\right].
\end{equation}

Analogously to Eq.~\eqref{eq.npopex_reg_expectation}, the unitary $U_\mathrm{c} = e^{-i\int^t \mathrm{d}t'\hat{H}/\hbar}$ can be decomposed into $\Rop_z$ and $\Rop_y$ rotations as:
\begin{equation}\label{eq:coupler_unitary}
    \hat{U}_\mathrm{c}  = \Rop_z (\alpha_\mathrm{c}) \Rop_x (\theta_\mathrm{c}) \Rop_z (\beta_\mathrm{c}),
\end{equation}
where angle $\theta_\mathrm{c}$ quantifies the leakage process between the $\ket{101}$ and $\ket{011}$ states, and the angles $\alpha_\mathrm{c}$ and $\beta_\mathrm{c}$ are the $Z$ axis rotation angles. Again, the goal is to amplify the rotation $\Rop_x (\theta_\mathrm{c})$ for accurate measurement of the coupler leakage.

Although Slepian-shaped flux pulses reduce the leakage to the coupler~\cite{martinis2014fast, sung2021realization}, some leakage will persist. To coherently amplify the angle $\theta_\mathrm{c}$, the total $Z$ rotation $\alpha_\mathrm{c} + \beta_\mathrm{c}$ must vanish, as is evident from Eq.~\eqref{eq:coupler_unitary}. This can be accomplished by inserting delays or phase corrections between the CZ gates. Here, we choose the delay approach, shown in Fig.~\ref{fig:fig_sm_6}(a).

We perform the experiment using 30 cycles of CZ gates interleaved with delays. The measurement of $\ket{02}$ [Fig.~\ref{fig:fig_sm_coupler}(a)] reveals the coherent exchange between the $\ket{11}$ and $\ket{02}$ states described earlier (vertical slices of this figure correspond to horizontal slices of Fig.~\ref{fig:fig_2}(b) and Fig.~\ref{fig:fig_sm_6}(b) ). In contrast, the population of $\ket{01}$, seen in Fig.~\ref{fig:fig_sm_coupler}(b), shows diagonal stripes. We identify these stripes as leakage to the $\ket{011}$ state, i.e., leakage to the coupler. A Fourier transform of this signal along the delay axis reveals a frequency of \SI{1.129}{\giga\hertz}, matching the qubit–coupler detuning (\SI{4.799} - \SI{3.672}{\giga\hertz} = \SI{1.127}{\giga\hertz}; see Table~\ref{tab:parameters}), confirming that the peaks originate from the coupler mode. The precise delays at which the peaks occur depend on the amplitude of the coupler flux pulse approximately linearly. Note that the experiment is effectively a discrete pulsed spectroscopy, revealing a plethora of possible processes, which interfere with the coupler leakage process.

Next, we focus on characterizing the leakage error using the amplitude corresponding to the final calibrated CZ gate, which minimizes the $\ket{02}$ leakage. We select the delay of \SI{7.26}{\nano\second}, at which the coupler leakage process is amplified (Fig.~\ref{fig:fig_sm_coupler}(c)), by modelling the data using Eqs.~\eqref{eq:mu_definition}-\eqref{eq:expansion_ncycles}, and setting $\phi = \tau (\omega_1 - \omega_c)$.

Finally, we measure the population of $\ket{01}$ versus the number of CZ gates, shown in Fig.~\ref{fig:fig_sm_coupler}(d). The measured population consists of the sum of the $\ket{011}$ state population, involved in the coherent process, and $\ket{001}$ state population, where the higher frequency qubit may have incoherently decayed. We model the $\ket{01}$ population assuming the $\ket{101}$ - $\ket{011}$ is the only coherent exchange, characterized by $Y$ and $Z$ rotation angles, accompanied by the incoherent amplitude and phase damping dynamics. We obtain $\theta_\mathrm{c} = 0.041 \pm 0.001$, $\alpha_\mathrm{c} + \beta_\mathrm{c} = 0.025 \pm 0.001$. The signal in Fig.~\ref{fig:fig_sm_coupler}(d) decays mostly due to dephasing in the $\mathrm{span}( \{\ket{101}, \ket{011} \})$ subspace. Additionally, due to limitation in the temporal resolution of the used AWG, the $Z$ rotation component is not fully cancelled, resulting in a slightly reduced population oscillation amplitude.

We note that since only the states of $\mathrm{Q}_1$ and $\mathrm{Q}_2$ are measured, distinguishing coupler leakage from leakage to other modes becomes more difficult in larger QPUs, where unmeasured spectator modes may interfere with calibration \cite{miao2023overcoming}. In our case, the device is a small test chip optimized for high-fidelity operation, with the coupler as the only relevant leakage mode.

Unfortunately, methods requiring the use of $X$ rotations -- such as PALEA -- could not be used to investigate coupler leakage, as the device is missing a dedicated driveline for the coupler. However, Floquet calibration methods can reveal and quantify leakage to modes outside the computational subspace -- such as to the coupler -- with enough measurement time available. They are however less effective when the goal is to quickly and efficiently isolate and quantify leakage to the $\ket{02}$ state as part of the automated calibration procedure. In such cases, leakage to other spectator modes can interfere with the measurement, potentially distorting the inferred $\ket{02}$ leakage and leading to misleading calibration outcomes. Hence, there is a need for a more robust method that selectively amplifies the target leakage process while remaining insensitive to unwanted dynamics.

\subsection{Coupler leakage effects in PALEA}\label{subsec:appendix:npopex_echo_out_of_subspace}
\begin{figure}[tb]
    \centering
    \includegraphics[width=1\columnwidth]{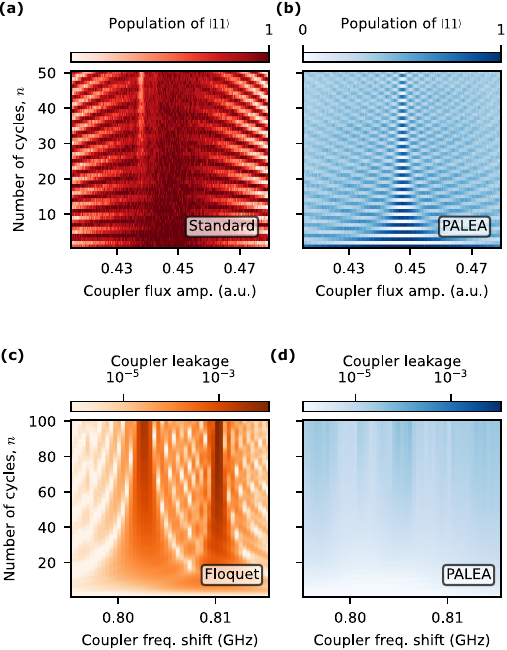}
    \caption{
    (a) Measured population of the state $\ket{11}$ in the standard error amplification experiment, same measurement as in Fig.~\ref{fig:fig_2}(b). Visible vertical line at the amplitude 0.437 is a signature of coupler leakage. (b) Measured population of the state $\ket{11}$ in PALEA, same measurement as in Fig.~\ref{fig:fig_2}(d). No coupler leakage signatures visible.
    (c), (d) Simulated total population of the coupler leakage states $\ket{011}$ and $\ket{112}$ as a function of number of cycles and the coupler frequency shift, using (c) the Floquet calibration sequence and (d) PALEA. 
    Simulations include the lowest four states of both qubits and the coupler, as well as realistic device parameters. The optimal Slepian pulse CZ gate uses coupler frequency shift \SI{0.807}{\giga\hertz}
    }
    \label{fig:npopex_simulations}
\end{figure}
The out-of-subspace dynamics, such as coupler leakage, may bleed into the measurement of the $\ket{11}$ - $\ket{02}$ dynamics, interfering with the data. The goal of this section is to investigate the effect of these dynamics on the results of the PALEA protocol and to confirm that it solves this problem.

Fig.~\ref{fig:npopex_simulations}(a) shows the $\ket{11}$ state population from the same run as Fig.~\ref{fig:fig_2}(b). The coupler leakage signatures also shown in Fig.~\ref{fig:fig_sm_coupler}(b), bleed into this measurement. In contrast, no such traces appear in the $\ket{11}$ state population when measured with PALEA. An intuitive understanding of this may rely on the fact that the amplification of the coupler population occurs only for certain repeated $Z$ rotations in the $\mathrm{span}( \{\ket{101}, \ket{011} \})$ subspace, and PALEA averages over all of them, converting the coupler leakage to incoherent background. This is not entirely trivial however, as the dynamical decoupling step $D$ would excite any potential coupler leakage state $\ket{011}$ to a 4-excitation state $\ket{112}$. Assuming the energy levels $\ket{101}, \ket{002}, \ket{011}$ are all approximately on resonance during the gate, $\ket{112}$ would be the highest energy 4-excitation state. During idling however, this state is lower energy than all 4-excitation states involving the coupler in the ground state. Therefore all these states are crossed by $\ket{112}$ during the coupler flux pulse. The exact dynamical analysis is thus too complicated to be tackled analytically. We observe no pulse-amplitude-dependent coupler leakage in experiments, so we perform numerical simulations. We compare the PALEA dynamics to those of phase-averaged Floquet calibration circuits discussed in Appendix~\ref{subsec:appendix:measuring_coupler_leakage}. Fig.~\ref{fig:npopex_simulations}(c) shows the total population of the states $\ket{011}$ and $\ket{112}$ in a Floquet calibration experiment, where the $Z$ rotation angle has been chosen to show the worst-case scenario leakage. Simulation of the same gates, but using the PALEA circuit, is shown in Fig.~\ref{fig:npopex_simulations}(d). PALEA converts the coherent coupler leakage traces to a two orders of magnitude lower incoherent decay with the number of cycles.

To conclude the analysis of PALEA, we note that by converting various interfering processes into noise, PALEA can be considered robust to coherent interference. Thus by introducing averaging to MEADD, we trade a slightly worse SNR for additional robustness to interfering processes.

\section{OTHER COHERENT ERRORS IN CZ GATE}
\label{sec:appendix:other_coherent}

\subsection{SWAP error}
\label{subsec:appendix:swap_errors}
A CZ gate is an example of a general excitation preserving gate. In addition to two-excitation dynamics discussed above, single-excitation dynamics might additionally contribute to the overall coherent error. These dynamics may contribute to the $ZZ$ rotation angle and the single qubit $Z$ rotations, but those are calibrated separately. For this reason we assume the SWAP error mechanism is a unitary rotation around the $XX + YY$ operator:
\begin{equation}
    \mathcal{E}_s = 
    \begin{bmatrix}
        1 & 0 & 0 & 0 \\
        0 & \cos(\theta_s/2) & -i \sin(\theta_s/2) & 0\\
        0 &  -i \sin(\theta_s/2) & \cos(\theta_s/2) & 0\\
        0 & 0 & 0 & 1
    \end{bmatrix}
\end{equation}
where we define SWAP angle to be $\theta_s$. The simplest way to calculate the gate fidelity contribution is to calculate the entanglement fidelity of the above operator with identity~\cite{Nielsen_2002}, noting that it commutes with the ideal CZ gate. Defining $\ket{\phi}$ as a Bell state in the two copies of the system:
\begin{equation}
    \left| \phi \right\rangle = \frac12 (\left| 00\right\rangle + \left| 11 \right\rangle + \left| 22 \right\rangle + \left| 33 \right\rangle)
\end{equation}
we have
\begin{equation}
\begin{aligned}
F_e (\mathcal{E}_s)=\left|\left\langle \phi \right|I \otimes \mathcal{E}_s\left| \phi \right\rangle\right|^2 &= \left|\frac14 (2 + 2\cos \left(\frac{\theta_s}2 \right))\right|^2 \\
&= \cos^4 \left(\frac{\theta_s}4 \right)
\end{aligned}
\end{equation}
Thus the SWAP contribution to average gate infidelity is
\begin{equation}\label{eq:swap_error_contribution}
\begin{aligned}
    1 - F (\mathcal{E}_s) &= 1 - \frac{4 F_e (\mathcal{E}_s) + 1}{4+1} \\
    &= \frac45 - \frac45 \cos^4 \left(\frac{\theta_s}4 \right)
\end{aligned}
\end{equation}
We measure the SWAP angle using PALEA, preparing the $\ket{10}$ state and using two phased-$X_{\pi}^{01}$ gates as the dynamical decoupling step. Thus the experimental circuit is now identical to one of the MEADD~\cite{Gross2024} circuits, and only differs by the applied phase averaging. The SWAP angle is measured after setting the width of the coupler pulse and calibrating the coupler and qubit pulses for the $ZZ$ rotation angles and the $\ket{11}$ - $\ket{02}$ dynamics. The data analysis is identical to PALEA analysis, replacing the counts of 11 and 02 with counts of 10 and 01. 

We find the SWAP angle below \SI{0.1}{\radian} for all measured widths, and below \SI{0.04}{\radian} for widths longer than \SI{21}{\nano\second}, which does not limit the gate fidelity, see Fig.~\ref{fig:fig_sm_2}(c). We therefore validate the design choices described in Appendix~\ref{sec:appendix:qubit-coupler_coupling}.

\subsection{Contribution to average gate fidelity from coupler leakage}\label{subsec:appendix:coupler_fidelity_contribution}
In Appendix~\ref{subsec:appendix:measuring_coupler_leakage} we demonstrated a measurement of a coupler leakage angle  $\theta_{\mathrm{c}} = 0.041 \pm 0.001$, but we did not calculate the expected contribution to fidelity from this process. The exchange between the basis states $\ket{101}$ and $\ket{011}$ takes the system state from a computational to a non-computational state, making the entire process non-trace preserving. However, if the coupler is traced out, it could be understood as an incoherent process in which the state $\ket{11}$ simply decays to $\ket{01}$. Making that approximation, the average gate infidelity contribution may be calculated by assuming the coupler leakage is an error channel $\mathcal{E}_{\mathrm{c}}$ such that $\mathcal{E}_{\mathrm{c}} (\ket{11}) = \cos ( \theta_\mathrm{c} /2) \ket{11} + \sin ( \theta_\mathrm{c} /2) \ket{01}$, and $\mathcal{E}_{\mathrm{c}} (\ket{10}) = \cos ( \theta_\mathrm{c} /2) \ket{10} + \sin ( \theta_\mathrm{c} /2) \ket{00}$, but it does not affect the other two states. Following the derivation above, it results in the formula identical to Eq.~\eqref{eq:swap_error_contribution}, which for the reported value of $ \theta_{\mathrm{c}}$ yields $\frac45 - \frac45 \cos^4 \left(\theta_{\mathrm{c}}/4 \right) = \SI{1.7(0.1)e-4}{}$. 

This value could be understood as part of the difference between the reported error per ID gate of $\epsilon_\mathrm{ID} = \SI{4.1(0.2)e-4}{}$ and the incoherent contribution to CZ error $\epsilon_\mathrm{CZ}^\mathrm{incoh} = \SI{6.2e-4}{}$ reported in Section~\ref{subsec:cz_error_decomposition}, and represents another way of looking at the same process. Either the qubits hybridise with the coupler, and the excitations in that basis decay with increased rates, or the qubits leak to the coupler's excited state and then the coupler decays.

\subsection{Leakage to $\ket{20}$ state}
\label{subsec:appendix:leakage_20}
For strong couplings and short gate durations, leakage to $\ket{20}$, where the lower qubit's second excited state is populated, may limit the gate fidelity according to simulations. Note that this state's detuning from $\ket{11}$ is approximately double that of the detuning between $\ket{01}$ and $\ket{10}$, while the effective coupling increases by a factor of $\sqrt{2}$. 

We could use either PALEA or the Floquet calibration with delays between the CZ gates, analogous to the one used to measure the coupler leakage shown in Fig.~\ref{fig:fig_sm_coupler}, to measure the rate of exchange between $\ket{11}$ and $\ket{20}$. Note however the assumptions of the model are only satisfied if the stronger exchange in the $\mathrm{span}( \{\ket{11}, \ket{02} \})$ subspace can be eliminated; otherwise the experiments need to be adjusted to take into account three-level dynamics. One such modification could be to utilize a qutrit dynamical decoupling step~\cite{Tripathi2025Qutrit} on both transmons, with a cyclic $X_3 = X_{\pi}^{01} X_{\pi}^{12}$ gate on one and $X_3^{\dagger}$ on the other, cycling through states $\ket{11}$, $\ket{02}$ and $\ket{20}$ --- all of this is however beyond the scope of this work.

We find, using both PALEA and the delay-based amplification, that for gates shorter than \SI{21}{\nano\second}, the coherent leakage to $\ket{20}$ can be measured to be below \SI{0.03}{\radian} per gate, not limiting the fidelity, while for longer gates it is too small to be measurable. At those durations, the leakage to $\ket{02}$ cannot however be entirely eliminated, and is larger than the leakage to $\ket{20}$, violating the model. We conclude that a more appropriate method is to measure the total average leakage rate using the interleaved Leakage Randomized Benchmarking~\cite{wood2018quantification, Rol2019Leakage}, see Section~\ref{subsec:cz_error_decomposition} of the main text. We note that the leakage to the lower qubit's $\ket{2}$ in an RB circuit may originate from the direct leakage to $\ket{20}$, but it can also come from more complicated processes such as leakage to $\ket{02}$, excitation to $\ket{12}$ and population transfer to $\ket{21}$.

\section{CZ GATE OPTIMIZATION}
\label{sec:appendix:cz_gate_optimization}
\begin{figure}[tb]
    \centering
    \includegraphics[width=1\columnwidth]{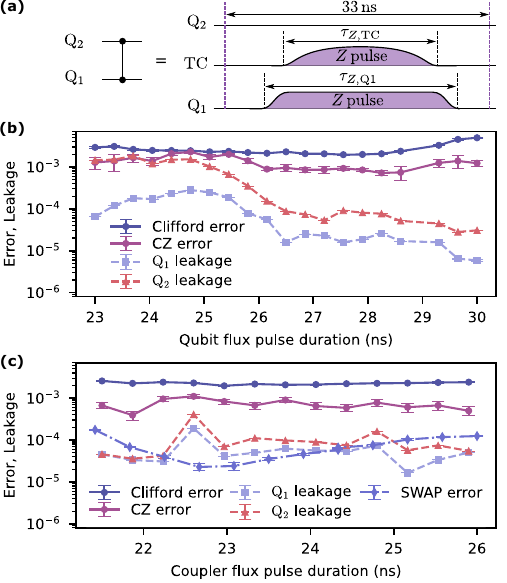}
    \caption{
    (a) Circuit and pulse diagram of the CZ gate. A Slepian-shaped flux pulse with duration $\tau_{Z,\,\mathrm{TC}}$ is applied to the tunable coupler (TC), while a flat-top cosine pulse with a rising edge time of \SI{1}{\nano\second} and a duration $\tau_{Z,\,\mathrm{Q1}}$ is applied to $\mathrm{Q}_1$. $\mathrm{Q}_2$ remains idle during the gate. Virtual $Z$ rotations (not shown) are applied afterward on each qubit to correct for dynamic phases.
    (b) Measured Clifford error, CZ gate error, and leakage to the $\ket{2}$ state for $\mathrm{Q}_1$ and $\mathrm{Q}_2$ as a function of the qubit flux pulse duration $\tau_{Z,\,\mathrm{Q1}}$.  
    (c) Measured Clifford error, CZ gate error, $\ket{2}$ state leakage for both qubits, and SWAP error, plotted as a function of the coupler flux pulse duration $\tau_{Z,\,\mathrm{TC}}$.
    }
    \label{fig:fig_sm_2}
\end{figure}

While calibrating the qubit and coupler flux pulse amplitudes already yields high CZ gate fidelities, further improvement requires careful optimization of the pulse durations. Specifically, both the qubit and coupler flux pulse durations affect leakage to the $\ket{2}$ states and SWAP-type errors, and thus must be tuned to minimize total gate error. The CZ gate circuit and pulse timing are illustrated in Fig.~\ref{fig:fig_sm_2}(a).

\subsection{Optimization of the Qubit Flux Pulse Duration}

We begin by fixing the coupler flux pulse duration to \SI{21}{\nano\second}, a value suggested by simulations to yield high-fidelity gates [see Fig.~\ref{fig:fig_sm_simulations}(g)]. We then sweep the qubit flux pulse duration $\tau_{Z,\,\mathrm{Q1}}$, recalibrate the CZ gate at each step using the PALEA protocol and related calibrations, and finally benchmark the gate using iterative IRB. The results are shown in Fig.~\ref{fig:fig_sm_2}(b).

We extract the $\ket{2}$ state leakage for both qubits using three-state discrimination. For $\mathrm{Q}_1$, this is defined as the total population in states $\ket{20}$, $\ket{21}$, and $\ket{22}$; analogously for $\mathrm{Q}_2$. The leakage per CZ gate is estimated by fitting a linear model to the leakage extracted via iterative IRB, similar to how the CZ fidelity per gate is determined in Section~\ref{sec:cal_opt_and_benchmark_cz_gates} of the main text.

At long qubit pulse durations (\SIrange{28}{30}{\nano\second}), both the Clifford and CZ gate errors rise significantly, despite low leakage. This increase is attributed to coherent errors caused by uncompensated flux pulse distortions in the nanosecond time scales. Since the total gate duration is fixed to \SI{33}{\nano\second}, longer qubit pulses reduce the buffer time between consecutive gates, allowing pulse distortions to bleed into subsequent operations. These distortions manifest as a large quadratic term in the iterative IRB fit.

For short qubit durations (\SIrange{23}{25}{\nano\second}), the dominant error source becomes leakage to the $\ket{2}$ state of $\mathrm{Q}_2$. This behaviour also possibly stems from nanosecond-scale flux distortions, which affect the optimal shape of the coupler pulse. As the Slepian-shaped coupler pulse assumes a static qubit frequency, distortions in the qubit pulse introduce non-ideal Slepian pulse parameters.

The optimal trade-off between leakage and coherent errors is found at a qubit flux pulse duration of approximately \SI{27}{\nano\second}, where both leakage and overall CZ gate error are minimized. At this point, $\ket{2}$ state leakage in both qubits remains below $10^{-4}$.

\subsection{Optimization of the Coupler Flux Pulse Duration}

In the second sweep, we fix the qubit flux pulse duration to \SI{27}{\nano\second} and vary the coupler flux pulse duration $\tau_{Z,\,\mathrm{TC}}$. The results of this sweep, shown in Fig.~\ref{fig:fig_sm_2}(c), exhibit a mostly flat response in Clifford error, CZ gate error, and $\ket{2}$ state leakage.

Because the CZ gate is largely coherence-limited in this regime, the purpose of this sweep is to avoid regions of enhanced leakage -- possibly caused by interactions with weakly coupled parasitic modes -- which appear at seemingly random durations. One such region is seen near \SI{22.6}{\nano\second}, where leakage to the $\ket{2}$ state increases sharply for both qubits, reducing overall gate fidelity. The lowest observed CZ error occurs at a coupler flux duration of approximately \SI{22}{\nano\second}, which we adopt for the final gate parameters reported in the main text [Section~\ref{sec:cal_opt_and_benchmark_cz_gates}].

SWAP-type errors, as described in Appendix~\ref{subsec:appendix:swap_errors}, show an oscillatory behaviour as a function of the coupler flux pulse duration (Fig.~\ref{fig:fig_sm_2}(c)), similar to the oscillations seen in the simulations in Fig.~\ref{fig:fig_sm_simulations}(g). Since the errors due to the SWAP-type errors are mostly at levels below \SI{1e-4}{}, they are negligible with our setup. For coupler flux pulse durations shorter than \SI{20}{\nano\second} (not shown in the figure), the SWAP-type errors become the limiting error source for the CZ gate.

\section{INCOHERENT ERROR OF THE CZ GATE}
\label{sec:appendix:coherence_limit_cz_gate}
\begin{figure}[tb]
    \centering
    \includegraphics[width=1\columnwidth]{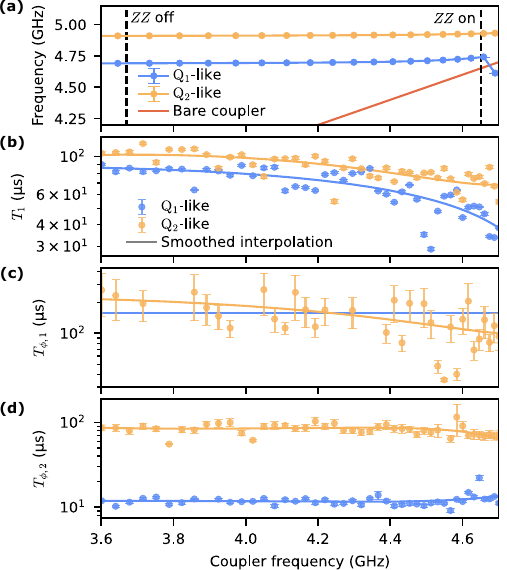}
    \caption{
    (a) Qubit-like mode frequencies and bare coupler frequency as a function of the coupler frequency. Vertical dashed lines indicate the $ZZ$ off and $ZZ$ on configurations.
    (b) Energy relaxation time $T_1$ of both qubits as a function of coupler frequency.  
    (c) Exponential pure dephasing time $T_{\phi,1}$ extracted from Ramsey experiments. Due to $\mathrm{Q}_1$ being dominated by Gaussian dephasing noise, $T_{\phi,1}$ is assumed to be constant at its idling value for $\mathrm{Q}_1$.
    (d) Gaussian pure dephasing time $T_{\phi,2}$ extracted from the same experiment. Dashed lines in (b)–(d) represent smoothed interpolations.
    }
    \label{fig:fig_sm_9}
\end{figure}

To estimate the coherence-limited CZ gate fidelity, we measure the coherence times of the hybridised qubit modes as a function of the coupler frequency and compute an effective average over the flux pulse trajectory, following the approach in Ref.~\cite{marxer2023long}.

Fig.~\ref{fig:fig_sm_9}(a) shows the hybridised qubit frequencies and the bare coupler frequency versus the coupler frequency. The vertical dashed lines indicate the operating points corresponding to the idling configuration ($ZZ$ off) and the CZ gate ($ZZ$ on). During the CZ gate, the coupler is flux-pulsed close to the qubit frequencies to enable strong coupling. To mitigate leakage to the coupler mode, a Slepian-shaped flux pulse is used.

We measure the energy relaxation time $T_1$ by preparing each qubit in its excited state with an $X_\pi$ pulse, then flux pulsing it (along with the coupler) to the same frequencies used during the CZ gate. The qubit flux pulse has the same envelope as during the CZ gate, while the coupler is pulsed with a flat-top cosine envelope, with a rise time of \SI{20}{\nano\second} to ensure adiabatic transitions. This measurement is repeated for each qubit individually, and $T_1$ is extracted via exponential fits to the time-resolved population decay. The results are shown in Fig.~\ref{fig:fig_sm_9}(b).

To extract pure dephasing times, we perform a similar Ramsey experiment with an initial $\sqrt{X}$ pulse. The resulting signal is fitted with the model
\begin{equation}
    p(t) = a(t) \exp\left(-\frac{t}{2 T_1} - \frac{t}{T_{\phi,1}} - \left( \frac{t}{T_{\phi,2}} \right)^2 \right),
\end{equation}
where $T_{\phi,1}$ and $T_{\phi,2}$ are the Gaussian and exponential components of the dephasing time, and $T_1$ is fixed from prior measurements. The exponential and Gaussian dephasing components are shown in Figs.~\ref{fig:fig_sm_9}(c) and (d). 

For $\mathrm{Q}_1$, which is flux-pulsed away from its sweet spot by approximately \SI{80}{\mega\hertz}, the Gaussian component dominates, making $T_{\phi,1}$ extraction unreliable. Therefore, we assume $T_{\phi,1}^\mathrm{Q1}$ to be constant and equal to its value of \SI{158}{\micro\second} at the idling point.

We apply a smoothing interpolation (rather than direct interpolation) to the coherence data to filter out short-term fluctuations, such as those from parasitic two-level systems (TLSs), which are known to cause temporary dips in coherence.

Using the smoothed data, we compute effective coherence times during the CZ gate by integrating the coherence rate along the flux pulse trajectory. The extracted effective values are:
\begin{align*}
    T_{1, \mathrm{eff}}^\mathrm{Q1} &= \SI{63}{\micro\second}, & T_{\phi, 1, \mathrm{eff}}^\mathrm{Q1} &= \SI{157}{\micro\second}, & T_{\phi, 2, \mathrm{eff}}^\mathrm{Q1} &= \SI{12}{\micro\second}, \\
    T_{1, \mathrm{eff}}^\mathrm{Q2} &= \SI{83}{\micro\second}, & T_{\phi, 1, \mathrm{eff}}^\mathrm{Q2} &= \SI{145}{\micro\second}, & T_{\phi, 2, \mathrm{eff}}^\mathrm{Q2} &= \SI{83}{\micro\second}.
\end{align*}

The total incoherent error contribution to the CZ gate is then calculated using the model in Ref.~\cite{abad2025impactofdecoherence}:
\begin{align}
\label{eq:effective_incoherent_error}
    \epsilon_\mathrm{CZ}^\mathrm{incoh} ={}& \frac{3}{10} \frac{\tau}{T_{1, \mathrm{eff}}^\mathrm{Q1}} 
    + \frac{3}{8} \frac{\tau}{T_{\phi, 1, \mathrm{eff}}^\mathrm{Q1}} 
    + \frac{3}{8} \left( \frac{\tau}{T_{\phi, 2, \mathrm{eff}}^\mathrm{Q1}} \right)^2 \nonumber \\
    +& \frac{1}{2} \frac{\tau}{T_{1, \mathrm{eff}}^\mathrm{Q2}} 
    + \frac{31}{40} \frac{\tau}{T_{\phi, 1, \mathrm{eff}}^\mathrm{Q2}} 
    + \frac{31}{40} \left( \frac{\tau}{T_{\phi, 2, \mathrm{eff}}^\mathrm{Q2}} \right)^2,
\end{align}
where $\tau = \SI{33}{\nano\second}$ is the total CZ gate duration.

Substituting the effective coherence times into Eq.~\eqref{eq:effective_incoherent_error} yields an estimated incoherent error of $\epsilon_\mathrm{CZ}^\mathrm{incoh} = \SI{6.2e-4}{}$, in close agreement with the experimentally measured value of $\epsilon_\mathrm{CZ} = \SI{6.5e-4}{}$.

\section{Measuring QNDness}
\label{sec:appendix:qndness}

Quantum non-demolition (QND) measurement refers to a measurement process that preserves the quantum state of the system, meaning the qubit remains in the eigenstate to which it was projected by the readout operation~\cite{lupascu2007quantum}. This property is especially desirable for mid-circuit measurements in quantum error correction, as it removes the need to reset and reinitialize the qubit afterward.

To quantify the QND nature of the readout, we define a “QNDness” metric $\mathcal{Q}$, which reflects the probability that the qubit remains in its original state after measurement, independent of the measurement outcome~\cite{lupascu2007quantum, hazra2025benchmarking}:
\begin{equation}
    \mathcal{Q} = \left( \mathcal{Q}_\mathrm{g} + \mathcal{Q}_\mathrm{e} \right) /2,
\end{equation}
with
\begin{equation}
    \begin{split}
        \mathcal{Q}_\mathrm{g} &= P(\mathrm{g}, 0|\mathrm{g}) + P(\mathrm{g}, 1|\mathrm{g}) \\
        \mathcal{Q}_\mathrm{e} &= P(\mathrm{e}, 0|\mathrm{e}) + P(\mathrm{e}, 1|\mathrm{e}),
    \end{split}
\end{equation}
and $P(a, b|c)$ denotes the probability that the qubit ends up in state $a$ and is assigned outcome $b$, given the qubit is initially in state $c$. For instance, $P(\mathrm{g}, 1|\mathrm{g})$ is the probability of the qubit remaining in $\ket{g}$ but being misassigned to $\ket{1}$.

The aim of this section is to experimentally evaluate the QND character of our readout operations. To this end, we adopt the readout-induced leakage benchmarking (RILB) protocol introduced in Ref.~\cite{hazra2025benchmarking}, with modifications in the analysis, which allow us to extract not only the leakage rate, but also the QNDness $\mathcal{Q}$.

Directly measuring $\mathcal{Q}$ is challenging due to the interplay of various error mechanisms. These include assignment errors such as $P(\mathrm{g}, 1|\mathrm{g})$ and $P(\mathrm{e}, 0|\mathrm{e})$, transition errors such as $P(\mathrm{e}|\mathrm{g})$ and $P(\mathrm{g}|\mathrm{e})$, and leakage to higher excited states such as $P(\mathrm{f}|\mathrm{g})$ and $P(\mathrm{f}|\mathrm{e})$. The RILB protocol is designed to disentangle these effects by analyzing the decay of different correlation types between measurements. These decay rates are sensitive to leakage and transition errors but largely independent of assignment errors, making it possible to extract $\mathcal{Q}$ and the leakage.

\subsection{RILB experiment}
\label{subsec:rilb_experiment}

\begin{figure}[tb]
    \centering
    \includegraphics[width=1\columnwidth]{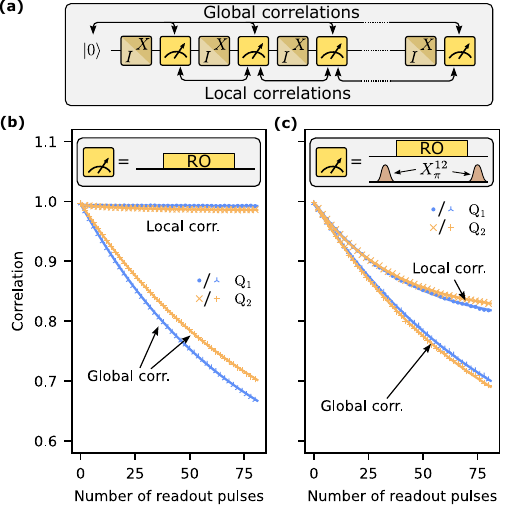}
    \caption{
        (a) Circuit diagram of the readout-induced leakage benchmarking (RILB) experiment. Readout operations are interleaved with randomized bit-flips. Global correlations indicate the correlations between the initial state and expected $n$th readout outcome, whereas local correlations indicate the expected correlations between two consecutive readout outcomes.
        (b), (c) RILB experiment performed on $\mathrm{Q}_1$ and $\mathrm{Q}_2$ with (a) a readout operation consisting of only a readout pulse (RO) optimized for QNDness and (b) a readout operation using the shelving technique via an $X_\pi^{12}$ pulse before and after the readout pulse. The experiments are performed and averaged over 500 randomized bit-flip sequences. The global and local correlations are fitted with exponential decay models. 
    }
    \label{fig:fig_sm_10}
\end{figure}

The circuit diagram for the RILB experiment is shown in Fig.~\ref{fig:fig_sm_10}(a). The protocol consists of $n$ repeated readout operations, interleaved with randomized bit-flips. The outcome of the $n$th measurement is denoted $r_n \in \{0, 1\}$, and the applied pre-measurement operation is $i_n \in \{0, 1\}$, where 1 indicates a bit-flip ($X$ gate) and 0 indicates the identity operation ($I$ gate).

To extract the local correlations between consecutive measurements, we compute the binary "flipped-or-not" quantity:
\[
o_n = r_n \oplus r_{n-1},
\]
and compare it to the applied operation to get the local correlation:
\[
\mathcal{C}_n^\mathrm{local} = 1 - o_n \oplus i_n,
\]
where $\oplus$ denotes the XOR operation. This definition matches that of Ref.~\cite{hazra2025benchmarking} up to a scaling factor.

The original RILB protocol does not allow one to extract the QNDness directly. To solve this issue, 
we extend the protocol to calculate another quantity from the same experimental data. We define a global correlation, which compares the state after $n$ rounds to the initial state $\ket{0}$. Let
\[
 h_n = i_0 \oplus i_1 \oplus \cdots \oplus i_n,
\]
then the global correlation is defined as:
\[
\mathcal{C}_n^\mathrm{global} = 1 - r_n \oplus h_n .
\]

Following the computation of the correlations, we average them with respect to all possible bit-flip sequences, obtaining the expectation values $\left<\mathcal{C}_n^\mathrm{local} \right>$ and $\left<\mathcal{C}_n^\mathrm{global} \right>$. While the correlations are straightforward to compute, modeling their decay as a function of $n$ is non-trivial. In the next sections, we derive analytical models for the expectation values of both global and local correlation decay, allowing us to extract $\mathcal{Q}$, leakage, and seepage.

\subsection{Derivation of the global correlation model}
We use a simple classical hidden Markov chain model to describe the global correlation, assuming the qubit can be in one of $k + 2$ states: $\ket{g}$, $\ket{e}$ or some leakage state $\ket{l_i}$, as in general, more than one leakage state may be populated during the measurement~\cite{hazra2025benchmarking}. We note this assumption of Markovianity may be broken for example if the readout resonator becomes increasingly populated with photons during the experiment. We further assume the single qubit gates used are perfect. We do allow a state preparation error, but assume the qubit is never prepared in a leakage state. 

We start the derivation by defining the bare transition probabilities which do not depend on the assigned outcome:
\begin{equation}
\begin{aligned}
    p_g &= P(\mathrm{e}, 0|\mathrm{g}) + P(\mathrm{e}, 1|\mathrm{g}) \\
    p_e &= P(\mathrm{g}, 0|\mathrm{e}) + P(\mathrm{g}, 1|\mathrm{e}) \\
    L_{g, i} &= P(\mathrm{l_i}, 0|\mathrm{g}) + P(\mathrm{l_i}, 1|\mathrm{g}) \\
    L_{e, i} &= P(\mathrm{l_i}, 0|\mathrm{e}) + P(\mathrm{l_i}, 1|\mathrm{e}) \\
    S_{g, i} &= P(\mathrm{g}, 0|\mathrm{l_i}) + P(\mathrm{g}, 1|\mathrm{l_i}) \\
    S_{e, i} &= P(\mathrm{e}, 0|\mathrm{l_i}) + P(\mathrm{e}, 1|\mathrm{l_i}) 
\end{aligned}
\end{equation}
as well as assignment probabilities, which we assume only depend on the state of the qubit after the readout:
\begin{equation}
\begin{aligned}
    e_0 &= P(0 | \mathrm{e} ) \\
    e_1 &= P(1 | \mathrm{g} ) \\
    w_i &= P(0 | \mathrm{l_i} ) \\
\end{aligned}
\end{equation}
This is a somewhat risky assumption, but in practice we will see $w_i$-s do not enter the final formula. The $e_i$-s can be interpreted as readout SNR errors or overlap errors. 

A hidden Markov chain model~\cite{Rabiner1986AnIT, ghahramani2001hmm} requires an initial state column vector $\mathbf{v}_0$, a transition matrix $T$ and a row emission probability vector $\mathbf{b}$ used to extract the required observation from the model. Thus the expected value of the global (local) correlation will be given by:
\begin{equation}
\begin{aligned}
    \left<\mathcal{C}_n^\mathrm{global} \right> &= \mathbf{b}_g T^n \mathbf{v}_0, \\
    \left<\mathcal{C}_n^\mathrm{local} \right> &= \mathbf{b}_l T^n \mathbf{v}_0, \\
\end{aligned}
\end{equation}
and we need to find $T$ and the two emission probability vectors $\mathbf{b}_g$ and $\mathbf{b}_l$ corresponding to global and local correlations. Assuming Markovianity, the global and local correlation expectations must only depend on the expected state after $n$ steps. The emission probability vectors can be understood as linear functions from the set of states to $\mathbb{R}$, which are used to calculate those expectations. Without specifying the vectors $\mathbf{b}_g$ and $\mathbf{b}_l$, the above equation is simply a statement that those functions are in fact linear, which is a simple consequence of the definition of the expected value.

We define a stochastic population vector $\mathbf{v}$ containing the total probabilities of the system being in a wanted, unwanted or any of the leakage states across all possible realizations of the experiment. The wanted state for a specific realization of the experiment is defined with reference to the number of $X$-gates performed thus far, which is $\ket{g}$ for an even number of $X$-gates preceding the last measurement or $\ket{e}$ for an odd number of $X$-gates. Then, $v^0$ is the total probability that the system is in the wanted computational state and $v^1$ is the total probability that the system is in the unwanted computational state, that is, whatever $v^0$ is not. Finally, $v^{i+2}$ is the probability the system is in the $i$-th leakage state. Thus the length of this vector is $k+2$, where $k$ is the number of leakage states communicating with $\ket{g}$ or $\ket{e}$. 
The initial population vector is
\begin{equation}
    \mathbf{v}_0 = \begin{bmatrix}
        1 - h \\
        h \\
        0 \\
        \vdots
    \end{bmatrix},
\end{equation}
where $h$ is the state preparation error.

Due to a non-standard definition of the basis for the vector $\mathbf{v}$, we need to take additional care to show that the vector of probabilities after $n$ readouts is indeed equal to $T^n \mathbf{v}_0$ for some square stochastic matrix $T$. Formally, we will demonstrate it using the principle of the mathematical induction. 

Clearly, the state before any readouts is the initial state $T^0 \mathbf{v}_0 = \mathbf{v}_0$. We need to show that, assuming the state vector after $n$ readouts is $\mathbf{v}_n = T^n\mathbf{v}_0$, the vector describing the state probabilities after $n+1$ readouts is $\mathbf{v}_{n+1} = T^{n+1}\mathbf{v}_0 = T\mathbf{v}_n$. 

Suppose that after the $n$-th readout, the wanted state is $\ket{g}$. Then, if there is no $X$-gate between the $n$-th and $(n+1)$-st readout, the vector evolves according to the transition matrix:
\begin{equation}
    T_0 = 
    \begin{bmatrix}
        1- p_g - \sum_i^k L_{g, i} & p_e & S_{g, 0} & \cdots\\
        p_g & 1 - p_e - \sum_i^k L_{e, i} & S_{e, 0} & \cdots\\
        L_{g, 0} & L_{e, 0} & \ddots & \\
        \vdots & \vdots & & \ddots
    \end{bmatrix}.
\end{equation}
The unspecified matrix elements are transition probabilities from one leakage state to another, specifically $T_0^{i+2,j+2}$ is the probability of transition from leakage state $i$ to leakage state $j$. They are positive real numbers such that each column sums to $1$.

If there is a flip, the basis of the vector $\mathbf{v}$ is redefined such that its components stay the same after the flip. The basis of $T_0$ is also transformed, which amounts to permuting the first two rows and the first two columns, resulting in the transition matrix:
\begin{equation}
    T_1 = 
    \begin{bmatrix}
        1- p_e - \sum_i^k L_{e, i} & p_g & S_{e, 0} & \cdots\\
        p_e & 1 - p_g - \sum_i^k L_{g, i} & S_{g, 0} & \cdots\\
        L_{e, 0} & L_{g, 0} & \ddots & \\
        \vdots & \vdots & & \ddots
    \end{bmatrix}.
\end{equation}
The transition matrix reflects probabilities of certain transitions occurring, which depends on the presence of the X-gate used to flip the qubit state. Accounting for the $X$-gate being present before the readout with probability of $\frac{1}{2}$, and using the principle of total probability we may write $T^{i,j} = \frac12 T_0^{i,j} + \frac12 T_1^{i,j}$ for each matrix element, leading to the effective transition matrix:
\begin{equation}
    T = \frac12 (T_0 + T_1) = 
    \begin{bmatrix}
        1- p - L & p & S_0 & \cdots \\
        p & 1 - p - L & S_0 & \cdots\\
        L_0 & L_0 & \ddots & \\
        \vdots & \vdots & & \ddots
    \end{bmatrix},
\end{equation}
where $p = \frac12 (p_g + p_e), L = \frac12 (\sum_i^k L_{g, i} + \sum_i^k L_{e, i})$ are average state switch and leakage probabilities, $L_i = \frac12 \left(L_{g,i} + L_{e,i} \right)$ are average leakage rates from computational subspace to leakage state $i$ and $S_i = \frac12 \left(S_{g,i} + S_{e,i} \right)$ are average seepage rates from leakage state $i$ to the computational subspace. The first two columns are identical after the second row, so are the first two rows after the second column. The QNDness defined earlier can be expressed in terms of these quantities: $\mathcal{Q} = 1 - p - L$. The derivation for the case when the wanted state is $\ket{e}$, not $\ket{g}$, follows analogously. Thus, again following the total probability, we can write $\mathbf{v}_{n+1} = T \mathbf{v}_n$, which by mathematical induction implies that $\mathbf{v}_n = T^n \mathbf{v}_0$ for every $n \in \mathbb{N}$.

In addition to a left eigenvector $\mathbf{u}_0 = [1, 1, 1, \cdots]$ with eigenvalue $1$, shared by all stochastic matrices, the matrix $T$ always has a left eigenvector $\mathbf{u}_1 = [1, -1, 0, \cdots]$ with eigenvalue $1 - 2p - L$. This can be verified by direct calculation, and is a consequence of the first two rows being identical after the second column. Additionally, $\mathbf{v}_{-} = \mathbf{u}_1^T$ is a right eigenvector with the same eigenvalue, because the first two columns are identical after the second row. 

Consider now the emission probability vector $\mathbf{b}_g$ corresponding to the global correlation. Assuming the wanted state after $n$ readouts is $\ket{g}$, the first component of the vector $\mathbf{v}_n$ contributes $1-e_1$, second contributes $e_0$, and each $i$-th component afterwards contributes $w_i$. If the wanted state is $\ket{e}$, the contributions change to $1-e_0$, $e_1$ and $1-w_i$. Which state is wanted is random with probability $\frac12$, so the total emission probability vector is 
\begin{equation}
\begin{aligned}
    \mathbf{b}_g &= \left[1 - \frac12(e_0 + e_1), \frac12(e_0 + e_1), \frac12 (w_0 + 1 - w_0), \cdots \right] \\ &= \left[1 - e, e, \frac12, \cdots\right] \\ &=\left(\frac12 - e\right) \mathbf{u}_1 + \left[\frac12, \frac12, \cdots \right] =\left(\frac12 - e\right) \mathbf{u}_1 + \frac12 \mathbf{u}_0,
\end{aligned}
\end{equation}
where $e = \frac12(e_0 + e_1)$. The expected value of the global correlation is then given by
\begin{equation}\label{eq:global_corr_expectation}
\begin{aligned}
    \left<\mathcal{C}_n^\mathrm{global} \right> &= \mathbf{b}_g T^n \mathbf{v}_0 \\
    &= \left(\frac12 - e\right) \mathbf{u}_1 T^n \mathbf{v}_0  + \frac12 \mathbf{u}_0 T^n \mathbf{v}_0 \\
    &= \left(\frac12 - e\right) (1 -2p -L)^n \mathbf{u}_1 \cdot \mathbf{v}_0 + \frac12 \mathbf{u}_0 \cdot \mathbf{v}_0 \\
    &= \left(\frac12 - e\right) (1 - 2h) (1 -2p -L)^n + \frac12,
\end{aligned}
\end{equation}
where the third equality follows because $\mathbf{u}_0$ and $\mathbf{u}_1$ are left eigenvectors of $T$. The global correlation is therefore expected to follow a single exponential decay with offset $\frac12$, regardless of the number of leakage states.

Note that this model heavily relies on proper randomization of the sequences of flips, and it is crucial to ensure the set of those sequences satisfies the required properties.

\subsection{Derivation of the local correlation model}
The local correlation model uses the same transition matrix $T$, but a different emission probability vector -- $\mathbf{b}_l$. To derive it, we will consider a sum of quantities
\begin{equation}
    P(r_{n-1} \oplus r_n \oplus i_{n+1} = 0, s_{n+1} | s_n) P(s_n)
\end{equation}
spanning all possible cases, where $s_n$ is the real qubit state after the $n$-th measurement. Note that only $P(s_n)$ depends on $n$, according to the assumption of Markovianity. 

Specifically, $P(s_n=g) = P(s_n=e) = \frac12 (v_n^0 + v_n^1)$, because the states $\ket{g}$ or $\ket{e}$ may be wanted or unwanted with probability $\frac12$, while $P(s_n = l_i) = v_n^{i+2}$. Consequently, the two first components of the emission probability vector $\mathbf{b}_l$ have to be equal, because we must have
\begin{equation}
    b_l^0 v_n^0 + b_l^1 v_n^1 = A_0 \frac12(v_n^0 + v_n^1) + A_1 \frac12(v_n^0 + v_n^1),
\end{equation}
where $A_0$ and $A_1$ are independent of $n$. Unfortunately, their values depend in general on all the introduced transition and assignment probabilities, because the presence of the flip does affect the transition probability and we may no longer rely on the average leakage rates -- we need to use the matrix $T_0$ directly. We may however write
\begin{equation}
\begin{aligned}
    A &= \frac12 (A_1 + A_0) \\ &= \frac12 \sum_{s_n \in \{g, e\}}\sum_{ s_{n+1}} P(r_{n-1} \oplus r_n \oplus i_{n+1} = 0, s_{n+1} | s_n),
\end{aligned}
\end{equation}
where $A$ in general depends on all other model parameters.

 We will now show that each leakage state contributes half its population to the local correlation. Consider an $i$-th leakage state $l_i$. The transition probability to another state is given by $T_0^{i+2, j}$. The presence of the flip does not change the state, so it does not affect the transition probability. Without the flip, the bit assignments for $(r_{n-1}, r_n)$ as $00$ and $11$ contribute to the local correlation, but with the flip, the assignments $10$ and $01$ contribute to it. These are all possible assignments, and the probability of obtaining them is independent of the presence of the flip, thus for leakage states
\begin{equation}
\begin{aligned}
    P(r_{n-1} \oplus r_n \oplus i_{n+1} = 0, s_{n+1} = j | s_n = l_i) =\\ = \frac12 P(s_{n+1} = j | s_n = l_i)  \\= \frac12 T_0^{i+2, j},
\end{aligned}
\end{equation}
where the factor of $\frac12$ is the probability of the flip. The total expected contribution factor to the local correlation if $s_n=l_i$ is:
\begin{equation}
    \sum_j^{k+2} \frac12 T_0^{i+2, j} = \frac12,
\end{equation}
because columns of $T_0$ sum to 1.

We may finally define the emission probability vector for local correlation 
\begin{equation}
    \mathbf{b}_l  = [A, A, \frac12, \cdots]  = (A - \frac12) [1, 1, 0, \cdots] + \frac12\mathbf{u}_0.
\end{equation}
As shown before, $\mathbf{v}_{-} = [1, -1, 0, \cdots]^T$ is a right eigenvector of $T$ with eigenvalue $(1-2p-L)$. Define additionally $\mathbf{v}_{+} = [1, 1, 0, \cdots]^T$, which allows us to write $\mathbf{v}_0 = \frac12 \mathbf{v}_{+} + (\frac12 - h) \mathbf{v}_{-}$. Thus we obtain for the expectation value of the local correlation:
\begin{equation}\label{eq:local_expectation_general}
\begin{aligned}
    \left<\mathcal{C}_n^\mathrm{local} \right> &=  \mathbf{b}_l T^n \mathbf{v}_0 \\
    &= \frac12 \mathbf{b}_l T^n \mathbf{v}_{+} + \left(\frac12 - h\right)\mathbf{b}_l T^n \mathbf{v}_{-} \\
    &= \frac12 \mathbf{b}_l T^n \mathbf{v}_{+} + \left(\frac12 - h\right) (1-2p-L)^n \mathbf{b}_l \cdot \mathbf{v}_{-} \\
    &= \frac12 \mathbf{b}_l T^n \mathbf{v}_{+},
\end{aligned}
\end{equation}

because the vector $\mathbf{b}_l$ is orthogonal to $\mathbf{v}_{-} = \mathbf{u}_1^T$ associated with eigenvalue $(1-2p-L)$. Thus we see that the state preparation error does not affect the local correlation. However, $\mathbf{b}_l$ is not guaranteed to be orthogonal to any other eigenvector. In almost all cases, the matrix $T$ will be diagonalizable and have $k$ more eigenvectors and eigenvalues, each of which may be dependent on any remaining leakage and seepage rates, including the transition probabilities from one leakage state to another. Thus, we expect the local correlation to be a decaying function of $n$, with $k$ exponential contributions. 

However, typically only one leakage state is relevant. In such a case, the matrix $T$ becomes:
\begin{equation}\label{eq:transition_one_leakage}
    T = 
    \begin{bmatrix}
        1- p - L & p & \frac{S}2  \\
        p & 1 - p - L & \frac{S}2 \\
        L & L & 1 - S \\
    \end{bmatrix},
\end{equation}
where $S$ is now total seepage rate; and gains a left eigenvector $\mathbf{u}_2 = [-\frac{L}{S}, -\frac{L}{S}, 1]$ with eigenvalue $1 - L - S$. Following Eq.~\eqref{eq:local_expectation_general} we obtain

\begin{equation}\label{eq:local_expectation_one_state}
\begin{aligned}
    \left<\mathcal{C}_m^\mathrm{local} \right> = \frac{L(A-\frac12)(1-L -S)^n + AS + \frac12L}{L + S},
\end{aligned}
\end{equation}
thus reconstructing the model from Ref.~\cite{hazra2025benchmarking}. We may then estimate the leakage rate $L$ and the seepage rate $S$ by fitting the amplitude, offset and decay rate of the exponentially decaying data. 


\subsection{Estimating the QNDness}

In case of one relevant leakage state, the local correlation allows measurement of the leakage rate. This rate can then be used to calculate the QNDness $\mathcal{Q}$ by combining the result with the global correlation decay constant estimated using Eq.~\eqref{eq:global_corr_expectation}. 

In case of more than one relevant leakage state, the total leakage rate from the computational subspace states may no longer be estimated as accurately without additional assumptions specifying certain transition rates to be $0$, because the local correlation is a sum of multiple exponential decays, whose rates are eigenvalues of $T$ with generally complicated analytical expressions. In other words, the model is no longer fully identifiable~\cite{catchpole97redundancy} -- a model with $k$ leakage states can have up to $k^2 + k + 1$ parameters, including $A$ and all the independent rates included in the matrix $T$. It then generates data following a sum of $k$ exponential decays, which can be fitted with $2k+1$ parameters: a global offset, and one scaling factor and decay rate per included exponential contribution. Thus, the model becomes non-identifiable once $k^2 + k + 1 > 2k + 1$ or already for $k \ge 2$. It is unknown to us whether the total leakage rate from the computational subspace is an identifiable parameter without additional assumptions.

In such a case, one may bound the QNDness $\mathcal{Q}$ using just the global correlation rate $2p + L$:
\begin{equation}
\begin{aligned}
    1 - 2p - L &< \mathcal{Q} < 1 - p - \frac{L}2, \\
\end{aligned}
\end{equation}
where the lower bound may be taken if leakage is the limiting factor.

\subsection{Imperfect single qubit gates}
For the purpose of the above derivation, we have assumed perfect single qubit gates. Given low errors demonstrated in the Section~\ref{subsec:single_qubit_gates} of the main text, this is a reasonable assumption, and here we demonstrate how to further test its validity. 
Suppose our single qubit $X$ rotation is not perfect and the entirety of its infidelity is due to a depolarizing error $\lambda$. In such a case we would have $\lambda = \epsilon_{X} \approx 2 \epsilon_{\sqrt{X}}$ because our $X$ gates are implemented using two $\sqrt{X}$ gates.

Since during the RILB experiment any $X$ rotation is immediately followed by a $Z$ basis measurement, we can assume its action is effectively given by a classical stochastic matrix $E_{\lambda}$:
\begin{equation}
E_{\lambda} = 
    \begin{bmatrix}
        1- \lambda & \lambda & 0 & \cdots \\
        \lambda & 1 - \lambda & 0 & \cdots\\
        0 & 0 & 1 & \\
        \vdots & \vdots & & \ddots
    \end{bmatrix}.
\end{equation}
This matrix is applied to the matrix $T$ when the $X$ rotation is present, which occurs with probability $\frac12$. The wanted state is also $\ket{g}$ or $\ket{e}$ with probability $\frac12$, and thus the new effective transition matrix becomes:
\begin{equation}
    T' = \frac14 ( T_0 +  T_1 E_{\lambda} + T_0 E_{\lambda} + T_1) =  T \left(\frac12 I + \frac12 E_{\lambda} \right),
\end{equation}
which is a matrix structurally identical to $T$, with the substitution
\begin{equation}
    p \rightarrow p + \frac{\lambda}{2} - \frac{L \lambda}{2} - p \lambda.
\end{equation}
Thus, by assuming perfect $X$ rotations we may overestimate the real $p$ -- and so underestimate the QNDness -- by an additive factor up to $\frac{\lambda}{2} \approx \epsilon_{\sqrt{X}}$, while the measured estimate for the leakage rate is unaffected.

Note that the models for global and local correlations rely on symmetry introduced by random flips. If the single qubit gate's imperfections are of a different nature, it might break that symmetry, violating the model.

We include this correction in the reported uncertainty in the QNDness. 
\subsection{QND-optimized readout vs shelved readout}

Fig.~\ref{fig:fig_sm_10}(b) shows experimental data from the RILB experiment using a QND-optimized readout configuration. This readout consists of a single rectangular pulse with a duration of \SI{240}{\nano\second}. We extract the local and global correlations for both $\mathrm{Q}_1$ and $\mathrm{Q}_2$ as described in Appendix~\ref{subsec:rilb_experiment}, and fit the data using the models from Eqs.~\eqref{eq:global_corr_expectation} and \eqref{eq:local_expectation_one_state}. Then, we calculate the QNDness $\mathcal{Q}$ by combining the results. The extracted parameters are as follows:
\begin{center}
\begin{tabular}{lcc}
\toprule
 & $\mathrm{Q}_1$ & $\mathrm{Q}_2$ \\
\midrule
$1 - \mathcal{Q}$ & \SI{6.7(0.2)e-3}{} & \SI{5.9(0.2)e-3}{} \\
$L$ & \SI{2.1(0.2)e-5}{} & \SI{6.6(0.3)e-4}{} \\
\bottomrule
\end{tabular}
\end{center}

The large relative uncertainties in the extracted leakage ($L$) rates are due to the flatness of the local correlation curves, indicating negligible leakage under these conditions.

In contrast, Fig.~\ref{fig:fig_sm_10}(c) shows RILB data obtained using a shelved readout scheme, where the qubit is excited to the second excited state before readout (parameters as used in the main text). The second $X_\pi^{12}$ after the readout pulse ideally brings the $\ket{f}$ state back the $\ket{e}$. Due to larger overall leakage, for local correlation we select either the single exponential model from Eq.~\eqref{eq:local_expectation_one_state} or a phenomenological two-exponential model by minimizing the Bayesian Information Criterion (BIC)~\cite{ding2018modelselection}.

In case of both qubits, the single exponential model explains the data better, implying the presence of one dominating leakage state. The extracted parameters are:
\begin{center}
\begin{tabular}{lcc}
\toprule
 & $\mathrm{Q}_1$ & $\mathrm{Q}_2$ \\
\midrule
$1 -\mathcal{Q}$ & \SI{1.06(0.02)e-2}{} & \SI{1.11(0.02)e-2}{} \\
$L$ & \SI{1.005(0.006)e-2}{} &  \SI{1.05(0.005)e-2}{} \\
\bottomrule
\end{tabular}
\end{center}

Comparing the two configurations highlights the main drawback of shelved readout: a roughly two orders of magnitude increase in the leakage rate. This degradation is attributed to the decay of the $\ket{f}$ state to $\ket{e}$ during the readout pulse, followed by the subsequent $X_\pi^{12}$ pulse re-exciting $\ket{e}$ to $\ket{f}$, resulting in persistent population in the $\ket{f}$ state after measurement. On the other hand, the switching rate $p$ does not contribute to QNDness, implying that the shelved readout is working as designed by trading the higher leakage for the substantial decrease in the switching rate $p$.

The comparison also yields an important insight for the RILB method: even though the global correlation exponential decays in Fig.~\ref{fig:fig_sm_10}(b) and (c) are similar, the QNDness of the shelved readout is almost entirely limited by leakage, whereas the QNDness of the QND optimized readout is almost entirely limited by the switching rate $p$ between the two computational states. As a result, the QND error differs by a factor of $2$. Ultimately, both the switching rate and leakage contribute to QNDness and are important metrics, and we demonstrate that it is possible to suppress one in favour of the other. The final choice of which readout is better depends on the application; for example for Quantum Error Correction, reducing leakage is typically much more important, because the leakage error cannot be corrected by the codes without the use of additional resources such as the Leakage Reduction Unit~\cite{marques2023LRU}.


\end{document}